\documentclass[10pt,oneside]{article}

\usepackage{lmodern}
\usepackage{times}

\usepackage{amssymb,amsmath,amsthm}
\usepackage{bbold}
\usepackage[short]{optidef}

\usepackage{framed,multirow,multicol}

\usepackage{xcolor,xspace}
\usepackage[normalem]{ulem}
\usepackage{enumerate}
\usepackage{verbatim}
\usepackage{makeidx,latexsym}

\usepackage[round]{natbib}

\bibliographystyle{apalike}

\usepackage[english]{babel}
\usepackage{bbm}
\usepackage{pgfplots}

\usepackage{parskip}
\usepackage[letterpaper,margin=1.1in]{geometry}

\usepackage[utf8]{inputenc} %
\usepackage[T1]{fontenc}    %
\usepackage[colorlinks=True,citecolor=blue]{hyperref}       %
\usepackage{url}            %
\usepackage{booktabs}       %
\usepackage{amsfonts}       %
\usepackage{nicefrac}       %
\usepackage{microtype}      %
\usepackage{xcolor}     %
\usepackage{float}
\usepackage{graphicx}
\usepackage{subcaption}

\usepackage{algorithm}
\usepackage{algorithmic}
\usepackage{amsmath} 
\usepackage{amsthm}
\usepackage[capitalize,noabbrev]{cleveref}
\usepackage{bm}

\usepackage{amsmath}

\DeclareMathOperator*{\reward}{Reward}
\DeclareMathOperator*{\resource}{Resource}
\DeclareMathOperator*{\regret}{Regret}

\DeclareMathOperator*{\opt}{OPT}

\DeclareMathOperator*{\var}{Var}
\DeclareMathOperator*{\argmax}{argmax}

\newcommand{\cS}{\mathcal{S}}
\newcommand{\cA}{\mathcal{A}}

\newcommand{\cP}{\mathcal{P}}
\newcommand{\cF}{\mathcal{F}}
\newcommand{\cG}{\mathcal{G}}
\newcommand{\cH}{\mathcal{H}}
\newcommand{\tsh}{(s,h)}
\newcommand{\tshn}{(s',h+1)}

\newtheorem{theorem}{Theorem}
\newtheorem{lemma}{Lemma}[section]

\usepackage{bigfoot}

\DeclareNewFootnote{AAffil}[arabic]
\DeclareNewFootnote{ANote}[fnsymbol]

\usepackage{etoolbox}
\makeatletter
\patchcmd\maketitle{\def\@makefnmark{\rlap{\@textsuperscript{\normalfont\@thefnmark}}}}{}{}{}
\makeatother

\makeatletter
\def\thanksAAffil#1{%
  \footnotemarkAAffil\protected@xdef\@thanks{\@thanks%
        \protect\footnotetextAAffil[\the \c@footnoteAAffil]{#1}}%
}
\def\thanksANote#1{%
  \footnotemarkANote%
  \protected@xdef\@thanks{\@thanks%
        \protect\footnotetextANote[\the \c@footnoteANote]{#1}}%
}
\makeatother

\title{Online Resource Allocation in Episodic Markov Decision Processes}

\author{
	Duksang Lee
	\thanksAAffil{Department of Industrial and Systems Engineering, KAIST, Daejeon 34141, Republic of Korea}
	\and
 William Overman
	\thanksAAffil{Graduate School of Business, Stanford University, Stanford, CA 94305, United States}
 \and
	Dabeen Lee
	\FootnotemarkAAffil{1}$~$
    \thanksANote{Correspondence to <\url{dabeenl@kaist.ac.kr}>}
}	

\date{\today}

\begin{document}

\maketitle

\begin{abstract}
This paper studies a long-term resource allocation problem over multiple periods where each period requires a multi-stage decision-making process. We formulate the problem as an online allocation problem in an episodic finite-horizon constrained Markov decision process with an unknown non-stationary transition function and stochastic non-stationary reward and resource consumption functions. We propose the observe-then-decide regime and improve the existing decide-then-observe regime, while the two settings differ in how the observations and feedback about the reward and resource consumption functions are given to the decision-maker. We develop an online dual mirror descent algorithm that achieves near-optimal regret bounds for both settings. For the observe-then-decide regime, we prove that the expected regret against the dynamic clairvoyant optimal policy is bounded by $\tilde O(\rho^{-1}{H^{3/2}}S\sqrt{AT})$ where $\rho\in(0,1)$ is the budget parameter, $H$ is the length of the horizon, $S$ and $A$ are the numbers of states and actions, and $T$ is the number of episodes. For the decide-then-observe regime, we show that the regret against the static optimal policy that has access to the mean reward and mean resource consumption functions is bounded by $\tilde O(\rho^{-1}{H^{3/2}}S\sqrt{AT})$ with high probability. We test the numerical efficiency of our method for a variant of the resource-constrained inventory management problem.
\end{abstract}

\newpage
\tableofcontents
\newpage

\section{INTRODUCTION}

We consider a long-term online resource allocation problem where requests for service arrive sequentially over episodes and the decision-maker chooses an action that generates a reward and consumes a certain amount of resources for each request. Such resource allocation problems arise in revenue management~\citep{chen2021primaldual,cyclic-demand} and online advertising~\citep{balseiro2022}. Hotels and airlines receive requests for a room or a flight, and they decide how to process them in real-time based on their availability of remaining rooms and flight seats~\citep{Talluri04}. For search engines, when a user arrives with a keyword, they collect bids from relevant advertisers and decide which ad to show to the user~\citep{adwords-mehta}. The decision-maker is informed of or receives stochastic feedback about the reward and resource consumption functions of arriving requests, while the decision-maker makes actions in an online fashion with no knowledge of future requests. 

The online resource allocation problem has been studied in the context of or under the name of the AdWords problem~\citep{devanur-hayes,buchbinder2007}, bandits with knapsacks~\citep{bwk1,bwk2,bwk3},  repeated auctions with budgets~\citep{balseiro-gur}, online stochastic matching~\citep{karp-bipartite,feldman-matching}, online linear programming~\citep{Glover-lp,agrawal2014,gupta2016}, assortment optimization with limited inventories~\citep{Golrezaei2014}, online convex programming~\citep{agrawal-devanur2014}, online binary programming~\citep{Li2020}, online stochastic optimization~\citep{jiang2022}, online resource allocation with concave rewards~\citep{balseiro2020} and nonlinear rewards~\citep{balseiro2022}. %

Although the above literature assumes that the decision-maker makes a \emph{single} action for a request, many of the modern service systems allow \emph{multi-stage} decision-making processes and interactions with customers based on user feedback. For example, medical processes~\citep{medical-mdp} involve sequential decision-making while the prices and costs of medical operations are often predetermined. Multi-stage second-price repeated auctions~\citep{gummadi-auction} consider a bidder who is willing to participate in auctions multiple times until winning an item. Modern recommender systems feature continued interactions with customers~\citep{ad-interact,recommender}. For these applications, actions taken in multiple stages are not necessarily independent, and therefore, it is natural to group a multi-stage decision-making process as an episode. Then this brings about online resource allocation problems where an episode itself involves a sequential decision-making process. Therefore, to model these scenarios, we need a framework to capture multi-stage actions for requests and interactions between service systems and customers.

Motivated by this, we extend the existing framework to consider online resource allocation over an episodic Markov decision process (MDP), generalizing a single action for an episode to multi-stage actions. More generally, we formulate the problem as an episodic finite-horizon constrained MDP (CMDP)~\citep{Altman,efroni2020} whose goal is to maximize the cumulative reward by allocating a long-term resource budget over episodes. Basically, the decision-maker prepares a policy for an episode, after running which the decision-maker observes the cumulative reward and resource consumption over the episode. There is a long-term budget for the total resource consumption for all episodes, so the decision-maker can keep track of the remaining budget but cannot observe the reward and resource consumption functions of future episodes. Therefore, the problem is to prepare policies based on past observations and feedback about the reward and resource consumption functions, the remaining resource budget, and the estimation of the unknown transition kernel. The main challenge here is to deal with uncertainties in not only the reward and resource consumption functions of future episodes but also the unknown transition function of the underlying MDP.

We consider two settings, the \emph{observe-then-decide} regime and the \emph{decide-then-observe} regime, which differ in how the observations and feedback about the random reward and resource consumption functions are given to the decision-maker. The first setting is related to the contextual MDP literature \citep{hallak2015contextual,modi-contextual,contextual-mdp-offline-regression,contextual-mdp-online-regression} and assumes that the sampled reward and resource consumption functions of each episode are revealed at the beginning of the episode. Here, we define and study the regret against the \emph{dynamic clairvoyant optimal policy}. The second setting considers essentially the stochastic finite-horizon episodic constrained MDP~\citep{efroni2020,NEURIPS2020_ae95296e,liu2021learning,bura2022dope,pmlr-v162-chen22i, pmlr-v151-wei22a,pmlr-v206-wei23b} and assumes that the random reward and resource consumption functions for an episode can be observed only after the episode.

\paragraph{Our Contributions}

This paper develops and studies a formulation for the long-term sequential allocation problem for multi-step decision processes by providing an integrated view of the online resource allocation problem and the constrained Markov decision process. %

We propose what we call the observe-then-decide regime for finite-horizon episodic CMDPs, and we define and study the regret against the dynamic optimal policy that has access to the reward and resource consumption functions of all episodes and the transition kernel. We prove if the reward and resource consumption functions are i.i.d. over episodes, then our online dual mirror descent algorithm guarantees that the expected dynamic regret is bounded above by $O(\rho^{-1}{H^{3/2}}S\sqrt{AT}\left(\ln HSAT\right)^2)$ where $\rho\in(0,1)$ is the budget parameter, $H$ is the length of the horizon for each episode, $S$ is the number of states, $A$ is the number of actions, and $T$ is the number of episodes. Our algorithm is designed to stop before violating the long-term resource budget constraint.

For the decide-then-observe regime under both the full information and the bandit feedback settings, we show that the online dual mirror descent algorithm achieves $O(\rho^{-1}{H^{3/2}}S\sqrt{AT}\left(\ln (HSAT/\delta)\right)^2)$ regret with probability at least $1-\delta$ against the static optimal policy that has access to the mean reward and mean resource consumption functions and the transition kernel. This result does not require Slater's condition or the existence of a strictly safe policy in contrast to and improves upon~\cite{efroni2020,liu2021learning,pmlr-v151-wei22a} whose regret bounds have a suboptimal dependence on $H$ for the case of long-term budget constraints.

Our regret bounds for the observe-then-decide and decide-then-observe regimes are nearly tight as there is a lower bound of $\Omega(H^{3/2}\sqrt{SAT})$ due to~\cite{pmlr-v132-domingues21a} based on an instance with a deterministic reward function and no resource budget constraint.

Lastly, we test the numerical performance of our online dual mirror descent method for a variant of the online resource-constrained inventory management problem.

\section{PROBLEM SETTING}\label{sec:setting}

\paragraph{Finite-horizon Episodic CMDP}

We model the online resource allocation problem with a \emph{finite-horizon episodic MDP}. A finite-horizon  MDP is defined by a tuple $(\cS,\cA, H, \left\{P_h\right\}_{h=1}^{H-1},p)$ where $\cS$ is the finite state space with $|\cS|=S$, $\cA$ is the finite action space with $|\cA|=A$, $H$ is the finite horizon, $P_h:\cS\times \cA\times \cS\to [0,1]$ is the transition kernel at step $h\in[H]$, and $p$ is the initial distribution of the states. Here, $P_h(s'\mid s,a)$ is the probability of transitioning to state $s'$ from state $s$ when the chosen action is $a$ at step $h\in[H-1]$. Equivalently, we may define a single \emph{non-stationary} transition kernel $P:\cS\times \cA\times\cS\times [H]\to[0,1]$ with $P(s'\mid s,a,h)=P_h(s'\mid s,a)$  and $P(s'\mid s,a,H)=p(s')$ for $(s,a,s',h)\in\cS\times \cA\times \cS\times[H-1]$. We assume that $\{P_h\}_{h=1}^{H-1}$ and thus $P$ are \emph{unknown}. 

Before an episode begins, the decision-maker prepares a \emph{stochastic policy} $\pi:\cS\times [H]\times \cA\to[0,1]$ where $\pi(a\mid s,h)$ is the probability of selecting action $a\in \cA$ in state $s\in \cS$ at step $h$. Here, $\pi$ can be viewed as a \emph{non-stationary policy} as it may change over the horizon, and this is due to the non-stationarity of the transition kernel $P$ over steps $h\in[H]$. Given a policy $\pi_t$ for episode $t\in[T]$, the MDP proceeds with trajectory $\{s_h^{P,\pi_t}, a_h^{P,\pi_t}\}_{h\in[H]}$ generated by $P$.

The reward and resource consumption functions of episode $t\in[T]$ is given by $f_t,g_t:\cS\times \cA\times [H]\to[0,1]$, i.e., choosing action $a\in \cA$ at state $s\in \cS$ at step $h$ generates a reward $f_t(s,a,h)$ and consumes resources of amount $g_t(s,a,h)$. Here, functions $f_t$ and $g_t$ are non-stationary over $h\in[H]$. Throughout the paper, we assume that $(f_t,g_t)$ for each episode $t\in[T]$ is an i.i.d. sample from an \emph{unknown} distribution $\mathcal{D}$ with mean $(f,g)$ where $f,g:\cS\times \cA\times [H]\to[0,1]$.

The budget for the total resource consumption over $T$ episodes is given by $TH\rho$ for some $\rho\in(0,1)$. Hence, the goal is to find policies $\pi_1,\ldots, \pi_T$ for the $T$ episodes to maximize the total cumulative reward, given by
$$\reward\left(\vec\gamma,\vec\pi\right):=\sum_{t=1}^{T}\sum_{h=1}^{H}f_{t}\left(s_{h}^{P,\pi_t},a_{h}^{P,\pi_t},h\right)$$
while satisfying the resource budget constraint
$$
\resource\left(\vec\gamma,\vec\pi\right):=\sum_{t=1}^{T}\sum_{h=1}^{H}g_{t}\left(s_{h}^{P,\pi_t},a_{h}^{P,\pi_t},h\right)\leq TH\rho.$$
Here, we use short-hand notations $\vec{\gamma}=(\gamma_1,\ldots, \gamma_T)$ where $\gamma_t=(f_t,g_t)$ and $\vec{\pi}=(\pi_1,\ldots,\pi_T)$. The decision-maker selects policies $\pi_1,\ldots, \pi_T$ in an \emph{online} fashion because the decision-maker is oblivious to the reward and resource consumption functions of \emph{future} episodes as well as the true transition kernel $P$.

\paragraph{Observe-then-decide Regime} The first setting we consider assumes that the decision-maker may observe $\gamma_t=(f_t,g_t)$ before each episode $t$ begins and may adapt to $\gamma_t$ as well as the history $(\gamma_1,\ldots,\gamma_{t-1})$. The performance of the decision-maker can be compared to the best possible performance achievable when $\vec\gamma$ and $P$ are all available in advance, which is given by
\begin{align*}
\begin{aligned}
\vec \pi^* \quad\in \quad  \argmax_{\vec\pi=(\pi_{1},\ldots,\pi_{T})} \quad \mathbb{E}\left[\reward\left(\vec\gamma,\vec\pi\right)\mid \vec\gamma,\vec\pi, P\right] \quad
\text{s.t.}  \quad  \mathbb{E}\left[\resource\left(\vec\gamma,\vec\pi\right)\mid \vec\gamma,\vec\pi, P\right]\leq TH\rho.
\end{aligned}
\end{align*}
This setting is related to the contextual MDP framework~\citep{hallak2015contextual,modi-contextual,contextual-mdp-offline-regression,contextual-mdp-online-regression} in that $\gamma_t$ itself serves as the context for episode $t$. The distinction from these works is that the distribution $\mathcal{D}$ of reward and resource consumption functions may include infinitely many functions with arbitrary structures. 

To measure the performance of a learning algorithm that produces policies $\pi_1,\ldots, \pi_T$, we consider 
\begin{align*}
\regret\left(\vec\gamma,\vec\pi\right)&=\opt(\vec\gamma)-\reward\left(\vec\gamma,\vec\pi\right)
 \end{align*}
 where $\opt(\vec \gamma) = \reward(\vec\gamma, \vec\pi^*)$ is the cumulative reward under the optimal policies $\vec\pi^*=(\pi_1^*,\ldots, \pi_T^*)$.
 The observe-then-decide regime covers applications in online advertising~\citep{devanur-hayes,buchbinder2007,ad-interact} and recommender systems with interactions~\citep{recommender} and extends the online resource allocation framework without Markovian transitions~\citep{balseiro2022}.

\paragraph{Decide-then-observe Regime} The second setting is that the decision-maker prepares a policy $\pi_t$ based on the history $(\gamma_1,\ldots,\gamma_{t-1})$ and then observes the associated reward and resource consumption functions. For the full-information setting, we observe $f_t(s,a,h)$ and $g_t(s,a,h)$ for every $(s,a,h)\in \cS\times \cA\times[H]$. For the bandit feedback setting, we observe $f_t(s_h,a_h,h)$ and $g_t(s_h,a_h,h)$ for only the chosen action $a_h$ at state $s_h$ at each step $h\in[H]$. Here, the performance is evaluated against a single optimal policy applied to all episodes. Let $\pi^*$ be a policy defined as
\begin{equation*}
\begin{aligned}
\pi^* \quad \in \quad\argmax_{\pi}\quad  \mathbb{E}\left[\sum_{h=1}^{H}f\left(s_{h}^{P,\pi},a_{h}^{P,\pi},h\right)\mid f,\pi, P\right] \quad
\text{s.t.}  \quad \mathbb{E}\left[\sum_{h=1}^{H}g\left(s_{h}^{P,\pi},a_{h}^{P,\pi},h\right)\mid g,\pi, P\right]\leq H\rho.
\end{aligned}
\end{equation*}
Then we define the regret as
\begin{align*}
\regret\left(\vec\gamma,\vec\pi\right)=%
\opt(\vec\gamma)-\reward\left(\vec\gamma,\vec \pi\right).
 \end{align*}
 where $\opt(\vec\gamma)=\sum_{t=1}^{T}\sum_{h=1}^{H}f_t(s_{h}^{P,\pi^*},a_{h}^{P,\pi^*},h)$ is the cumulative reward under $\pi^*$ applied to all episodes.
 The full information setting can model inventory management with observable demands~\citep{chen2021primaldual,cyclic-demand}, and the bandit case is what is standard in the stochastic episodic finite-horizon CMDP literature~\citep{efroni2020,liu2021learning,bura2022dope,Bai_Bedi_Agarwal_Koppel_Aggarwal_2022,Bai_Singh_Bedi_Aggarwal_2023,9867805,pmlr-v151-wei22a,pmlr-v206-wei23b}.

\paragraph{Zero Long-term Constraint Violation}

 We do not allow violating the resource consumption constraint. Hence, an algorithm needs to stop if the remaining budget is not enough to continue the process. To better present our analysis, we assume that even after stopping the process, we take action $a_{\text{stop}}$ with $f_t(s,a_{\text{stop}},h)=g_t(s,a_{\text{stop}},h)=0$ for $(s,h,t)\in\cS\times [H]\times[T]$ as if we are still running our algorithm.

\section{FORMULATION}\label{sec:reformulation}

\subsection{Occupancy Measures}\label{sec:occupancy}

Our framework for the long-term online resource allocation problem over an episodic finite-horizon CMDP is based on reformulations via \emph{occupancy measures}~\citep{Altman,Zimin2013,rosenberg2019,cohen2021}. Given a policy $\pi$ and  a transition kernel $P$, let $\bar q^{P, \pi}:\cS\times\cA\times\cS\times [H]\to[0,1]$ be defined as
\begin{equation}\label{occupancy}
\bar q^{P,\pi}(s,a,s',h)=\mathbb{P}\left[s^{P,\pi}_{h}=s,\ a^{P,\pi}_{h}=a,\ s^{P, \pi}_{h+1}=s'\mid \pi,P\right]
\end{equation}
for $(s,a,s',h)\in \cS\times \cA\times \cS\times [H]$. Note that any $\bar q$ defined as in~\eqref{occupancy} has the following properties. 
\begin{align*}
\sum_{(s,a,s')\in \cS\times \cA\times \cS}\bar q(s,a,s',h)&=1,\quad h\in [H]\tag{C1}\label{item:occu1}  \\
\sum_{(s',a)\in\cS\times \cA}\bar q(s,a,s',h)&=\sum_{(s',a)\in\cS\times\cA}\bar q(s',a,s,h-1),\quad s\in\cS,\ h=2,\ldots, H.\tag{C2}\label{item:occu2}  
\end{align*}
The \emph{occupancy measure} $q^{P,\pi}:\cS\times\cA\times[H]\to[0,1]$ associated with policy $\pi$ and transition kernel $P$ is defined as
\begin{equation}\label{occupancy'}
 q^{P,\pi}(s,a,h)=\sum_{s'\in\cS} \bar q^{P, \pi}(s,a,s',h).\tag{C3}
\end{equation}
Then it follows that
$$q^{P,\pi}(s,a,h)=\mathbb{P}\left[s^{P,\pi}_{h}=s,\ a^{P,\pi}_{h}=a\mid\pi,P\right].$$
Hence, if a policy $\pi$ is chosen, then the occupancy measure for a loop-free MDP with transition kernel $P$ is determined. Conversely, any $q\in\cS\times\cA\times [H]\to[0,1]$ with $\bar q:\cS\times\cA\times\cS\times [H]\to[0,1]$ satisfying~\eqref{item:occu1},~\eqref{item:occu2},~\eqref{occupancy'} induces a transition kernel $P^{q}$ and a policy $\pi^{q}$ given as follows:
\begin{align}\label{induced}
\begin{aligned}
P^{q}(s'\mid s,a,h)=\frac{\bar q(s,a,s',h)}{\sum_{s''\in \cS}\bar q(s,a,s'',h)},\quad
\pi^{q}(a\mid s,h)=\frac{q(s,a,h)}{\sum_{b\in \cA} q(s,b,h)}.
\end{aligned}
\end{align}
\begin{lemma}\label{lemma:valid-occupancy}
Let $q:\cS\times \cA\times[H]\to[0,1]$. Then $q$ is a valid occupancy measure that induces transition kernel $P$ if and only if there exists $\bar q:\cS\times \cA\times\cS\times [H]\to[0,1]$ that satisfies~\eqref{item:occu1},~\eqref{item:occu2},~\eqref{occupancy'}, and $P^{q}=P$.
\end{lemma}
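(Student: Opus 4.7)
The plan is to establish the equivalence by direct verification in both directions, using the Markov factorization of trajectory probabilities.

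For the forward direction ($\Rightarrow$), I would assume $q = q^{P,\pi}$ for some policy $\pi$ and set $\bar q := \bar q^{P,\pi}$ as in~\eqref{occupancy}. Condition~\eqref{item:occu1} holds because $\bar q(\cdot,\cdot,\cdot,h)$ is the joint probability law of the triple $(s_h^{P,\pi},a_h^{P,\pi},s_{h+1}^{P,\pi})$; condition~\eqref{item:occu2} is the standard flow-conservation identity, both sides being equal to the marginal $\mathbb{P}[s_h^{P,\pi}=s\mid \pi,P]$; condition~\eqref{occupancy'} is simply the definition of the marginal over $s'$. Finally, the Markov property yields the key factorization $\bar q(s,a,s',h) = q(s,a,h)\cdot P(s'\mid s,a,h)$, and dividing both sides by $q(s,a,h) = \sum_{s''}\bar q(s,a,s'',h)$ gives $P^{q} = P$.

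For the reverse direction ($\Leftarrow$), given $\bar q$ satisfying the four conditions, I would define $\pi^q$ via~\eqref{induced} and prove by induction on $h$ that
\[
q(s,a,h) \;=\; \mathbb{P}\!\left[s_h^{P,\pi^q}=s,\ a_h^{P,\pi^q}=a \mid \pi^q, P\right].
\]
The inductive step exploits the factorization $\bar q(s,a,s',h) = q(s,a,h)\cdot P(s'\mid s,a,h)$, which follows from $P^q = P$ together with~\eqref{occupancy'}, and combines it with~\eqref{item:occu2} to transport the state marginal from step $h-1$ to step $h$. Multiplying by $\pi^q(a\mid s,h) = q(s,a,h)/\sum_b q(s,b,h)$ then recovers the step-$h$ joint law $q(s,a,h)$.

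The main obstacle is the base case at $h=1$, which requires the step-$1$ state marginal $\sum_a q(s,a,1)$ to coincide with the initial distribution $p(s)$ so that the induced trajectory under $\pi^q$ and $P$ starts correctly. This is handled by invoking $P^q = P$ at step $H$, where by the paper's convention $P(s'\mid s,a,H) = p(s')$: the resulting identity $\bar q(s,a,s',H) = q(s,a,H)\,p(s')$ combined with~\eqref{item:occu1} yields $\sum_{(s,a)}\bar q(s,a,s',H) = p(s')$, which via the boundary reading of~\eqref{item:occu2} between the terminal and initial steps matches $\sum_a q(s',a,1)$. Once this base case is settled, the remaining verifications amount to routine algebraic bookkeeping and the induction closes cleanly.
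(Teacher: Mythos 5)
Your overall route is genuinely different from the paper's: the paper proves the lemma by reduction to a loop-free MDP with $H+1$ layers $\cS\times\{h\}$ and then cites the known characterization of occupancy measures for loop-free MDPs (Lemma 3.1 of Rosenberg and Mansour), whereas you verify both directions by hand, with the reverse direction done by induction on $h$. Your forward direction is correct, and the inductive step of the reverse direction --- using $P^{q}=P$ together with \eqref{occupancy'} to obtain the factorization $\bar q(s,a,s',h)=q(s,a,h)P(s'\mid s,a,h)$ and then \eqref{item:occu2} to transport the state marginal from layer $h-1$ to layer $h$ --- is exactly the content of the flow-conservation characterization and is fine.

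The gap is in the base case, and you have in fact put your finger on a real weak point of the statement. You close the base case by invoking ``the boundary reading of \eqref{item:occu2} between the terminal and initial steps,'' but \eqref{item:occu2} is stated only for $h=2,\ldots,H$ and never relates layer $1$ to layer $H$; no wrap-around condition appears in the hypothesis list. Without one, the conditions \eqref{item:occu1}, \eqref{item:occu2}, \eqref{occupancy'}, and $P^{q}=P$ do not determine the layer-$1$ state marginal at all: for instance, with $H=1$ and a single action one may put all of layer $1$'s mass on one state $s_0$ and set $\bar q(s_0,a,s',1)=p(s')$, which satisfies every listed condition but is not the occupancy measure of any policy when $p$ is not a point mass. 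So your argument, as written, uses a hypothesis that is not available. To be fair, the paper's own proof has the same soft spot: the cited loop-free characterization assumes a single initial state, so the reduction silently needs the additional constraint $\sum_{a\in\cA} q(s,a,1)=p(s)$ (equivalently, a cyclic version of \eqref{item:occu2} linking layer $H$ back to layer $1$, which is what the convention $P(s'\mid s,a,H)=p(s')$ appears to intend) in order to pin down the initial layer. With that constraint made explicit, your base case and hence your induction close correctly.
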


Therefore, there is a one-to-one correspondence between the set of policies and the set of occupancy measures that give rise to transition kernel $P$. Moreover, the cumulative reward for episode $t$ under reward function $f_t$, policy $\pi_t$, and transition kernel $P$ can be written in terms of occupancy measure $q^{P,\pi_t}$ associated with $\pi_t$ and $P$.
\begin{align*}%
\begin{aligned}
&\mathbb{E}\left[\sum_{h=1}^{H}f_{t}\left(s_{h}^{\pi_t}(t),a_{h}^{\pi_t}(t),h\right)\mid \vec\gamma,\vec\pi, P \right]=\sum_{(s,a,h)\in \cS\times\cA\times[H]}q^{P,\pi_t}\left(s,a,h\right)f_{t}\left(s,a,h\right).
\end{aligned}
\end{align*}
We may express occupancy measure $q^{P,\pi_t}$ as an $(S\times A\times H)$-dimensional vector $\bm{q^{P,\pi_t}}$ whose entries are given by $q^{P,\pi_t}\left(s,a,h\right)$ for $(s,a,h)\in\cS\times \cA\times [H]$. Similarly, we define $\bm{\bar q^{P,\pi_t}}$ as the vector whose entries are $\bar q^{P,\pi_t}(s,a,s',h)$ for $(s,a,s',h)\in\cS\times \cA\times\cS\times [H]$. Moreover, we define vector $\bm{f_t}$ whose entry corresponding to $(s,a,h)\in \cS\times\cA\times [H]$ is given by $f_t\left(s,a,h\right)$. Then the right-hand side of the above equation is equal to $\langle \bm{f_t},\bm{q^{P,\pi_t}}\rangle$, the inner product of  $\bm{f_t}$ and $\bm{q^{P,\pi_t}}$. Likewise, we define vector $\bm{g_t}$ to represent the resource consumption function $g_t$. Consequently,
$$\mathbb{E}\left[\reward(\vec\gamma, \vec \pi)\mid \vec\gamma,\vec\pi,P \right]=\sum_{t=1}^T\langle \bm{f_t},\bm{q^{P,\pi_t}}\rangle,\quad\mathbb{E}\left[\resource(\vec\gamma, \vec \pi)\mid \vec\gamma,\vec\pi,P \right]=\sum_{t=1}^T\langle\bm{g_t}, \bm{q^{P,\pi_t}}\rangle.$$
Then the policy optimization problem can be reformulated as
\begin{align}\label{reformulation}
\begin{aligned}
\opt(\vec{\gamma})= \max_{\bm{q_1},\ldots,\bm{q_T}\in \Delta(P)}\quad \sum_{t=1}^T\langle \bm{f_t},\bm{q_t}\rangle\quad \text{s.t.}\quad \sum_{t=1}^T\langle \bm{g_t},\bm{q_t}\rangle\leq TH\rho
\end{aligned}
\end{align}
where $\Delta(P)$ is the set of all valid occupancy measures inducing transition kernel $P$. More precisely, $\Delta(P)$ is defined as
$$\Delta(P)=\left\{\bm{q}\in [0,1]^{S\times A\times H}:\ \exists \bm{\bar q}\in [0,1]^{S\times A\times S\times H}\text{ satisfying}~\eqref{item:occu1},~\eqref{item:occu2},~\eqref{occupancy'},~P^{q}=P\right\}$$
where $P^{\bar q}$ is defined as in~\eqref{induced}. Hence, $\Delta(P)$ is a polytope, and therefore, \eqref{reformulation} corresponds to an online linear programming instance. 

For the decide-then-observe regime, the optimal policy $\pi^*$ satisfies
\begin{equation*}
\pi^*  \in \argmax_{\pi}\quad  \langle \bm{f},\bm{q}\rangle \quad
\text{s.t.} \quad \langle \bm{g},\bm{q}\rangle\leq H\rho.
\end{equation*}

\subsection{Confidence Sets}\label{sec:confidence_sets}

In contrast to the standard online optimization problems, the feasible set $\Delta(P)$ is unknown as we do not have access to the true transition kernel $P$. To remedy this issue, we obtain empirical transition kernels $\bar P_1,\ldots, \bar P_T$ to estimate the true transition kernel $P$, based on which we construct \emph{relaxations} $\Delta(P,1),\ldots, \Delta(P,T)$ of the feasible set $\Delta(P)$ by building \emph{confidence sets} for $P$. %
Another issue with applying the existing online dual algorithms, e.g., the dual mirror descent by~\cite{balseiro2022}, is that the relaxations $\Delta(P,1),\ldots, \Delta(P,T)$ are not i.i.d., but instead we will show that they contain $\Delta(P)$ with high probability, which turns out to be sufficient for our analysis.

To construct confidence sets for estimating the non-stationary transition kernel $P$, we extend the framework of~\cite{Jin2020} developed for the loop-free setting and that of~\cite{ssp-adversarial-unknown} for the stochastic shortest path setting to our finite-horizon non-stationary setting. As~\cite{ssp-adversarial-unknown}, we update the confidence set for each episode $t\in[T]$, in contrast to~\cite{Jin2020} where the confidence set is updated over \emph{epochs} and an epoch may consist of multiple episodes. The distinction is that the transition function is estimated for each  distinct step $h\in[H]$, and this leads to the $O(H^{3/2})$ dependence on $H$ in our regret upper bounds, instead of $O(H)$.

We maintain counters to keep track of the number of visits to each tuple $(s,a,h)$ and tuple $(s,a,s',h)$. For each $t\in[T]$, we define $N_t(s,a,h)$ and $M_t(s,a,s',h)$ as the number of visits to tuple $(s,a,h)$ and the number of visits to tuple $(s,a,s',h)$ up to the first $t-1$ episodes, respectively, for $(s,a,s',h)\in\cS\times\cA\times\cS\times[H]$. Given $N_t(s,a,h)$ and $M_t(s,a,s',h)$, we define the empirical transition kernel $\bar P_t$ for episode $t$ as
\begin{equation}\label{empirical}
\bar{P}_{t}(s'\mid s,a,h)=\frac{M_{t}(s,a,s',h)}{\max\{1,N_{t}(s,a,h)\}}.
\end{equation}
Next, for some confidence parameter $\delta\in(0,1)$, we define the confidence radius $\epsilon_t(s'\mid s,a,h)$  for $(s,a,s',h)\in \cS\times \cA\times\cS\times [H]$ and $t\in[T]$ as 
\begin{align}\label{confidence-radius}
\begin{aligned}
\epsilon_t(s'\mid s,a,h)=2\sqrt{\frac{\bar{P}_{t}(s'\mid s,a,h)\ln\left({{HSAT}/{\delta}}\right)}{\max\{1,N_{t}(s,a,h)-1\}}}+\frac{14\ln\left({{HSAT}/{\delta}}\right)}{3\max\{1,N_{t}(s,a,h)-1\}}.
\end{aligned}
\end{align}
Based on the empirical transition kernel and the radius, we define the confidence set $\cP_t$ for episode $t$ as 
\begin{equation}\label{confidence-set}
\cP_t= \left\{\widehat P:
     \left|\widehat P(s'\mid s,a,h)-\bar P_t(s'\mid s,a,h)\right| 
      \leq \epsilon_{t}(s'\mid s,a,h)\ \ \forall (s,a,s',h)
\right\}.
\end{equation}
Then, by the empirical Bernstein inequality due to~\cite{Maurer-bernstein}, we show the following. 
\begin{lemma}\label{lemma:confidence}
With probability at least $1-4\delta$, the true transition kernel $P$ is contained in the confidence set $\cP_t$ for every episode $t\in[T]$.
\end{lemma}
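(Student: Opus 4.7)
The plan is to apply Maurer's empirical Bernstein inequality to each fixed tuple $(s,a,s',h)\in\cS\times\cA\times\cS\times[H]$ for each possible value of the visit counter, and then take a union bound. Fix such a tuple. Whenever the trajectory visits $(s,a,h)$, the next state is drawn i.i.d.\ from $P(\cdot\mid s,a,h)$; thus if we let $Y^{(s,a,h)}_i\in\{0,1\}$ denote the indicator that the $i$-th visit to $(s,a,h)$ transitions to $s'$, then $Y^{(s,a,h)}_1,Y^{(s,a,h)}_2,\ldots$ are i.i.d.\ Bernoulli with mean $P(s'\mid s,a,h)$, and after $n$ visits the empirical mean is exactly $\bar P_t(s'\mid s,a,h)$ when $N_t(s,a,h)=n$.

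For each \emph{deterministic} sample size $n\in\{1,\ldots,T\}$, Maurer's empirical Bernstein inequality yields, with probability at least $1-\delta'$,
\[
\left|\bar P_n - P(s'\mid s,a,h)\right|\le \sqrt{\frac{2\hat V_n\ln(2/\delta')}{n}}+\frac{7\ln(2/\delta')}{3(n-1)},
\]
where $\hat V_n$ is the empirical variance. Since $Y_i^{(s,a,h)}$ is Bernoulli, $\hat V_n\le \bar P_n\cdot n/(n-1)\le 2\bar P_n$ (for $n\ge 2$), which turns the variance-proxy into the $\bar P_t$ appearing in \eqref{confidence-radius} at the cost of a factor absorbed into the constants $2$ and $14/3$.

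To convert this into a bound valid at the random stopping time $n=N_t(s,a,h)$, I would union-bound over all $n\in\{1,\ldots,T\}$ and all $(s,a,s',h)\in\cS\times\cA\times\cS\times[H]$, giving at most $HS^2AT$ events. Choosing the per-event failure probability to be of order $\delta/(HS^2AT)$, the logarithmic factor becomes $\ln(HS^2AT/\delta)$, and since $\ln(HS^2AT/\delta)\le 2\ln(HSAT/\delta)$ this is absorbed into the definition of $\epsilon_t$ (this, together with the two-sided nature of the bound and the additional concentration step required to replace $\hat V_n$ by the empirical mean, is where the overall factor of $4$ in the probability $1-4\delta$ enters). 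The degenerate cases $N_t(s,a,h)\in\{0,1\}$ are handled by the $\max\{1,\cdot\}$ in \eqref{confidence-radius}, noting that when $N_t(s,a,h)=0$ the radius trivially exceeds $1$ and every $\widehat P$ lies in the confidence set.

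The main obstacle is the random-sample-size issue: $N_t(s,a,h)$ depends on the realized trajectories and therefore cannot be treated as a fixed $n$ when invoking Maurer's inequality. The standard way around this, which I would use here, is the union bound over all $T$ possible values of $N_t(s,a,h)$, verifying carefully that absorbing the resulting $\ln T$ and $\ln S$ into the single factor $\ln(HSAT/\delta)$ is what produces the advertised constants $2$ and $14/3$ and the precise $1-4\delta$ probability, rather than the literal constants from Maurer's statement.
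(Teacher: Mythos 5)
Your proposal is correct and follows essentially the same route as the paper: apply Maurer's empirical Bernstein inequality to the Bernoulli indicators of transitioning to $s'$ after each visit to $(s,a,h)$, note that for Bernoulli samples the sample variance is the deterministic quantity $\frac{n}{n-1}\bar P_t(1-\bar P_t)\le \frac{n}{n-1}\bar P_t$ (so no extra concentration step is needed there, contrary to your parenthetical), handle $N_t(s,a,h)\le 1$ via the trivially large radius, and union-bound over both tails, all tuples $(s,a,s',h)$, and the episodes/counter values. The only bookkeeping difference is the source of the factor $4$: in the paper it arises as two one-sided applications each at level $2\delta/(HS^2AT)$ (the $2$ coming from the $\ln(2/\delta)$ in Maurer's statement), not from any variance-replacement step.
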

For episode $t\in [T]$, we define $\Delta(P,t)$ as 
\begin{equation}\label{feasible-set}
\Delta(P,t)=\left\{\bm{q}\in [0,1]^{S\times A\times H}:\ \exists \bm{\bar q}\in [0,1]^{S\times A\times S\times H}\text{ satisfying}~\eqref{item:occu1},~\eqref{item:occu2},~\eqref{occupancy'},~P^{q}\in\cP_t\right\}.
\end{equation}
The following is a direct consequence of \Cref{lemma:confidence}.
\begin{lemma}\label{lemma:relaxation}
With probability at least $1-4\delta$, $\Delta(P)\subseteq \Delta(P,t)$ for every episode $t\in[T]$.
\end{lemma}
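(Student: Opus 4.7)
The plan is to derive this lemma essentially as a one-line corollary of \Cref{lemma:confidence}, since the definitions of $\Delta(P)$ and $\Delta(P,t)$ differ only in whether the induced transition kernel $P^{q}$ must equal $P$ exactly or merely lie in the confidence set $\cP_t$. First I would condition on the high-probability event $\mathcal{E}$ from \Cref{lemma:confidence}, namely that $P\in\cP_t$ for every $t\in[T]$, which occurs with probability at least $1-4\delta$. All subsequent reasoning will be deterministic on $\mathcal{E}$.

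Next I would take an arbitrary $\bm{q}\in\Delta(P)$ and exhibit a witness showing $\bm{q}\in\Delta(P,t)$. By the definition of $\Delta(P)$ following~\eqref{reformulation}, there exists $\bm{\bar q}\in[0,1]^{S\times A\times S\times H}$ satisfying~\eqref{item:occu1},~\eqref{item:occu2},~\eqref{occupancy'} and such that the induced transition kernel $P^{q}$ from~\eqref{induced} equals $P$. I would then reuse the very same $\bm{\bar q}$ as the witness for membership in $\Delta(P,t)$: conditions~\eqref{item:occu1},~\eqref{item:occu2},~\eqref{occupancy'} hold verbatim, and the only remaining requirement, $P^{q}\in\cP_t$, follows since $P^{q}=P$ and $P\in\cP_t$ on $\mathcal{E}$. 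This proves $\Delta(P)\subseteq\Delta(P,t)$ simultaneously for all $t\in[T]$.

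There is no real obstacle to surmount here: the substantive probabilistic work lives entirely in \Cref{lemma:confidence}, which invokes the empirical Bernstein inequality of~\citep{Maurer-bernstein} together with a union bound over tuples $(s,a,s',h,t)$. The proof of \Cref{lemma:relaxation} itself is purely a set-theoretic unwinding of the definitions~\eqref{feasible-set} and of $\Delta(P)$, and the probability bound $1-4\delta$ is inherited directly without any further union bound, since the event $\{P\in\cP_t\ \forall t\}$ from \Cref{lemma:confidence} already implies $\{\Delta(P)\subseteq\Delta(P,t)\ \forall t\}$ pointwise.
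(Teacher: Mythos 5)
Your proposal is correct and matches the paper's intent exactly: the paper treats \Cref{lemma:relaxation} as a direct consequence of \Cref{lemma:confidence}, and your argument---conditioning on the event $\{P\in\cP_t\ \forall t\in[T]\}$ and reusing the same witness $\bm{\bar q}$ so that $P^{q}=P\in\cP_t$ certifies membership in $\Delta(P,t)$---is precisely the set-theoretic unwinding the authors leave implicit. No further union bound is needed, as you correctly observe.
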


\subsection{Optimistic Estimators for Reward and Resource Consumption Functions}\label{sec:estimator}

Our framework applies the \emph{optimism in the face of uncertainty} principle to estimate the unknown mean reward and resource consumption functions $f$ and $g$. Throughout the paper, we denote by $\widehat f_t$ and $\widehat g_t$ the estimators of $f$ and $g$, respectively. 

For the observe-then-decide setting, we set
$$\widehat f_t = f_t\quad\text{and}\quad \widehat g_t = g_t.$$
For the decide-then-observe regime, let $\mathbbm{1}_t(s,a,h)$ be the indicator variable for the event that $f_t(s,a,h)$ and $g_t(s,a,h)$ are observed for state-action pair $(s,a)\in\cS\times\cA$ at step $h\in[H]$ of episode $t\in[T]$. Then $\mathbbm{1}_t(s,a,h)=1$ for any $(s,a,h)$ and $t$ under the full-information setting while $\mathbbm{1}_t(s,a,h)=1$ only if $(s,a)$ is visited at step $h$ of episode $t$. With this, we define the empirical estimates $\bar f_t$ and $\bar g_t$ of $f$ and $g$ as follows.
\begin{align*}
\bar f_t(s,a,h) = \frac{\sum_{k=1}^{t-1}f_k(s,a,h)\mathbbm{1}_k(s,a,h)}{C_t(s,a,h)},\quad
\bar g_t(s,a,h) = \frac{\sum_{k=1}^{t-1}g_k(s,a,h)\mathbbm{1}_k(s,a,h)}{C_t(s,a,h)}.
\end{align*}
where $C_t(s,a,h) =\max\{1,\sum_{k=1}^{t-1}\mathbbm{1}_k(s,a,h)\}$. Then
\begin{align*}
\widehat f_t(s,a,h) &= \min\left\{1,\bar f_t(s,a,h) +7R_t(s,a,h)
 + 2\sqrt{R_t(s,a,h) \bar f_t(s,a,h)}\right\},\\
\widehat g_t(s,a,h) &= \max\left\{0,\bar g_t(s,a,h) -7R_t(s,a,h)
 - 2\sqrt{R_t(s,a,h) \bar g_t(s,a,h)}\right\}
\end{align*}
where $$R_t(s,a,h)=\frac{\ln(2HSAT/\delta)}{C_t(s,a,h)}.$$ In contrast to~\cite{ssp-adversarial-unknown}, we estimate the reward and resource consumption functions for each distinct step $h\in[H]$. 
\begin{lemma}\label{lemma:estimator}
	With probability at least $1-2\delta$, for any $(s,a,h)\in\cS\times\cA\times[H]$ and $t\in[T]$, 
	\begin{align*}
		0\leq &\widehat f_t(s,a,h) - f(s,a,h)\leq 
	8\sqrt{R_t(s,a,h) f(s,a,h)} +34R_t(s,a,h) 
		,\\
		0\leq &g(s,a,h)-\widehat g_t(s,a,h) \leq 
		8\sqrt{R_t(s,a,h) g(s,a,h)}+34R_t(s,a,h).
	\end{align*}
\end{lemma}

\section{ONLINE DUAL METHOD}\label{sec:dualmethod}

We present our online dual method (\Cref{alg:online-alloc-mdp-unknown}) for online resource allocation in episodic finite-horizon MDPs.
\begin{algorithm}[h!]
\caption{Online Dual Method for Online Resource Allocation in Episodic Finite-horizon MDPs}
\label{alg:online-alloc-mdp-unknown}
\begin{algorithmic}
\STATE {\bfseries Initialize:} dual variable $\lambda_1$, budget $B=TH\rho$, episode counter $t=1$, counters 
$N(s,a,h)=0$ and $M(s,a,s',h)=0$
for $(s,a,s',h)\in \cS\times\cA\times \cS\times [H]$, and step size $\eta>0$.
\FOR{$t=1,\ldots, T$}
\STATE {\bfseries (1. Confidence set construction)} 
\STATE Set counters $N_t\leftarrow N$ and $M_t\leftarrow M$.
\STATE Compute $\bar P_t$, $\epsilon_t$, and $\cP_t$ as in~\eqref{empirical},~\eqref{confidence-radius}, and~\eqref{confidence-set}.
\STATE {\bfseries (2. Policy update)} 
\STATE Compute estimators $\widehat f_t$ and $\widehat g_t$  as in \Cref{sec:estimator}.
\STATE Deduce policy  $\pi_t = \pi^{\widehat q_t}$ as in~\eqref{induced} where $\bm{\widehat q_t}\in \argmax_{\bm{q}\in \Delta(P,t)}\left\{\langle \bm{\widehat f_t},\bm{q}\rangle - \lambda_t \langle\bm{\widehat g_t},\bm{q}\rangle\right\}$ and $\Delta(P,t)$ is defined in~\eqref{feasible-set}.
\STATE {\bfseries (3. Policy execution)} 
\STATE Sample state $s_1$ from distribution $p(\cdot)$.
\FOR{$h=1,\ldots, H$}
\STATE Sample  $a_h$ from  $\pi_t(\cdot \mid s_h, h)$, accrue $f_t(s_h,a_h,h)$, and update $B\leftarrow B - g_t(s_h,a_h,h)$.
\IF{$B<1$}
\STATE {\bfseries Return}
\ENDIF
\STATE Observe  $s_{h+1}$ determined by $P(\cdot\mid s_h,a_h,h)$.
\STATE Update counters $N(s,a,h)\leftarrow N(s,a,h)+1$ and $M(s,a,s',h)\leftarrow M(s,a,s',h)+1$.
\ENDFOR
\STATE {\bfseries (4. Dual update)} 
\STATE %
Update $\lambda_{t+1}=\max\left\{0,\lambda_t - \eta\left(H\rho - \langle \bm{\widehat g_t}, \bm{\widehat q_t}\rangle\right)\right\}$.
\ENDFOR
\end{algorithmic}
\end{algorithm}
As explained in \Cref{sec:occupancy}, the online resource allocation problem can be reformulated as an online linear optimization problem where each decision is encoded by an occupancy measure that corresponds to a policy for an episode. Then we adapt the \emph{online dual mirror descent} algorithm by~\cite{balseiro2022} originally developed for nonlinear reward and resource consumption functions under the observe-then-decide regime.  %

\Cref{alg:online-alloc-mdp-unknown} proceeds with four parts in each episode. At the beginning of each episode $t\in[T]$, it first obtains $\Delta(P,t)$, which is a relaxation of the feasible set $\Delta(P)$ with high probability by \Cref{lemma:relaxation}, by constructing the confidence set $\cP_t$. Second, the algorithm prepares a policy $\pi_t$ based on the current dual solution $\lambda_t$, optimistic reward function estimator $\widehat f_t$, optimistic resource consumption function estimator $\widehat g_t$, and the set $\Delta(P,t)$. Third, the algorithm runs the episode with policy $\pi_t$. Lastly, the algorithm prepares dual solution $\lambda_{t+1}$ for the next episode based on the outcomes of episode $t$.

To be more specific, the policy update part works as follows. Given the dual solution $\lambda_t$ prepared before episode $t$ starts, we take
$$\bm{\widehat q_t}\in \argmax_{\bm{q}\in \Delta(P,t)}\left\{\langle \bm{\widehat f_t},\bm{q}\rangle - \lambda_t \langle\bm{\widehat g_t},\bm{q}\rangle\right\}.$$
Note that $f_t - \lambda_t g_t$ is the reward function $f_t$ penalized by the resource consumption function $g_t$, and $\widehat f_t - \lambda_t \widehat g_t$ is an optimistic estimator of $f_t - \lambda_t g_t$. Then based on~\eqref{induced}, we deduce policy $\pi^{\widehat q_t}$ associated with the occupancy measure $\widehat q_t$ whose vector representation is $\bm{\widehat q_t}$ as in~\eqref{induced}. For ease of notation, we denote $\pi_t=\pi^{\widehat q_t}$. 

Computing the occupancy measure $\widehat q_t$ can be done by solving a linear program as $\langle\bm{\widehat f_t}-\lambda_t\bm{\widehat g_t},\bm{q}\rangle$ is linear and $\Delta(P,t)$ is a polytope with respect to the vector representation $\bm{q}$ of occupancy measure $q$. In fact, the associated policy $\pi_t$ as well as $\widehat q_t$ can also be computed by an efficient backward dynamic programming algorithm~\citep{Jin2020,efroni2020}.

Next, the algorithm executes policy $\pi_t$ for episode $t$. The algorithm stops if the remaining budget becomes less than 1. Remember that $g_t(s,a,h)\in[0,1]$ for any $(s,a,h,t)\in\cS\times\cA\times [H]\times[T]$. Hence, we would not violate the resource consumption constraint if we run the process only when the remaining resource budget is greater than or equal to 1. %
 
At the end of each episode, the algorithm updates the dual variable for the resource consumption constraint. The dual update rule 
$$\lambda_{t+1}=\max\left\{0,\lambda_t - \eta\left(H\rho - \langle \bm{\widehat g_t}, \bm{\widehat q_t}\rangle\right)\right\}$$
follows the standard dual-based algorithm. %
Note that we have a single resource consumption constraint, in which case  $H\rho - \langle \bm{\widehat g_t}, \bm{\widehat q_t}\rangle$ is a scalar. In fact, our framework easily extends to multiple resource constraints, for which we use a vector of dual variables $\bm{\lambda}\in\mathbb{R}_+^m$ where $m$ is the number of resource constraints. 

We have the following guarantees on the performance of our online dual method.
\begin{theorem}\label{theorem:regret1}
	Under the observe-then-decide regime, \Cref{alg:online-alloc-mdp-unknown} with step size $\eta = 1/(\rho H\sqrt{T})$ guarantees
	\begin{align*}\mathbb{E}\left[\regret\left(\vec\gamma,\vec\pi\right)\mid P\right]
		= O\left(\rho^{-1}\left({H^{3/2}}S\sqrt{AT} +{H^{5/2}}S^2A \right)\left(\ln HSAT\right)^2\right)
		\end{align*}
	where the expectation is taken with respect to the randomness of the reward and resource consumption functions and the randomness in the trajectories of episodes. 
\end{theorem}

\begin{theorem}\label{theorem:regret2}
	Under the decide-then-observe regime, \Cref{alg:online-alloc-mdp-unknown} with step size $\eta = 1/(\rho H\sqrt{T})$ guarantees
	\begin{align*}
 \regret\left(\vec\gamma,\vec\pi\right)
		= O\left(\rho^{-1}\left({H^{3/2}}S\sqrt{AT} +{H^{5/2}}S^2A \right)\left(\ln \frac{HSAT}{\delta}\right)^2\right)
	\end{align*}
	with probability at least $1-16\delta$.
\end{theorem}
Note that the bound for the observe-then-decide regime is on the expected regret while we provide a high probability bound for the decide-then-observe regime. This is because we take a dynamic policy as a benchmark for the first setting while we take a static optimal policy with respect to mean reward and resource functions for the second setting. Another remark is that all settings incur regrets of the same asymptotic growth. This is because the largest regret factor in each setting comes from learning the unknown transition function. 

Our regret upper bounds are nearly optimal as demonstrated by the following regret lower bound. There is a  gap of a $O(\sqrt{S})$ factor as well as a polylog factor.

\begin{theorem}{\rm \citep{pmlr-v132-domingues21a}}\label{theorem:regret3}
There is an instance of a finite-horizon episodic MDP with determistic reward and resource consumption functions and unknown transition function for which any algorithm incurs a regret of $\Omega(H^{3/2}\sqrt{SAT})$.
\end{theorem}
In fact, the instance of~\cite{pmlr-v132-domingues21a} has no resource budget constraint, which is equivalent to setting $g_t=0$ for $t\in[T]$ in our setting.

\section{REGRET ANALYSIS}\label{sec:regretanalysis}

Let $T_{\text{stop}}$ be the episode in or right after which \Cref{alg:online-alloc-mdp-unknown} terminates. For $t> T_{\text{stop}}$, we set $\pi_t(a_{\text{stop}} \mid s, h)=1$ for any $(s,h)$ where action $a_{\text{stop}}$ incurs no reward and resource consumption. 
Moreover, if \Cref{alg:online-alloc-mdp-unknown} terminates after step $h_{\text{stop}}\in[H]$ in episode $T_{\text{stop}}$, then we take action $a_{\text{stop}}$ for step $h> h_{\text{stop}}$. %

For the observe-then-decide regime, we have
\begin{align*}
\regret\left(\vec\gamma,\vec\pi\right)&=\underbrace{\opt(\vec\gamma) - \sum_{t=1}^T \langle \bm{f_t}, \bm{ q_t^*}\rangle}_{\text{(I)}} + \underbrace{ \sum_{t=1}^T \langle \bm{f_t}, \bm{q_t^*}-\bm{\widehat q_t}\rangle}_{\text{(II)}} + \underbrace{\sum_{t=1}^T \langle \bm{f_t}, \bm{\widehat q_t}-\bm{q_t}\rangle}_{\text{(III)}}+ \underbrace{\sum_{t=1}^T \langle \bm{f_t}, \bm{q_t}\rangle-\reward\left(\vec\gamma,\vec\pi\right)}_{\text{(IV)}}
\end{align*}
where $q_t^*$ denotes the occupancy measure $q^{P,\pi_t^*}$ for $t\in[T]$ and $\vec \pi^*=(\pi_1^*,\ldots,\pi_T^*)$ is the benchmark optimal policy for the observe-then-decide setting. Terms (I) and (IV) are due to the randomness in the trajectories, and each of them is the sum of some martingale difference sequence. Term (II) is the regret associated with our online dual method, and term (III) is incurred from learning the unknown transition kernel.

For the decide-then-observe regime, we have
 \begin{align*}
	\regret\left(\vec\gamma,\vec\pi\right)
	&=\underbrace{\opt(\vec\gamma) - \sum_{t=1}^T \langle \bm{f}, \bm{q^*}\rangle}_{\text{(I)}}+\underbrace{\sum_{t=1}^T \langle \bm{f}, \bm{q^*}\rangle-\sum_{t=1}^T \langle \bm{\widehat f_t}, \bm{\widehat q_t}\rangle}_{\text{(II)}}+ \underbrace{\sum_{t=1}^T \langle \bm{\widehat f_t}, \bm{\widehat q_t}-\bm{q_t}\rangle}_{\text{(III)}}\\
 &\quad+\underbrace{\sum_{t=1}^T \langle \bm{\widehat f_t}-\bm{f}, \bm{q_t}\rangle}_{\text{(IV)}} + \underbrace{\sum_{t=1}^T \langle \bm{f}, \bm{q_t}\rangle- \reward\left(\vec\gamma,\vec\pi\right)}_{\text{(V)}}
\end{align*}
where $q^*$ denotes the occupancy measure $q^{P,\pi^*}$ and $\pi^*$ is the benchmark optimal policy for the decide-then-observe setting. Terms (I) and (V) are due to the randomness in the trajectories and the reward function, and as before, each of them is the sum of some martingale difference sequence. Term (II) is associated with our online dual method, term (III) comes from learning the unknown transition kernel, and term (IV) is incurred while learning the mean reward function.

\subsection{Regret under the Observe-then-decide Regime}

Let $n^{P,\phi}(s,a,h)$ be defined as the indicator variable for the event that state-action pair $(s,a)\in \cS\times \cA$ is visited at step $h\in[H]$ of an episode under transition function $P$ and an arbitrary policy $\phi$. %
Moreover, for an arbitrary function $\ell:\cS\times\cA\times[H]\to[0,1]$, we have
$$\sum_{h=1}^H \ell\left(s_{h}^{P,\phi},a_{h}^{P,\phi},h\right)= \langle\bm{\ell},\bm{n^{P,\phi}}\rangle$$
where $\bm{\ell},\bm{n^{P,\phi}}$ is the vector representation of $\ell,n^{P,\phi}$. 

For $t\in[T]$, let $\phi_t$ denotes an arbitrary policy for episode $t$, and let $P_t$ be any transition kernel from $\cP_t$. Then the following lemmas hold.

\begin{lemma}\label{lemma:regret-term3}
Let $\ell_t:\cS\times \cA\times[H]\to[0,1]$ be an arbitrary reward function for episode $t\in[T]$.
Then with probability at least $1-6\delta$, 
\begin{align*}
\sum_{t=1}^T\left| \langle   \bm{\ell_t}, \bm{q^{P,\phi_t}}-\bm{n^{P,\phi_t}}\rangle\right|%
\end{align*}
\end{lemma}
Note that (I) can be written as $\sum_{t=1}^T \langle \bm{f_t},\bm{n_t^*}-\bm{q_t^*}\rangle$ where $n_t^*=n^{P,\pi_t^*}$ and (IV) equals $\sum_{t=1}^T \langle \bm{f_t},\bm{q_t}-\bm{n_t}\rangle$ where $n_t=n^{P,\pi_t}$, so they can be bounded by \Cref{lemma:regret-term3}. Term (III) can be bounded based on the next lemma.

\begin{lemma}\label{lemma:regret-term2}
Let $\ell_t:\cS\times \cA\times[H]\to[0,1]$ be an arbitrary function for $t\in[T]$.
Then with probability at least $1-4\delta$, 
\begin{align*}
\sum_{t=1}^T\left| \langle \bm{\ell_t}, \bm{q^{P_t,\phi_t}}-\bm{q^{P,\phi_t}}\rangle\right|=O\left(\left(H^{3/2}S\sqrt{AT} +H^{5/2}S^2A\right)\left(\ln\frac{HSAT}{\delta}\right)^2 \right).
\end{align*}
\end{lemma}
The last component is to bound term (II), which comes from the online dual mirror descent algorithm. The subtle part is to consider a stopping time $\tau$ not to violate the resource budget. To bound (III), we will need to an upper bound on
$$\frac{1}{\rho}\sum_{t=1}^\tau \langle\bm{g_t},\bm{n_t}-\bm{q_t}\rangle+\frac{1}{\rho}\sum_{t=1}^\tau \langle\bm{g_t},\bm{q_t}-\bm{\widehat q_t}\rangle,$$
for which we can apply Lemmas~\ref{lemma:regret-term3} and~\ref{lemma:regret-term2}.

\subsection{Regret under the Decide-then-observe Regime}

Term (I) equals $$\sum_{t=1}^T\langle\bm{f_t},\bm{n^*}-\bm{q^*}\rangle+\sum_{t=1}^T\langle\bm{f_t}-\bm{f},\bm{q^*}\rangle$$
where $n^*=n^{P,\pi^*}$. Here, the first sum can be bounded based on \Cref{lemma:regret-term3} while the second one can be bounded using a concentration inequality. \Cref{lemma:regret-term2} applies to bound term (III). To bound term (IV), we show the following lemma.
\begin{lemma}\label{lemma:second-term4}
Suppose that the statements of \Cref{lemma:estimator} hold. %
Then%
\begin{align*}
	\sum_{t=1}^T \langle \bm{\widehat f_t} - \bm{f}, \bm{q_t}\rangle,\ \sum_{t=1}^T \langle\bm{g}- \bm{\widehat g_t} , \bm{q_t}\rangle=O\left((H\sqrt{SAT} + HSA)\left(\ln \frac{HSAT}{\delta}\right)^2\right).
\end{align*}
\end{lemma}

Next, to bound term (V), we show the following.
\begin{lemma}\label{lemma:second-term5}
	Suppose that the statements of \Cref{lemma:estimator} hold. Then the following statement holds.%
\begin{align*}
	\sum_{t=1}^T \langle \bm{f}, \bm{q_t}\rangle- \sum_{t=1}^T \langle \bm{f_t}, \bm{n_t}\rangle,\ \sum_{t=1}^T \langle \bm{g_t}, \bm{n_t}\rangle-\sum_{t=1}^T \langle \bm{g}, \bm{q_t}\rangle= O\left(\left(H\sqrt{SAT} + H^{3/2} S\sqrt{A}\right)\left(\ln\frac{HSAT}{\delta}\right)^2 \right).
\end{align*}
\end{lemma}
To finish the proof of~\Cref{theorem:regret2}, we bound term (II), which is incurred from running the online dual mirror descent algorithm. As the previous setting, we consider a stopping time $\tau$ not to violate the resource budget. Moreover, we need to bound
\begin{align*}\frac{1}{\rho}\sum_{t=1}^\tau\left( \langle\bm{ g_t},\bm{n_t}\rangle-\langle\bm{g},\bm{q_t}\rangle\right)+\frac{1}{\rho}\sum_{t=1}^\tau\langle\bm{g}-\bm{\widehat g_t},\bm{q_t}\rangle+\frac{1}{\rho}\sum_{t=1}^\tau \langle\bm{\widehat g_t},\bm{q_t}-\bm{\widehat q_t}\rangle
	\end{align*}
 for which we can apply Lemmas~\ref{lemma:regret-term2}--\ref{lemma:second-term5}. 

\section{NUMERICAL EXPERIMENT}

To present a potential use case for our setting we examine a representative inventory management problem, often characterized by cyclical finite horizon episodes~\citep{cyclic-demand}. Specifically, we consider a finite-horizon episodic MDP over $T$ episodes. In each episode $t \in [T]$, a customer of type $k \in [K]$ presents a demand $d$ for item $i_k$, with the sequence of future arrivals undisclosed to the decision-maker.

Each episode is a single-item inventory scenario, defined by MDP $\mathcal{M}=(\mathcal{S}, \mathcal{A}, H, \{P_h\}_{h=1}^{H-1}, p)$. States $s \in \mathcal{S}=[S]$ represent inventory levels, and actions $a \in \mathcal{A}$ denote order quantities, with $S$ being the maximum inventory. There's a fixed order cost $c_f$, holding cost $c_h$, and demand $d$, starting each cycle with zero inventory.

For inventory state $s \in \mathcal{S}$, the reward function is $f_k(s,a,h) = r_k \min(d, s)$, with $r_k \in \mathbb{R}$ for $k \in [K]$. The cost per step comprises the quantity ordered, the fixed order cost if ordering occurs, and the holding charge, expressed as $g_k(s,a,h) = a+c_f + c_h \max(s-d, 0)$.

Our framework can accommodate non-stationary and indeterminate transition kernels ${P_h}_{h=1}^{H-1}$ within $\mathcal{M}$. Transitions from state $(s,a,h)$ are governed by $s \to s + a - d - \zeta_s + \zeta_a$, where $\zeta_s$ is a state-dependent shock, assumed to follow $\text{Pois}(z(s))$. The term $\zeta_a \sim \mathcal{N}(0,\sigma(a))$ reflects uncertainties in receiving the order amount.

In \cref{fig:exp_comparison} we contrast the efficacy of \Cref{alg:online-alloc-mdp-unknown} with the traditional Economic Order Quantity (EOQ) policy, setting parameters as $T=30$ episodes, $H=5$ horizon length, $S=15$ maximum inventory, and $K=3$ customer types with returns $r_0 > r_1 > r_2$. We implement these policies and our algorithm and generate the results on an Apple M1 MacBook Pro. Results are averaged across 30 runs, where randomness is over the customer arrival types (uniformly distributed) as well as the $\zeta_s$ and $\zeta_a$ shocks in the state transitions. The EOQ policy, defined by $a = \sqrt{2dc_f/c_h}$ at $s=0$, disregards customer type and its respective $r_k$. Consequently, our algorithm's superiority over the EOQ approach is to be expected. We also benchmark against a "selective" strategy, excluding customer type $2$ (return $r_2$). As per \cref{fig:exp_comparison}, our algorithm still obtains higher mean cumulative reward, implying nuanced advantages beyond simply customer prioritization.

\begin{figure}[ht]
\centering
\includegraphics[width=0.6\textwidth]{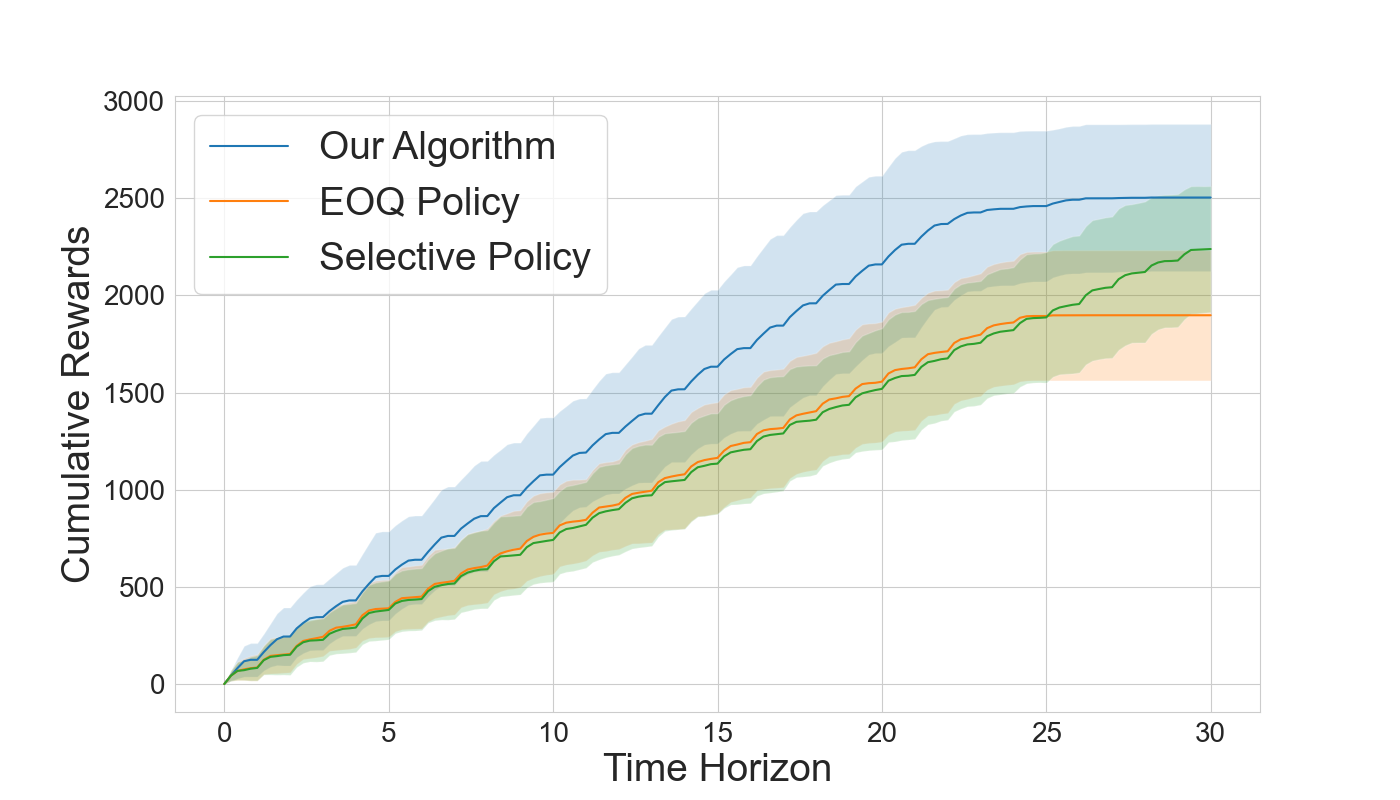}
\caption{Reward Comparison: Algorithm vs. EOQ Policy vs. Selective Policy}
\label{fig:exp_comparison}
\end{figure}

\paragraph{Acknowledgements}
This research is supported, in part, by the KAIST Starting Fund (KAIST-G04220016), the FOUR Brain Korea 21 Program (NRF-5199990113928), the National Research Foundation of Korea (NRF-2022M3J6A1063021).

\bibliography{mybibfile}

\newpage

\appendix

\section{AUXILIARY MEASURES AND NOTATIONS}

In this section, we define some auxiliary measures and functions that are useful for the analysis of the online dual algorithm (\Cref{alg:online-alloc-mdp-unknown}).

Given a policy $\pi$, we define the \emph{reward-to-go function} for a state $s\in\cS$ at step $h$ with reward function $\ell:\cS\times\cA\times[H]\to[0,1]$ and transition kernel $P$ as follows.
\begin{equation}\label{reward-to-go}
J^{P,\pi,\ell}(s,h) = \mathbb{E}\left[\sum_{j=h}^{H}\ell\left(s_{j}^{P,\pi},a_{j}^{P,\pi},j\right)\mid \ell,\pi,P, s_h^{P,\pi}=s\right].
\end{equation}
Similarly, we define the \emph{state-action value function} for $(s,a)\in\cS\times \cA$ at step $h$ with reward function $\ell:\cS\times\cA\times[H]\to[0,1]$ and transition kernel $P$ as follows.
\begin{equation}\label{value-function}
Q^{P,\pi,\ell}(s,a,h) = \mathbb{E}\left[\sum_{j=h}^{H}\ell\left(s_{j}^{P,\pi},a_{j}^{P,\pi},j\right)\mid \ell,\pi,P, s_h^{P,\pi}=s,a_h^{P,\pi}=a\right].
\end{equation}
Furthermore, given a policy $\pi$ and  a transition kernel $P$, we define $q^{P,\pi}\left(s,a,h\mid s',m\right)$ as
\begin{equation}\label{occupancy-conditional}
q^{P,\pi}\left(s,a,h\mid s',m\right)=\mathbb{P}\left[s^{ P, \pi}_{h}=s,\ a^{P,\pi}_{h}=a\mid \pi,P, s^{ P, \pi}_{m}=s'\right]
\end{equation}
for $(s,a,s')\in \cS\times \cA\times \cS$ and $1\leq m\leq h\leq H$.

Given two vectors $\bm{u},\bm{v}\in \mathbb{R}^{S\times A\times H}$, let $\bm{u}\odot \bm{v}$ be defined as the vector obtained from coordinate-wise products of $\bm{u}$ and $\bm{v}$, i.e. $(\bm{u}\odot \bm{v})_i = u_i\odot v_i$ for $i\in[SAH]$. Let $\vec h$ be an $(S\times A\times H)$-dimensional vector all of whose coordinates are $h$. 

We define $\xi_1$ as 
$\xi_1=\left\{f_1,g_1\right\}$
and for $t\geq 2$, we define $\xi_t$ as 
$$\left\{s_1^{P,\pi_{t-1}},a_1^{P,\pi_{t-1}},\ldots, s_h^{P,\pi_{t-1}}, a_h^{P,\pi_{t-1}}, f_t,g_t\right\}$$
where $\pi_{t-1}$ denotes the policy for episode $t-1$ and $$\left(s_1^{P,\pi_{t-1}},a_1^{P,\pi_{t-1}},\ldots, s_h^{P,\pi_{t-1}}, a_h^{P,\pi_{t-1}}\right)$$
is the trajectory generated under policy $\pi_{t-1}$ and transition kernel $P$. Then for $t\in[T]$, let $\cF_t$ be defined as the $\sigma$-algebra generated by the random variables in $\xi_1\cup\cdots\cup \xi_{t}$. Then it follows that $\cF_1,\ldots, \cF_T$ give rise to a filtration.

We define $\zeta_1$ as 
$\zeta_1=\emptyset$
and for $t\geq 2$, we define $\zeta_t$ as 
$$\left\{f_{t-1},g_{t-1},s_1^{P,\pi_{t-1}},a_1^{P,\pi_{t-1}},\ldots, s_h^{P,\pi_{t-1}}, a_h^{P,\pi_{t-1}}\right\}.$$ Then for $t\in[T]$, let $\cG_t$ be defined as the $\sigma$-algebra generated by the random variables in $\zeta_1\cup\cdots\cup \zeta_{t}$. Then it follows that $\cG_1,\ldots, \cG_T$ give rise to a filtration.

\section{MISSING PROOFS FOR SECTION 3 %
}

In this section, we first prove \Cref{lemma:valid-occupancy} which characterizes valid occupancy measures for a finite-horizon MDP. Then we prove the lemmas in \Cref{sec:confidence_sets}
which describe important properties of the confidence sets estimating the true transition kernel. Lastly, we prove the lemmas in \Cref{sec:estimator}
which delineate the accuracy of our optimistic estimators of the mean reward function $f$ and the mean resource consumption function $g$.

\subsection{Valid Occupancy Measures}

First, we provide the proof of \Cref{lemma:valid-occupancy} which is based on the reduction to the loop-free MDP setting. 

\begin{proof}[\bf Proof of \Cref{lemma:valid-occupancy}]
Given the finite-horizon MDP associated with transition kernel $P$, we may define a loop-free MDP as follows. We define its state space as $\cS':=\cS\times [H+1]$, which can be viewed as $H+1$ layers $\cS\times\{h\}$ for $h\in[H+1]$. Its transition kernel $P'$ is given by $P'(\tshn\mid \tsh,a) = P(s'\mid s, a,h)$ for $(s,a,s',h)\in \cS\times \cA\times \cS\times [H]$. Next, given $\bar q$, we may define an occupancy measure $q'$ for the loop-free MDP as $q'(\tsh,a,\tshn)=\bar q(s,a,s',h)$ for $(s,a,s',h)\in \cS\times \cA\times \cS\times [H]$. Then it follows from~\cite[Lemma 3.1]{rosenberg2019} that $q'$ is a valid occupancy measure for the loop-free MDP with transition kernel $P'$ if and only if $q'$ satisfies
\begin{align}
&\sum_{(s,a,s')\in \cS\times \cA\times \cS}q'(\tsh,a,\tshn)=1\quad\text{for~} h=1,\ldots, H,\tag{C1'}\\
&\sum_{(s',a)\in\cS\times \cA}q'(\tsh,a,\tshn)=\sum_{(s',a)\in\cS\times\cA}q'((s',h-1),a,\tsh)\quad \text{for~}s\in \cS,\ h\in\{2,\ldots, H\},\tag{C2'}
\end{align}
and $P^{q'}=P'$ where $P^{q'}$ is given by 
$$P^{q'}(\tshn\mid \tsh,a)=\frac{q'(\tsh,a,\tshn)}{\sum_{s''\in \cS} q'(\tsh,a,(s'',h+1))}=\frac{\bar q(s,a,s',h)}{\sum_{s''\in \cS}\bar q(s,a,s'',h)}.$$
Here, the conditions are equivalent to (C1), (C2), and $P^{\bar q}=P$. Moreover, $q'$ is a valid occupancy measure with $P'$ if and only if $q$ is a valid occupancy measure with $P$, as required.
\end{proof}

\subsection{Confidence Sets for the True Transition Kernel}

\Cref{lemma:confidence} 
is a modification of \citep[Lemma 2]{Jin2020} to our finite-horizon MDP setting. We prove \Cref{lemma:confidence} 
using the empirical Bernstein inequality provided in \Cref{bernstein}.

\begin{proof}[\bf Proof of \Cref{lemma:confidence}]
We will show that with probability at least $1-4\delta$, 
\begin{equation}\label{star}
\left|P(s'\mid s,a,h)-\bar P_t(s'\mid s,a,h)\right|\leq \epsilon_t(s'\mid s,a,h)
\end{equation}
where
\begin{equation*}
\epsilon_t(s'\mid s,a,h)= 2\sqrt{\frac{\bar{P}_{t}(s'\mid s,a,h)\ln\left({{HSAT}/{\delta}}\right)}{\max\{1,N_t(s,a,h)-1\}}}+\frac{14\ln\left({{HSAT}/{\delta}}\right)}{3\max\{1,N_t(s,a,h)-1\}}
\end{equation*}
holds for every $(s,a,s',h)\in \cS\times \cA\times \cS\times [H]$ and every episode $t\in[T]$. 

Let us first consider the case $N_t(s,a,h)\leq 1$. As we may assume that $HSAT\geq 2$, it follows that
$$\epsilon_t(s'\mid s,a,h)=\frac{14\ln\left({{HSAT}/{\delta}}\right)}{3\max\{1,N_t(s,a,h)-1\}}\geq \frac{14}{3} \ln 2>1.$$
Then \eqref{star} holds because $0\leq P(s'\mid s,a,h),\bar P_t(s'\mid s,a,h)\leq 1$.

Assume that $n= N_t(s,a,h)\geq 2$. Then we define $Z_1, \ldots, Z_n$ as follows.
$$Z_j=\begin{cases}
1,\quad &\text{if the transition after the $j$th visit to $(s,a,h)$ is $s'$},\\
0,\quad&\text{otherwise}.
\end{cases}$$
Then $Z_1,\ldots, Z_n$ are i.i.d. with mean $P(s'\mid s, a, h)$, and we have
$$\sum_{j=1}^n Z_j = M_t(s, a,s', h).$$
Moreover, the sample variance $V_n$ of $Z_1,\ldots, Z_n$ is given by
\begin{align}\label{eq:lemma:confidence-variance}
\begin{aligned}
V_n &= \frac{1}{N_t(s,a,h)(N_t(s,a,h)-1)} M_t(s,a,s', h)\left(N_t(s,a,h)- M_t(s, a, s',h)\right)\\
&=\frac{N_t(s,a,h)}{(N_t(s,a,h)-1)} \bar P_t(s'\mid s, a, h)\left(1- \bar P_t(s'\mid s, a, h)\right).
\end{aligned}
\end{align}
Then it follows from \Cref{bernstein} that with probability at least $1- 2\delta/(HS^2AT)$,
\begin{align}\label{eq:lemma:confidence-1}
\begin{aligned}
P(s'\mid s,a,h)-\bar P_t(s'\mid s,a,h)
&\leq \sqrt{\frac{2\bar P_t(s'\mid s, a, h)\left(1- \bar P_t(s'\mid s, a, h)\right)\ln\left({{HS^2AT}/{\delta}}\right)}{N_t(s,a,h)-1}}+ \frac{7\ln\left(HS^2AT/\delta\right)}{3(N_t(s,a,h)-1)}.
\end{aligned}
\end{align}
Here, as we assumed that $N_t(s,a,h)\geq 2$, we have $N_t(s,a,h)-1=\max\{1,N_t(s,a,h)-1\}$. In addition, we know that $1- \bar P_t(s'\mid s, a, h)\leq 1$ and that $\ln\left({{HS^2AT}/{\delta}}\right)\leq 2\ln\left(HSAT/\delta\right)$. Then \eqref{eq:lemma:confidence-1} implies that with probability at least $1- 2\delta/(HS^2AT)$,
\begin{equation}\label{eq:lemma:confidence-2}
    P(s'\mid s,a,h)-\bar P_t(s'\mid s,a,h)\leq \epsilon_t(s'\mid s,a,h)
\end{equation}
Next, we apply \Cref{bernstein} to variables $1-Z_1,\ldots, 1-Z_n$ that are i.i.d. and have mean $1-\bar P_t(s'\mid s, a, h)$. Moreover, the sample variance of $1-Z_1,\ldots, 1-Z_n$ is also equal to $V_n$ defined as in~\eqref{eq:lemma:confidence-variance}. Therefore, based on the same argument, we deduce  that 
with probability at least $1- 2\delta/(HS^2AT)$,
\begin{equation}\label{eq:lemma:confidence-3}
   - P(s'\mid s,a,h)+\bar P_t(s'\mid s,a,h)\leq \epsilon_t(s'\mid s,a,h).
\end{equation}
By applying union bound to~\eqref{eq:lemma:confidence-2} and~\eqref{eq:lemma:confidence-3}, with probability at least $1- 4\delta/(HS^2AT)$,~\eqref{star} holds for $(s,a,s',h)$.
Furthermore, by applying union bound over all $(s,a,s',h)\in \cS\times \cA\times\cS\times [H]$, it follows that with probability at least $1-4\delta$,~\eqref{star} holds for every $(s,a,s',h)\in \cS\times \cA\times\cS\times [H]$, as required.
\end{proof}

\Cref{lemma:confidence} 
bounds the difference between the true transition kernel $P$ and the empirical transition kernels $\bar P_t$. Based on \Cref{lemma:confidence}, the next lemma bounds the difference between the true transition kernel $P$ and any $\widehat P$ contained in the confidence sets $\cP_t$.  \Cref{lemma:confidence'} is a modification of \citep[Lemma 8]{Jin2020} to our finite-horizon MDP setting.

\begin{lemma}\label{lemma:confidence'}
Let $t\in [T]$. Assume that the true transition kernel satisfies $P\in \cP_t$. Then we have
\begin{equation}\label{starstar}
\left|\widehat P(s'\mid s,a,h)-P(s'\mid s,a,h)\right|\leq \epsilon_t^*(s'\mid s,a,h)
\end{equation}
where
\begin{equation*}
\epsilon_t^\star(s'\mid s,a,h)= 6\sqrt{\frac{P(s'\mid s,a,h)\ln\left({{HSAT}/{\delta}}\right)}{\max\{1,N_t(s,a,h)\}}}+94\frac{\ln\left({{HSAT}/{\delta}}\right)}{\max\{1,N_t(s,a,h)\}}
\end{equation*}
for every $\widehat P\in \cP_t$ and every $(s,a,s',h)\in\cS\times \cA\times \cS\times [H]$.
\end{lemma}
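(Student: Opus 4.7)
The plan is to derive the bound in two stages: first, use the triangle inequality together with the hypothesis $P\in\cP_t$ to bound $|\widehat P(s'\mid s,a,h)-P(s'\mid s,a,h)|$ by $2\epsilon_t(s'\mid s,a,h)$; then convert this $\bar P_t$-based confidence radius into the $P$-based radius $\epsilon_t^\star$.

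For the first stage, observe that both $\widehat P$ and (by assumption) $P$ belong to $\cP_t$, so by the definition of $\cP_t$ in~\eqref{confidence-set},
$$
\bigl|\widehat P(s'\mid s,a,h)-P(s'\mid s,a,h)\bigr|
\le \bigl|\widehat P-\bar P_t\bigr|+\bigl|\bar P_t-P\bigr|\le 2\epsilon_t(s'\mid s,a,h).
$$
This part is immediate. The substance is in replacing $\bar P_t$ inside $\epsilon_t$ by $P$, since the expression for $\epsilon_t^\star$ in the statement involves $P(s'\mid s,a,h)$ rather than $\bar P_t(s'\mid s,a,h)$.

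For the second stage, abbreviate $\alpha=\ln(HSAT/\delta)$ and $n=\max\{1,N_t(s,a,h)-1\}$. From $P\in\cP_t$ we get $\bar P_t\le P+\epsilon_t$, hence $\sqrt{\bar P_t}\le\sqrt{P}+\sqrt{\epsilon_t}$, and plugging this into the definition of $\epsilon_t$ yields
$$
\epsilon_t\le 2\sqrt{P\alpha/n}+2\sqrt{\epsilon_t\,\alpha/n}+\tfrac{14}{3}\,\alpha/n.
$$
I apply AM-GM in the form $2\sqrt{\epsilon_t\,\alpha/n}\le \tfrac{1}{2}\epsilon_t+2\alpha/n$ to decouple the implicit $\epsilon_t$ on the right-hand side, then rearrange to obtain a bound $\epsilon_t\le C_1\sqrt{P\alpha/n}+C_2\,\alpha/n$ for explicit numerical constants $C_1,C_2$. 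Doubling (from the first stage) gives $|\widehat P-P|\le 2C_1\sqrt{P\alpha/n}+2C_2\,\alpha/n$.

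Finally, I convert $n=\max\{1,N_t-1\}$ into $\max\{1,N_t\}$: a quick case split shows $\max\{1,N_t\}\le 2\max\{1,N_t-1\}$, so $1/n\le 2/\max\{1,N_t\}$ and $\sqrt{1/n}\le\sqrt{2/\max\{1,N_t\}}$. Absorbing the resulting $\sqrt{2}$ factor into the constants, and choosing the constants generously to cover both contributions, produces the claimed bound with the coefficients $6$ and $94$. The main obstacle is purely bookkeeping: verifying that after the AM-GM decoupling and the $N_t{-}1\mapsto N_t$ conversion the constants remain small enough to be dominated by $6$ on the square-root term and $94$ on the linear term; nothing conceptually subtle is needed beyond the triangle inequality, $\sqrt{a+b}\le\sqrt a+\sqrt b$, and elementary AM-GM.
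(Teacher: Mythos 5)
Your overall architecture is the same as the paper's: triangle inequality through $\bar P_t$ to get $|\widehat P - P|\le 2\epsilon_t$, then use $P\in\cP_t$ to convert the $\bar P_t$-based radius into a $P$-based one, then the $N_t-1\mapsto N_t$ bookkeeping. The problem is the decoupling step, and it is not just bookkeeping: with the AM--GM weight you chose, the proof does not deliver the stated constant $6$. Track the coefficients. Writing $\alpha=\ln(HSAT/\delta)$ and $n=\max\{1,N_t-1\}$, your inequality $\epsilon_t\le 2\sqrt{P\alpha/n}+2\sqrt{\epsilon_t\alpha/n}+\tfrac{14}{3}\alpha/n$ combined with $2\sqrt{\epsilon_t\alpha/n}\le\tfrac12\epsilon_t+2\alpha/n$ gives $\tfrac12\epsilon_t\le 2\sqrt{P\alpha/n}+\tfrac{20}{3}\alpha/n$, i.e.\ $\epsilon_t\le 4\sqrt{P\alpha/n}+\tfrac{40}{3}\alpha/n$. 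Doubling and then using $1/n\le 2/\max\{1,N_t\}$ yields a leading coefficient of $8\sqrt2\approx 11.3$ on $\sqrt{P\alpha/\max\{1,N_t\}}$, not $6$. This excess cannot be pushed into the $\alpha/\max\{1,N_t\}$ term: an inequality of the form $c\sqrt{P\alpha/N}\le C\alpha/N$ would require $\sqrt{PN/\alpha}$ to be bounded, which it is not. So as written the proof establishes a lemma of the same shape with a worse leading constant (harmless for the paper's $O(\cdot)$ conclusions, but not the lemma as stated).

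The paper avoids this loss by treating $\bar P_t\le P+\sqrt{8\bar P_t\alpha/N}+\tfrac{28}{3}\alpha/N$ as a quadratic inequality in $x=\sqrt{\bar P_t}$ and using $x^2\le ax+b+c\Rightarrow x\le a+\sqrt b+\sqrt c$; this yields $\sqrt{\bar P_t}\le\sqrt P+(\text{const})\sqrt{\alpha/N}$ with coefficient exactly $1$ on $\sqrt P$, so the final square-root coefficient is $2\cdot\sqrt8=4\sqrt2\approx 5.66\le 6$ and all slack lands in the $\alpha/N$ term, where $94$ leaves ample room. Your route can be repaired without the quadratic trick, but only by taking the AM--GM weight $\lambda$ in $2\sqrt2\sqrt{\epsilon_t\alpha/N}\le\lambda\epsilon_t+\tfrac{2}{\lambda}\alpha/N$ down to about $\lambda\le 1-\tfrac{2\sqrt2}{3}\approx 0.057$, after which the square-root coefficient $\tfrac{4\sqrt2}{1-\lambda}$ just reaches $6$ and the linear coefficient computes to roughly $93.98$ --- it squeaks under $94$ with essentially no margin. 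Either switch to the quadratic-root argument or make the small-$\lambda$ choice explicit and verify the arithmetic; the assertion that the constants ``remain small enough'' is exactly the point that fails for $\lambda=\tfrac12$.
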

\begin{proof}
We follow the proof of \citep[Lemma B.13]{cohen2020}. Note that 
$$\max\{1,N_t(s,a,h)-1\}\geq \frac{1}{2}\cdot \max\{1,N_t(s,a,h)\}$$
holds for any value of $N_t(s,a,h)$. As we assumed that $P\in \cP_t$, we have that
$$
\bar P_t(s'\mid s,a,h)\leq P(s'\mid s,a,h)+\sqrt{\frac{8\bar P_t(s'\mid s,a,h)\ln\left({{HSAT}/{\delta}}\right)}{\max\{1,N_t(s,a,h)\}}}+\frac{28\ln\left({{HSAT}/{\delta}}\right)}{3\max\{1,N_t(s,a,h)\}}.$$
We may view this as a quadratic inequality in terms of $x=\sqrt{\bar P_t(s'\mid s,a,h)}$. Note that $x^2\leq ax + b +c$ for any $a,b,c\geq 0$ implies that $x\leq a +\sqrt{b}+\sqrt{c}$. Therefore, we deduce that
\begin{align*}
\sqrt{\bar P_t(s'\mid s,a,h)}&\leq \sqrt{P(s'\mid s,a,h)}+\left(2\sqrt{2} + \sqrt{\frac{28}{3}}\right)\sqrt{\frac{\ln\left({{HSAT}/{\delta}}\right)}{\max\{1,N_t(s,a,h)\}}}\\
&\leq \sqrt{P(s'\mid s,a,h)}+13\sqrt{\frac{\ln\left({{HSAT}/{\delta}}\right)}{\max\{1,N_t(s,a,h)\}}}.
\end{align*}
Using this bound on $\sqrt{\bar P_t(s'\mid s,a,h)}$, we obtain the following. 
\begin{align}\label{eq:lemma:confidence'1}
\begin{aligned}
\epsilon_t(s'\mid s,a,h)
&\leq \sqrt{\frac{8\bar P_t(s'\mid s,a,h)\ln\left({{HSAT}/{\delta}}\right)}{\max\{1,N_t(s,a,h)\}}}+\frac{28\ln\left({{HSAT}/{\delta}}\right)}{3\max\{1,N_t(s,a,h)\}}\\
&\leq \sqrt{\frac{8P(s'\mid s,a,h)\ln\left({{HSAT}/{\delta}}\right)}{\max\{1,N_t(s,a,h)\}}}+\left(13\sqrt{8}+\frac{28}{3}\right)\frac{\ln\left({{HSAT}/{\delta}}\right)}{\max\{1,N_t(s,a,h)\}}\\
&\leq3\sqrt{\frac{P(s'\mid s,a,h)\ln\left({{HSAT}/{\delta}}\right)}{\max\{1,N_t(s,a,h)\}}}+47\frac{\ln\left({{HSAT}/{\delta}}\right)}{\max\{1,N_t(s,a,h)\}}\\
&=\frac{1}{2}\cdot \epsilon_t^\star(s'\mid s,a,h)
\end{aligned}
\end{align}
Since we assumed that $P\in \cP_t$, 
$$\left|P(s'\mid s,a,h)-\bar P_t(s'\mid s,a,h)\right|\leq \frac{1}{2}\cdot\epsilon_t^\star(s'\mid s,a,h).$$
Moreover, for any $\widehat P\in\cP_t$, we have
$$\left|\widehat P(s'\mid s,a,h)-\bar P_t(s'\mid s,a,h)\right|\leq \epsilon_t(s'\mid s,a,h)\leq \frac{1}{2}\cdot\epsilon_t^\star(s'\mid s,a,h).$$
By the triangle inequality, it follows that
$$\left|\widehat P(s'\mid s,a,h)-P(s'\mid s,a,h)\right|\leq \epsilon_t^\star(s'\mid s,a,h),$$
as required.
\end{proof}

\subsection{Optimistic Function Estimators}

\Cref{lemma:estimator} is a modification of \citep[Lemma 19]{ssp-adversarial-unknown} to our finite-horizon MDP setting. We prove \Cref{lemma:estimator} using the concentration inequality provided in \Cref{cohen-concentration0}. Although we follow the same proof outline of \citep[Lemma 19]{ssp-adversarial-unknown}, we provide the proof here to make our paper self-contained.

\begin{proof}[\bf Proof of \Cref{lemma:estimator}]
If $\sum_{k=1}^{t-1} \mathbbm{1}_k(s,a,h)=0$, then $\bar f_t(s,a,h)=\bar g_t(s,a,h)=0$ while $7R_t(s,a,h)\geq 7\ln 2 \geq 1$, in which case we have $\widehat f_t(s,a,h) =1$ and $\widehat g_t(s,a,h) = 0$. Hence, if $\sum_{k=1}^{t-1} \mathbbm{1}_k(s,a,h)=0$, the statements of the lemma are trivially satisfied. Thus we may assume that $\sum_{k=1}^{t-1} \mathbbm{1}_k(s,a,h)\geq1$ in which case $C_t(s,a,h) = \sum_{k=1}^{t-1} \mathbbm{1}_k(s,a,h)$. Moreover, we have $C_t(s,a,h)\leq T$.

Applying \Cref{cohen-concentration0} on the first $C_t(s,a,h)$ realizations of the random reward for $(s,a,h)$, we deduce that 
\begin{equation}\label{estimator1}
\left|\bar f_t(s,a,h) - f(s,a,h)\right|\leq 2\sqrt{R_t(s,a,h)\bar f_t(s,a,h)} + 7R_t(s,a,h)
\end{equation}
holds for every $t\in[T]$ with probability at least $1- \delta /(SAH)$. Then by taking the union bound for all $(s,a,h)\in \cS\times \cA\times [H]$, with probability at least $1-\delta$, \eqref{estimator1} holds for all $(s,a,h)\in \cS\times\cA\times [H]$ and all $t\in[H]$. Recall that
$$\widehat f_t(s,a,h) = \min\left\{1, \bar f_t(s,a,h) +2\sqrt{R_t(s,a,h)\bar f_t(s,a,h)} + 7R_t(s,a,h) \right\}.$$
This means thst 
$$ \widehat f_t(s,a,h) -f(s,a,h)=  \widehat f_t(s,a,h)-\bar f_t(s,a,h)+ \bar f_t(s,a,h)-f(s,a,h)\leq 2\left(2\sqrt{R_t(s,a,h)\bar f_t(s,a,h)} + 7R_t(s,a,h)\right)$$
Since $f(s,a,h)\leq 1$, it follows that 
$$0\leq \widehat f_t(s,a,h) - f(s,a,h)$$
holds if \eqref{estimator1} holds. Furthermore, if $x\leq a\sqrt{x} +b$ holds for some $x,a,b\geq 0$, then we have $x\leq (a+\sqrt{b})^2$. Applying this to~\eqref{estimator1} with $x= \bar f_t(s,a,h)$, it follows that
$$\bar f_t(s,a,h)\leq f(s,a,h) + 23R_t(s,a,h) +4\sqrt{R_t(s,a,h)f(s,a,h)}\leq 3f(s,a,h) + 25 R_t(s,a,h).$$
Based on this inequality, we deduce that
$$2\sqrt{R_t(s,a,h)\bar f_t(s,a,h)} + 7R_t(s,a,h)\leq 4\sqrt{R_t(s,a,h) f(s,a,h)} +17 R_t(s,a,h).$$
Therefore, it follows that
$$0\leq \widehat f_t(s,a,h) -f(s,a,h)\leq 8\sqrt{R_t(s,a,h) f(s,a,h)} +34 R_t(s,a,h)$$
holds for all $(s,a,h,t)\in \cS\times\cA\times[H]\times [T]$ with probability at least $1-\delta$. 
as required.
Likewise, $$  0\leq g(s,a,h)-\widehat g_t(s,a,h)\leq 8\sqrt{R_t(s,a,h) f(s,a,h)} +34 R_t(s,a,h)$$
holds for all $(s,a,h,t)\in \cS\times\cA\times[H]\times [T]$ with probability at least $1-\delta$. By taking the union bound, both simultaneously hold with probability at least $1-2\delta$.
\end{proof}

\section{TECHNICAL LEMMAS}

In this section, we prove technical lemmas that are crucial in proving the desired upper bounds on the regret. In particular, our regret analysis heavily depend on Lemmas~\ref{lemma2} and~\ref{lemma9}.

Given two vectors $\bm{u},\bm{v}\in \mathbb{R}^{S\times A\times H}$, let $\bm{u}\odot \bm{v}$ be defined as the vector obtained from coordinate-wise products of $\bm{u}$ and $\bm{v}$, i.e. $(\bm{u}\odot \bm{v})_i = u_i\odot v_i$ for $i\in[SAH]$. Moreover, we define $\vec h$ as an $SAH$-dimensional vector whose coordinate for $(s,a,h)\in \cS\times \cA\times [H]$ is $h$. The following lemma is from \citep{ssp-adversarial-unknown}, and it is useful to bound the variance of 
$\langle \bm{n_t}, \bm{f_t}\rangle$.
\begin{lemma}{\rm \citep[Lemma 2]{ssp-adversarial-unknown}}\label{lemma2}
	Let $\pi_t$ be any policy for episode $t$, and let $q_t$ denote the occupancy measure $q^{P,\pi_t}$. Let $\ell:\cS\times \cA\times[H]\to[0,1]$ be an arbitrary reward function. Then
	$$\mathbb{E}\left[\langle \bm{n_t}, \bm{\ell}\rangle^2\mid \ell, \pi_t, P\right]\leq  2\langle \bm{q_t},\vec h\odot \bm{\ell}\rangle$$
	where $\bm{q_t}, \bm{n_t},\bm{\ell}$ are the vector representations of $q_t,n_t,\ell:\cS\times\cA\times[H]\to\mathbb{R}.$
\end{lemma}
\begin{proof}
For ease of notation, let $\mathbb{E}_t\left[\cdot\right]$ denotes $\mathbb{E}\left[\cdot\mid \ell, \pi_t, P\right]$, and let $s_h$ and $a_h$ denote $s_h^{P,\pi_t}$ and $a_h^{P,\pi_t}$, respectively for $h\in[H]$. Note that
\begin{align*}
\mathbb{E}_t\left[\langle \bm{n_t}, \bm{\ell}\rangle^2\right]
&=\mathbb{E}_t\left[\left(\sum_{h=1}^H\sum_{(s,a)\in\cS\times\cA} n_t(s,a,h)\ell(s,a,h)\right)^2\right]\\
&=\mathbb{E}_t\left[\left(\sum_{h=1}^H \ell(s_h,a_h,h)\right)^2\right]\\
&\leq 2\mathbb{E}_t\left[\sum_{h=1}^H \ell(s_h,a_h,h)\left(\sum_{m=h}^H \ell(s_m,a_m,m)\right)\right]\\
&=2\mathbb{E}_t\left[\sum_{h=1}^H\mathbb{E}_t\left[\ell(s_h,a_h,h)\left(\sum_{m=h}^H\ell(s_m,a_m,m)\right)\mid s_h, a_h\right]\right]\\
&=2\mathbb{E}_t\left[\sum_{h=1}^H\ell(s_h,a_h,h)\mathbb{E}_t\left[\sum_{m=h}^H\ell(s_m,a_m,m)\mid s_h, a_h\right]\right]\\
&=2\mathbb{E}_t\left[\sum_{h=1}^H\ell(s_h,a_h,h)Q^{P,\pi_t,\ell}(s_h,a_h,h) \right]\\
&=2\mathbb{E}_t\left[\sum_{h=1}^H\sum_{(s,a)\in\cS\times\cA}n_t(s,a,h)\ell(s,a,h)Q^{P,\pi_t,\ell}(s,a,h) \right]
\end{align*}
where the first inequality holds because $(\sum_{h=1}^H x_h)^2\leq 2\sum_{h=1}^H x_h (\sum_{m=h}^H x_h)$. Moreover,
\begin{align*}
\mathbb{E}_t\left[\sum_{h=1}^H\sum_{(s,a)\in\cS\times\cA}n_t(s,a,h)\ell(s,a,h)Q^{P,\pi_t,\ell}(s,a,h) \right]
&= \sum_{h=1}^H\sum_{(s,a)\in\cS\times\cA}\ell(s,a,h)Q^{P,\pi_t,\ell}(s,a,h)\mathbb{E}_t\left[n_t(s,a,h)\right]\\
&= \sum_{h=1}^H\sum_{(s,a)\in\cS\times\cA}\ell(s,a,h)Q^{P,\pi_t,\ell}(s,a,h)q_t(s,a,h)\\
&= \langle \bm{q_t}, \bm{\ell}\odot\bm{Q^{P,\pi_t,\ell}}\rangle.
\end{align*}
Therefore, it follows that
$$\mathbb{E}_t\left[\langle \bm{n_t}, \bm{\ell}\rangle^2\right]\leq \langle \bm{q_t}, \bm{\ell}\odot\bm{Q^{P,\pi_t,\ell}}\rangle.$$
Next, observe that
\begin{align*}
\langle \bm{q_t},\bm{\ell}\odot \bm{Q^{P,\pi_t,\ell}}\rangle&\leq \sum_{h=1}^H\sum_{(s,a)\in\cS\times\cA}Q^{P,\pi_t,\ell}(s,a,h)q_t(s,a,h)\\
&=\sum_{h=1}^H\sum_{(s,a)\in\cS\times\cA} \pi(a\mid s, h)Q^{P,\pi_t,\ell}(s,a,h)\left(\sum_{a'\in \cA}q_t(s,a',h)\right)\\
&=\sum_{h=1}^H\sum_{s\in\cS} J^{P,\pi_t,\ell}(s,h)\left(\sum_{a'\in \cA}q_t(s,a',h)\right)\\
&=\sum_{h=1}^H\sum_{s\in\cS} \left(\sum_{m=h}^H\sum_{(s',a')\in \cS\times\cA} q_t(s',a',m\mid s,h)\ell(s',a',m)\right)\left(\sum_{a'\in \cA}q_t(s,a',h)\right)\\
&=\sum_{h=1}^H\sum_{m=h}^H\sum_{(s',a')\in \cS\times\cA}\sum_{s\in\cS}q_t(s',a',m\mid s,h)\left(\sum_{a'\in \cA}q_t(s,a',h)\right)\ell(s',a',m)\\
&=\sum_{h=1}^H\sum_{m=h}^H\sum_{(s',a')\in \cS\times\cA}q_t(s',a',m)\ell(s',a',m)\\
&=\sum_{h=1}^H \sum_{(s,a)\in \cS\times\cA}h\cdot q_t(s,a,h)\ell(s,a,h)\\
&=\langle \bm{q_t},\vec h\odot \bm{\ell}\rangle
\end{align*}
where the first inequality holds because $\ell(s,a,h)\leq 1$ for any $(s,a,h)$, the first equality holds because
$$q_t(s,a,h)= \pi(a\mid s, h)\sum_{a'\in \cA}q_t(s,a',h),$$
the fifth equality follows from 
$$\sum_{s\in\cS}q_t(s',a',m\mid s,h)\left(\sum_{a'\in \cA}q_t(s,a',h)\right)=q_t(s',a',m).$$
Therefore, we get that $\langle \bm{q_t}, \bm{\ell}\odot \bm{Q^{P,\pi_t,\ell}}\rangle\leq \langle \bm{q_t},\vec h\odot \bm{\ell}\rangle$, as required.
\end{proof}

The following lemma is from the first statement of \cite[Lemma 7]{ssp-adversarial-unknown} with a few modifications to adapt the proof to our setting.
\begin{lemma}{\rm \citep[Lemma 7]{ssp-adversarial-unknown}}\label{lemma7}
Let $\pi$ be a policy, and let $\widetilde P,\widehat P$ be two different transition kernels. We denote by $\widetilde q$ the occupancy measure $q^{\widetilde P,\pi}$ associated with $\widetilde P$ and $\pi$, and we denote by $\widehat q$ the occupancy measure $q^{\widehat P,\pi}$ associated with $\widehat P$ and $\pi$. Then
\begin{align*}
\widehat q(s,a,h) - \widetilde q(s,a,h)
&=\sum_{(s',a',s'')\in\cS\times \cA\times \cS}\sum_{m=1}^{h-1}\widetilde q(s',a',m)\left(\widehat P(s''\mid s',a',m)- \widetilde P(s''\mid s',a',m)\right)\widehat q(s,a,h\mid s'',m+1).
\end{align*}
\end{lemma}
\begin{proof}
We prove the first statement by induction on $h$. When $h=1$, note that
$$\widehat q(s,a,h)= \widetilde q(s,a,h) = \pi(a\mid s,1)\cdot p(s).$$
Hence, both the left-hand side and right-hand side are equal to 0. Next assume that the equality holds with $h-1\geq 1$. Then we consider $h$. By the definition of occupancy measures,
\begin{align*}
\widehat q(s,a,h) - \widetilde q(s,a,h)
&=\pi(a\mid s,h)\sum_{(s',a')\in\cS\times \cA}(\widehat P(s\mid s',a',h-1)\widehat q(s',a',h-1)-\widetilde P(s\mid s',a',h-1)\widetilde q(s',a',h-1))\\
&=\underbrace{\pi(a\mid s,h)\sum_{(s',a')\in\cS\times \cA}\widehat P(s\mid s',a',h-1)(\widehat q(s',a',h-1)-\widetilde q(s',a',h-1))}_{\text{Term 1}}\\
&\qquad + \underbrace{\pi(a\mid s,h)\sum_{(s',a')\in\cS\times \cA}\widetilde q(s',a',h-1)(\widehat P(s\mid s',a',h-1)-\widetilde P(s\mid s',a',h-1))}_{\text{Term 2}}.
\end{align*}
To provide an upper bound on Term 1, we use the induction hypothesis for $h-1$:
\begin{align*}
&\widehat q(s',a',h-1) - \widetilde q(s',a',h-1)\\
&=\sum_{(s'',a'',s''')\in\cS\times\cA\times\cS}\sum_{m=1}^{h-2}\widetilde q(s'',a'',m)\left((\widehat P- \widetilde P)(s'''\mid s'',a'',m)\right)\widehat q(s',a',h-1\mid s''',m+1)
\end{align*}
where 
$$(\widehat P- \widetilde P)(s'''\mid s'',a'',m)=\widehat P(s'''\mid s'',a'',m)- \widetilde P(s'''\mid s'',a'',m).$$
In addition, observe that
$$\pi(a\mid s,h)\sum_{(s',a')\in\cS\times \cA} \widehat P(s\mid s',a',h-1)\widehat q(s',a',h-1\mid s''',m+1)=\widehat q(s,a,h\mid s''',m+1).$$
Therefore, it follows that Term 1 is equal to
\begin{align*}
&\sum_{(s'',a'',s''')\in\cS\times\cA\times\cS}\sum_{m=1}^{h-2}\widetilde q(s'',a'',m)\left((\widehat P- \widetilde P)(s'''\mid s'',a'',m)\right)\widehat q(s,a,h\mid s''',m+1)\\
&=\sum_{(s',a',s'')\in\cS\times\cA\times\cS}\sum_{m=1}^{h-2}\widetilde q(s',a',m)\left(\widehat P(s''\mid s',a',m)- \widetilde P(s''\mid s',a',m)\right)\widehat q(s,a,h\mid s'',m+1).
\end{align*}

Next, we upper bound Term 2. Note that
$$\widehat q(s,a,h\mid s'',h) = \pi(a\mid s'',h)\cdot \mathbf{1}\left[s''=s\right].$$
Then it follows that
\begin{align*}
&\pi(a\mid s,h)(\widehat P(s\mid s',a',h-1)-\widetilde P(s\mid s',a',h-1))\\
&= \sum_{s''\in\cS} \mathbf{1}\left[s''=s\right]\cdot \pi(a\mid s'',h)(\widehat P(s''\mid s',a',h-1)-\widetilde P(s''\mid s',a',h-1))\\
&= \sum_{s''\in\cS}\widehat q(s,a,h\mid s'',h)(\widehat P(s''\mid s',a',h-1)-\widetilde P(s''\mid s',a',h-1)),
\end{align*}
implying in turn that Term 2 equals
$$\sum_{(s',a',s'')\in\cS\times \cA\times \cS}\widetilde q(s',a',h-1)(\widehat P(s''\mid s',a',h-1)-\widetilde P(s''\mid s',a',h-1))\widehat q(s,a,h\mid s'',h).$$
Adding the equivalent expression of Term 1 and that of Term 2 that we have obtained, we get the right-hand side of the statement.
\end{proof}

Based on~\Cref{lemma:confidence'} and~\Cref{lemma7}, we show the following lemma, which is a modification of \cite[Lemma 7, the second statement]{ssp-adversarial-unknown}.
\begin{lemma}\label{lemma:confidence''}
Let $\pi$ be a policy, and let $\widetilde P,\widehat P$ be two different transition kernels. We denote by $\widetilde q$ the occupancy measure $q^{\widetilde P,\pi}$ associated with $\widetilde P$ and $\pi$, and we denote by $\widehat q$ the occupancy measure $q^{\widehat P,\pi}$ associated with $\widehat P$ and $\pi$.
If $\widehat P, \widetilde P\in \cP_t$, then we have
\begin{align*}
\left|\langle\bm{\widehat q} - \bm{\widetilde q}, \bm{\ell} \rangle\right|
&=\left|\sum_{(s,a,s',h)\in\cS\times\cA\times\cS\times[H]}\widetilde q(s,a,h)\left(\widehat P(s'\mid s,a,h)-\widetilde P(s'\mid s,a,h)\right)J^{\widehat P, \pi,\ell}(s',h+1)\right|\\
&\leq H\sum_{(s,a,s',h)\in\cS\times \cA\times\cS\times[H]}\widetilde q(s,a,h)\epsilon_t^\star (s'\mid s,a,h)
\end{align*}
where $\bm{\widehat q},\bm{\widetilde q}, \bm{\ell}$ are the vector representations of $\widehat q, \widetilde q, \ell:\cS\times\cA\times[H]\to\mathbb{R}.$
\end{lemma}
\begin{proof}
First, observe that
\begin{align*}
\langle\bm{\widehat q} - \bm{\widetilde q}, \bm{\ell} \rangle&=\sum_{(s,a,h)\in\cS\times\cA\times[H]}\left(\widehat q(s,a,h)-\widetilde q(s,a,h)\right)\ell(s,a,h).
\end{align*}
By \Cref{lemma7}, the right-hand side can be rewritten so that we obtain the following.
\begin{align*}
\langle\bm{\widehat q} - \bm{\widetilde q}, \bm{\ell} \rangle
&=\sum_{(s,a,h)}
\sum_{(s',a',s'')}\sum_{m=1}^{h-1}\widetilde q(s',a',m)\left((\widehat P- \widetilde P)(s''\mid s',a',m)\right)\widehat q(s,a,h\mid s'',m+1)
\ell(s,a,h)\\
&=\sum_{m=1}^{H}\sum_{(s',a',s'')}\widetilde q(s',a',m)\left((\widehat P- \widetilde P)(s''\mid s',a',m)\right)\sum_{(s,a,h):h>m}\widehat q(s,a,h\mid s'',m+1)
\ell(s,a,h)\\
&=\sum_{m=1}^{H}\sum_{(s',a',s'')}\widetilde q(s',a',m)\left((\widehat P- \widetilde P)(s''\mid s',a',m)\right)J^{\widehat P, \pi,\ell}(s'',m+1)\\
&=\sum_{h=1}^{H}\sum_{(s',a',s'')}\widetilde q(s',a',h)\left(\widehat P(s''\mid s',a',h)-\widetilde P(s''\mid s',a',h)\right)J^{\widehat P, \pi,\ell}(s'',h+1).
\end{align*}
Since $\widehat P, P\in\cP_t$, \Cref{lemma:confidence'} implies that
\begin{align*}
\left|\langle\bm{\widehat q} - \bm{\widetilde q}, \bm{\ell} \rangle\right|
&\leq\sum_{h=1}^{H}\sum_{(s',a',s'')}\widetilde q(s',a',h)\left|\widehat P(s''\mid s',a',h)-\widetilde P(s''\mid s',a',h)\right|J^{\widehat P, \pi,\ell}(s'',h+1)\\
&\leq \sum_{h=1}^{H}\sum_{(s',a',s'')}\widetilde q(s',a',h)\epsilon_t^\star(s''\mid s',a',h)J^{\widehat P, \pi,\ell}(s'',h+1)\\
&\leq H\sum_{h=1}^{H}\sum_{(s',a',s'')}\widetilde q(s',a',h)\epsilon_t^\star(s''\mid s',a',h)\\
&=H\sum_{(s,a,s',h)\in\cS\times\cA\times\cS\times[H]}\widetilde q(s,a,h)\epsilon_t^\star(s'\mid s,a,h)
\end{align*}
where the third inequality holds because $J^{\widehat P, \pi,\ell}(s'',h+1)\leq H$, as required.
\end{proof}

\begin{lemma}\label{lemma:confidence'''}
Let $\pi$ be a policy, and let $\widetilde P,\widehat P$ be two different transition kernels. We denote by $\widetilde q$ the occupancy measure $q^{\widetilde P,\pi}$ associated with $\widetilde P$ and $\pi$, and we denote by $\widehat q$ the occupancy measure $q^{\widehat P,\pi}$ associated with $\widehat P$ and $\pi$. Let $(s,h)\in \cS\times [H]$, and consider $\widetilde q(\cdot \mid s,h), \widehat q(\cdot \mid s,h):\cS\times\cA\times\{h,\ldots, H\}$.
If $\widehat P, \widetilde P\in \cP_t$, then we have
$$\left|\langle\bm{\widehat q_{(s,h)}} - \bm{\widetilde q_{(s,h)}}, \bm{\ell_{(h)}} \rangle\right|\leq H\sum_{(s',a',s'',m)\in\cS\times \cA\times\cS\times\{h,\ldots,H\}}\widetilde q(s',a',m\mid s,h)\epsilon_t^\star (s''\mid s',a',m)$$
where $\bm{\widehat q_{(s,h)}},\bm{\widetilde q_{(s,h)}}, \bm{\ell_{(h)}}$ are the vector representations of $\widehat q(\cdot\mid s,h), \widetilde q(\cdot\mid s,h), \ell_{(h)}:\cS\times\cA\times\{h,\ldots, H\}\to\mathbb{R}.$
\end{lemma}
\begin{proof}
The proof follows the same argument used to rove Lemmas \ref{lemma7} and \ref{lemma:confidence''}.
\end{proof}

The following lemma is from \cite[Lemma 4]{ssp-adversarial-unknown} after some changes to adapt to our setting.

\begin{lemma}{\rm \citep[Lemma 4]{ssp-adversarial-unknown}}\label{lemma4}
Let $\pi_t$ be the policy for episode $t$, and let $q_t$ denote the occupancy measure $q^{P,\pi_t}$. Let $\ell:\cS\times \cA\times[H]\to[0,\infty)$ be an arbitrary reward function, and define $\mathbb{V}_t(s,a,h)=\var_{s'\sim P(\cdot\mid s,a,h)}\left[J^{P,\pi_t,\ell}(s',h+1)\right]$. Then
$$\langle \bm{q_t},\bm{\mathbb{V}_t}\rangle\leq \var\left[\langle \bm{n_t},\bm{\ell}\rangle\mid \ell,\pi_t, P\right]$$
where $\bm{q_t},\bm{\mathbb{V}_t}, \bm{n_t},\bm{\ell}$ are the vector representations of $q_t, \mathbb{V}_t,n_t,\ell:\cS\times\cA\times[H]\to\mathbb{R}.$
\end{lemma}
\begin{proof}
For ease of notation, let $s_h$ and $a_h$ denote $s_h^{P,\pi_t}$ and $a_h^{P,\pi_t}$, respectively for $h\in[H]$. Moreover, let $J(s,h)$ denote $J^{P,\pi_t,\ell}(s,h)$ for $(s,h)\in\cS\times[H]$.
Note that
\begin{align*}
\langle \bm{n_t},\bm{\ell}\rangle= \sum_{(s,a,h)\cS\times\cA\times[H]}\ell(s,a,h)n_t(s,a,h)=\sum_{h=1}^H \ell\left(s_h, a_h,h\right).
\end{align*}
For ease of notation, let $\mathbb{E}_t\left[\cdot\right]$ and $\var_t\left[\cdot\right]$ denote $\mathbb{E}\left[\cdot \mid \ell,\pi_t, P\right]$ and $\var\left[\cdot \mid \ell,\pi_t, P\right]$, respectively. 
Then
\begin{align*}
    \mathbb{E}_t\left[\langle \bm{n_t},\bm{\ell}\rangle\right] = \mathbb{E}_t\left[\sum_{h=1}^H \ell\left(s_h, a_h,h\right)\right]= \mathbb{E}_t\left[\mathbb{E}\left[\sum_{h=1}^H \ell\left(s_h, a_h,h\right)\mid \ell,\pi_t,P, s_1\right]\right]
     =\mathbb{E}_t\left[J(s_1,1)\right]
    \end{align*}
Moreover, 
\begin{align*}
\text{Var}_t\left[\langle \bm{n_t},\bm{\ell}\rangle\right]
&=\mathbb{E}_t\left[\left(\sum_{h=1}^H \ell\left(s_h, a_h,h\right)- \mathbb{E}_t\left[J(s_1,1)\right]\right)^2\right]\\
&=\mathbb{E}_t\left[\left(\sum_{h=1}^H \ell\left(s_h, a_h,h\right)- J(s_1,1)+J(s_1,1)-\mathbb{E}_t\left[J(s_1,1)\right] \right)^2\right]\\
&=\mathbb{E}_t\left[\left(\sum_{h=1}^H \ell\left(s_h, a_h,h\right)- J(s_1,1)\right)^2\right]+\mathbb{E}_t\left[\left(J(s_1,1)-\mathbb{E}_t\left[J(s_1,1)\right] \right)^2\right]\\
&\quad + 2\mathbb{E}_t\left[\left(\sum_{h=1}^H \ell\left(s_h, a_h,h\right)- J(s_1,1)\right)\left(J(s_1,1)-\mathbb{E}_t\left[J(s_1,1)\right] \right)\right]\\
&\geq\mathbb{E}_t\left[\left(\sum_{h=1}^H \ell\left(s_h, a_h,h\right)- J(s_1,1)\right)^2\right]
\end{align*}
where the inequality is by
$\mathbb{E}_t\left[J(s_1,1)-\mathbb{E}_t\left[J(s_1,1)\right]\mid s_1\right]=0$ and $\left(J(s_1,1)-\mathbb{E}_t\left[J(s_1,1)\right] \right)^2\geq 0$.
Therefore,
\begin{align*}
\text{Var}_t\left[\langle \bm{n_t},\bm{\ell}\rangle\right]
&\geq\mathbb{E}_t\left[\left(\sum_{h=2}^H \ell\left(s_h, a_h,h\right)- J(s_2,2)+\ell\left(s_1, a_1,1\right)+J(s_2,2)-J(s_1,1)\right)^2\right].
\end{align*}
Note that 
\begin{align}\label{CLlemma4:eq1}
\begin{aligned}
\mathbb{E}_t\left[\sum_{h=2}^H \ell\left(s_h, a_h,h\right)- J(s_2,2)\mid s_1,a_1,s_2\right]&=\mathbb{E}_t\left[\sum_{h=2}^H \ell\left(s_h, a_h,h\right)\mid s_2\right]- J(s_2,2)= 0.
\end{aligned}
\end{align}
Then 
\begin{align*}
\text{Var}_t\left[\langle \bm{n_t},\bm{\ell}\rangle\right]
&\geq\mathbb{E}_t\left[\left(\sum_{h=2}^H \ell\left(s_h, a_h,h\right)- J(s_2,2)\right)^2\right]+\mathbb{E}_t\left[\left(\ell\left(s_1,a_1,1\right)+J(s_2,2)-J(s_1,1)\right)^2\right]\\
&\quad + 2\mathbb{E}_t\left[\mathbb{E}_t\left[\left(\sum_{h=2}^H \ell\left(s_h, a_h,h\right)- J(s_2,2)\right)\left(\ell\left(s_1,a_1,1\right)+J(s_2,2)-J(s_1,1) \right)\mid s_1,a_1,s_2\right]\right]\\
&=\mathbb{E}_t\left[\left(\sum_{h=2}^H \ell\left(s_h, a_h,h\right)- J(s_2,2)\right)^2\right]+\mathbb{E}_t\left[\left(\ell\left(s_1,a_1,1\right)+J(s_2,2)-J(s_1,1) \right)^2\right]\\
&\quad + 2\mathbb{E}_t\left[\left(\ell\left(s_1,a_1,1\right)+J(s_2,2)-J(s_1,1) \right)\mathbb{E}_t\left[\sum_{h=2}^H \ell\left(s_h, a_h,h\right)- J(s_2,2)\mid s_1,a_1,s_2\right]\right]\\
&=\mathbb{E}_t\left[\left(\sum_{h=2}^H \ell\left(s_h, a_h,h\right)- J(s_2,2)\right)^2\right]+\mathbb{E}_t\left[\left(\ell\left(s_1,a_1,1\right)+J(s_2,2)-J(s_1,1)\right)^2\right]
\end{align*}
where the last equality follows from~\eqref{CLlemma4:eq1}.
Here, the second term from the right-most side can be bounded from below as follows.
\begin{align*}
&\mathbb{E}_t\left[\left(\ell\left(s_1,a_1,1\right)+J(s_2,2)-J(s_1,1) \right)^2\right]\\
&=\mathbb{E}_t\left[\left(\ell\left(s_1,a_1,1\right)+\sum_{s'\in \cS}P(s'\mid s_1, a_1,1)J(s',2)-J(s_1,1) +J(s_2,2)-\sum_{s'\in \cS}P(s'\mid s_1, a_1,1)J(s',2)\right)^2\right]\\
&=\mathbb{E}_t\left[\left(\ell\left(s_1,a_1,1\right)+\sum_{s'\in \cS}P(s'\mid s_1, a_1,1)J(s',2)-J(s_1,1) \right)^2\right]\\
&\quad + \mathbb{E}_t\left[\left(J(s_2,2)-\sum_{s'\in \cS}P(s'\mid s_1, a_1,1)J(s',2)\right)^2\right]\\
&\quad + 2\mathbb{E}_t\left[\left(\ell\left(s_1,a_1,1\right)+\sum_{s'\in \cS}P(s'\mid s_1, a_1,1)J(s',2)-J(s_1,1) \right)\left(J(s_2,2)-\sum_{s'\in \cS}P(s'\mid s_1, a_1,1)J(s',2)\right)\right]\\
&=\mathbb{E}_t\left[\left(\ell\left(s_1,a_1,1\right)+\sum_{s'\in \cS}P(s'\mid s_1, a_1,1)J(s',2)-J(s_1,1) \right)^2\right]\\
&\quad + \mathbb{E}_t\left[\left(J(s_2,2)-\sum_{s'\in \cS}P(s'\mid s_1, a_1,1)J(s',2)\right)^2\right]\\
&\geq \mathbb{E}_t\left[\mathbb{V}_t(s_1,a_1,1)\right]
\end{align*}
where third equality holds because
\begin{align*}
&\mathbb{E}_t\left[\left(\ell\left(s_1,a_1,1\right)+\sum_{s'\in \cS}P(s'\mid s_1, a_1,1)J(s',2)-J(s_1,1) \right)\left(J(s_2,2)-\sum_{s'\in \cS}P(s'\mid s_1, a_1,1)J(s',2)\right)\mid s_1,a_1\right]\\
&=\left(\ell\left(s_1,a_1,1\right)+\sum_{s'\in \cS}P(s'\mid s_1, a_1,1)J(s',2)-J(s_1,1)  \right)\mathbb{E}_t\left[J(s_2,2)-\sum_{s'\in \cS}P(s'\mid s_1, a_1,1)J(s',2)\mid s_1,a_1\right]\\
&=\left(\ell\left(s_1,a_1,1\right)+\sum_{s'\in \cS}P(s'\mid s_1, a_1,1)J(s',2)-J(s_1,1)  \right)\times 0
\end{align*}
and the last inequality holds because
\begin{align*}
\mathbb{E}_t\left[\left(J(s_2,2) -\sum_{s'\in \cS}P(s'\mid s_1, a_1,1)J(s',2)\right)^2\right]= \mathbb{E}_t\left[\mathbb{V}_t(s_1,a_1,1)\right].
\end{align*}
Then it follows that
\begin{align*}
\text{Var}_t\left[\langle \bm{n_t},\bm{\ell}\rangle\right]&\geq\mathbb{E}_t\left[\left(\sum_{h=1}^H \ell\left(s_h, a_h,h\right)- J(s_1,1)\right)^2\right]\geq \mathbb{E}_t\left[\left(\sum_{h=2}^H \ell\left(s_h, a_h,h\right)- J(s_2,2)\right)^2\right]+ \mathbb{E}_t\left[\mathbb{V}_t(s_1,a_1,1)\right].
\end{align*}
Repeating the same argument, we deduce that
\begin{align*}
\text{Var}_t\left[\langle \bm{n_t},\bm{\ell}\rangle\right]&\geq\sum_{h=1}^H \mathbb{E}_t\left[\mathbb{V}_t(s_h,a_h,h)\right]= \sum_{(s,a,h)\in\cS\times \cA\times [H]}q_t(s_h,a_h,h)\mathbb{V}_t(s_h,a_h,h)=\langle \bm{q_t},\bm{\mathbb{V}_t}\rangle,
\end{align*}
as required.
\end{proof}

Next, using~\Cref{bernstein2} that states the Bernstein-type concentration inequality for a martingale difference sequence, we prove the following lemma that is useful for our analysis. \Cref{lemma8} is a modification of \citep[Lemma 10]{Jin2020} and \citep[Lemma 8]{ssp-adversarial-unknown} to our finite-horizon MDP setting.

\begin{lemma}\label{lemma8}
With probability at least $1-2\delta$, we have
\begin{align}
\sum_{t=1}^T\sum_{(s,a,h)\in \cS\times \cA\times[H]}\frac{q_t(s,a,h)}{\max\left\{1,N_t(s,a,h)\right\}}&=O\left(SAH\ln T + H\ln\left(H/\delta\right)\right)\label{first-ineq}\\
\sum_{t=1}^T\sum_{(s,a,h)\in \cS\times \cA\times[H]}\frac{q_t(s,a,h)}{\sqrt{\max\left\{1,N_t(s,a,h)\right\}}}&=O\left(H\sqrt{SAT} + SAH\ln T+H\ln\left(H/\delta\right)\right)\label{second-ineq}
\end{align}
\end{lemma}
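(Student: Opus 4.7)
The plan is to relate the ``predicted'' sums (involving $q_t$) to the corresponding ``realized'' sums (involving $n_t$) through a Bernstein-type martingale concentration, and then to bound the realized sums by elementary harmonic-series arguments. The main obstacle will be that a naive application of Azuma--Hoeffding would introduce an additive $O(H\sqrt{T})$ error, dominating the leading term in both inequalities; avoiding this requires \Cref{bernstein2} combined with a self-bounding trick.

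\textbf{Step 1 (realized sums).} Since $N_t(s,a,h)$ is $\cF_t$-measurable and $\mathbb{E}[n_t(s,a,h)\mid\cF_t]=q_t(s,a,h)$ by~\eqref{mds}, I will first bound the realized sums pointwise. Fixing $(s,a,h)$ and listing the episodes $t_1<\cdots<t_K$ at which the tuple is visited, we have $N_{t_j}(s,a,h)=j-1$, so elementary harmonic estimates give $\sum_{t=1}^T n_t(s,a,h)/\max\{1,N_t(s,a,h)\}=O(\ln T)$ and $\sum_{t=1}^T n_t(s,a,h)/\sqrt{\max\{1,N_t(s,a,h)\}}=O\bigl(\sqrt{N_{T+1}(s,a,h)}\bigr)$. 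Summing the first bound over $(s,a,h)\in\cS\times\cA\times[H]$ yields $O(SAH\ln T)$, and Cauchy--Schwarz together with $\sum_{s,a,h}N_{T+1}(s,a,h)\leq TH$ converts the second into $\sum_{s,a,h}\sqrt{N_{T+1}(s,a,h)}\leq \sqrt{SAH\cdot TH}=H\sqrt{SAT}$.

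\textbf{Step 2 (martingale concentration).} Define $Y_t^{(1)}:=\sum_{s,a,h}n_t(s,a,h)/\max\{1,N_t(s,a,h)\}$ with conditional mean $\mu_t^{(1)}:=\sum_{s,a,h}q_t(s,a,h)/\max\{1,N_t(s,a,h)\}$, and analogously $Y_t^{(2)},\mu_t^{(2)}$ with square roots in the denominators. For $i\in\{1,2\}$, exactly $H$ of the indicators $n_t(s,a,h)$ equal $1$ in episode $t$ and each carries a weight in $[0,1]$, so $Y_t^{(i)}\leq H$ deterministically. This yields $\var\bigl[Y_t^{(i)}\mid\cF_t\bigr]\leq \mathbb{E}\bigl[(Y_t^{(i)})^2\mid\cF_t\bigr]\leq H\mu_t^{(i)}$. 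Applying \Cref{bernstein2} to the martingale difference sequence $\{Y_t^{(i)}-\mu_t^{(i)}\}_{t=1}^T$, with probability at least $1-\delta$ we obtain
\[
\sum_{t=1}^T\bigl(\mu_t^{(i)}-Y_t^{(i)}\bigr)\leq c_1\sqrt{H\ln(H/\delta)\sum_{t=1}^T\mu_t^{(i)}}+c_2 H\ln(H/\delta)
\]
for absolute constants $c_1,c_2>0$.

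\textbf{Step 3 (self-bounding and conclusion).} By AM--GM, the square-root term above is at most $\tfrac{1}{2}\sum_t\mu_t^{(i)}+\tfrac{c_1^2}{2}H\ln(H/\delta)$, so rearranging gives $\sum_{t=1}^T\mu_t^{(i)}\leq 2\sum_{t=1}^T Y_t^{(i)}+O\bigl(H\ln(H/\delta)\bigr)$. Substituting the Step~1 bounds on $\sum_t Y_t^{(i)}$ then produces the two target inequalities, and a union bound over $i\in\{1,2\}$ gives the claimed $1-2\delta$ probability. The crux is that the conditional variance of $Y_t^{(i)}$ is itself controlled by $\mu_t^{(i)}$, which allows the Bernstein error to be absorbed into half the target sum---this is precisely why Azuma--Hoeffding, which ignores this variance information, would be too loose to preserve the $O(SAH\ln T)$ and $O(H\sqrt{SAT})$ leading terms.
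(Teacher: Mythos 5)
Your proof is correct and follows essentially the same route as the paper's: decompose each predicted sum into the realized sum plus a martingale residual, bound the realized sums by per-tuple harmonic estimates and Cauchy--Schwarz, and control the residual with the Freedman-type inequality (\Cref{bernstein2}) via the self-bounding observation that the conditional variance is dominated by the predicted sum itself, so that half of it can be absorbed. The only cosmetic differences are that the paper fixes $h$ and union-bounds over $h\in[H]$ (so its differences are bounded by $1$ rather than $H$), and for \eqref{second-ineq} it bounds the conditional second moment by the sum appearing in \eqref{first-ineq} and reuses that inequality, whereas you pay a factor $H$ in the variance and self-bound against $\mu_t^{(2)}$ directly; both yield the stated rates, and your intermediate ``optimized-$\lambda$'' display is anyway equivalent to taking the fixed choice $\lambda=1/(2H)$ in \Cref{bernstein2}.
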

\begin{proof}
Note that
\begin{equation}\label{mds:bound}
\sum_{t=1}^T\sum_{(s,a)\in \cS\times \cA}\frac{q_t(s,a,h)}{\max\left\{1,N_t(s,a,h)\right\}}=\sum_{t=1}^T\sum_{(s,a)\in \cS\times \cA}\frac{n_t(s,a,h)}{\max\left\{1,N_t(s,a,h)\right\}}+\sum_{t=1}^TY_t
\end{equation}
where
$$Y_t=\sum_{(s,a)\in \cS\times \cA}\frac{-n_t(s,a,h)+q_t(s,a,h)}{\max\left\{1,N_t(s,a,h)\right\}}.$$
As $\mathbb{E}\left[n_t(s,a,h)\mid \pi_t,P\right] = q_t(s,a,h)$ holds for every $(s,a,h)\in\cS\times \cA\times [H]$, we know that $Y_1,\ldots, Y_T$ is a martingale difference sequence. We know that $Y_t\leq 1$ for each $t\in[T]$. Let $\mathbb{E}_t\left[\cdot\right]$ denote $\mathbb{E}\left[\cdot\mid  \cF_t,P\right]$. Since $\pi_t$ is $\cF_t$-measurable, we have $\mathbb{E}_t\left[n_t(s,a,h)\right]=q_t(s,a,h)$.
Then we deduce 
\begin{align*}
\mathbb{E}_t\left[ Y_t^2\right]&=\sum_{(s,a),(s',a')\in\cS\times \cA}\frac{\mathbb{E}_t\left[(n_t(s,a,h)-q_t(s,a,h))(n_t(s',a',h)-q_t(s',a',h))\right]}{\max\left\{1,N_t(s,a,h)\right\}\cdot \max\left\{1,N_t(s',a',h)\right\}}\\
&=\sum_{(s,a),(s',a')\in\cS\times \cA}\frac{\mathbb{E}_t\left[n_t(s,a,h)n_t(s',a',h)-q_t(s,a,h)q_t(s',a',h)\right]}{\max\left\{1,N_t(s,a,h)\right\}\cdot \max\left\{1,N_t(s',a',h)\right\}}\\
&\leq\sum_{(s,a),(s',a')\in\cS\times \cA}\frac{\mathbb{E}_t\left[n_t(s,a,h)n_t(s',a',h)\right]}{\max\left\{1,N_t(s,a,h)\right\}\cdot \max\left\{1,N_t(s',a',h)\right\}}\\
&\leq\sum_{(s,a)\in\cS\times \cA}\frac{\mathbb{E}_t\left[n_t(s,a,h)\right]}{\max\left\{1,N_t(s,a,h)\right\}}\\
&=\sum_{(s,a)\in\cS\times \cA}\frac{q_t(s,a,h)}{\max\left\{1,N_t(s,a,h)\right\}}
\end{align*}
where the second equality holds because it follows from $\mathbb{E}_t\left[n_t(s,a,h)\right] = q_t(s,a,h)$ for $(s,a,h)\in\cS\times\cA\times[H]$ that $$\mathbb{E}_t\left[q_t(s,a,h)n_t(s',a',h)\right]=\mathbb{E}_t\left[q_t(s',a',h)n_t(s,a,h)\right]=q_t(s,a,h)q_t(s',a',h),$$
the second inequality holds because $n_t(s,a,h)n_t(s',a',h)=0$ if $(s,a)\neq (s',a')$, and the last equality holds true because $\mathbb{E}_t\left[n_t(s,a,h)\right] = q_t(s,a,h)$ for any $(s,a,h)\in\cS\times\cA\times[H]$. Then we may apply \Cref{bernstein2} with $\lambda=1/2$, and we deduce that with probability at least $1-\delta/H$,
$$\sum_{t=1}^TY_t\leq \frac{1}{2}\sum_{t=1}^T\sum_{(s,a)\in\cS\times \cA}\frac{q_t(s,a,h)}{\max\left\{1,N_t(s,a,h)\right\}} + 2\ln(H/\delta).$$
Plugging this inequality to~\eqref{mds:bound}, it follows that
$$\sum_{t=1}^T\sum_{(s,a)\in \cS\times \cA}\frac{q_t(s,a,h)}{\max\left\{1,N_t(s,a,h)\right\}}=2\sum_{t=1}^T\sum_{(s,a)\in \cS\times \cA}\frac{n_t(s,a,h)}{\max\left\{1,N_t(s,a,h)\right\}}+4\ln(H/\delta).$$
Here, the first term on the right-hand side can be bounded as follows. We have
\begin{align*}
    \sum_{t=1}^T\frac{n_t(s,a,h)}{\max\left\{1,N_t(s,a,h)\right\}}
    &=\sum_{t=1}^T\frac{n_t(s,a,h)}{\max\left\{1,N_{t+1}(s,a,h)\right\}} +\sum_{t=1}^T\left(\frac{n_t(s,a,h)}{\max\left\{1,N_{t}(s,a,h)\right\}}-\frac{n_t(s,a,h)}{\max\left\{1,N_{t+1}(s,a,h)\right\}}\right)\\
    &\leq\sum_{t=1}^T\frac{n_t(s,a,h)}{\max\left\{1,N_{t+1}(s,a,h)\right\}} +\sum_{t=1}^T\left(\frac{1}{\max\left\{1,N_{t}(s,a,h)\right\}}-\frac{1}{\max\left\{1,N_{t+1}(s,a,h)\right\}}\right)\\
    &\leq\sum_{t=1}^T\frac{n_t(s,a,h)}{\max\left\{1,N_{t+1}(s,a,h)\right\}} +1\\
    &=O(\ln T).
    \end{align*}
    where the first inequality is due to $n_t(s,a,h)\leq 1$ and  the last inequality holds because
    $$n_t(s,a,h)= N_{t+1}(s,a,h) - N_t(s,a,h)\quad\text{and}\quad N_T(s,a,h)+n_T(s,a,h)\leq T.$$
Therefore, it follows that
\begin{align*}
\sum_{t=1}^T\sum_{(s,a)\in \cS\times \cA}\frac{n_t(s,a,h)}{\max\left\{1,N_t(s,a,h)\right\}}=\sum_{(s,a)\in \cS\times \cA}\sum_{t=1}^T\frac{n_t(s,a,h)}{\max\left\{1,N_t(s,a,h)\right\}}=O(SA\ln T).
\end{align*}
As a result, for any fixed $h\in [H]$,
$$\sum_{t=1}^T\sum_{(s,a)\in \cS\times \cA}\frac{q_t(s,a,h)}{\max\left\{1,N_t(s,a,h)\right\}}=O\left(SA\ln T + \ln\left(H/\delta\right)\right)$$
holds with probability at least $1-\delta/H$. By union bound,~\eqref{first-ineq} holds with probability at least $1-\delta$. 

Next, we will show that~\eqref{second-ineq} holds.
\begin{equation}\label{mds:bound'}
\sum_{t=1}^T\sum_{(s,a)\in \cS\times \cA}\frac{q_t(s,a,h)}{\sqrt{\max\left\{1,N_t(s,a,h)\right\}}}=\sum_{t=1}^T\sum_{(s,a)\in \cS\times \cA}\frac{n_t(s,a,h)}{\sqrt{\max\left\{1,N_t(s,a,h)\right\}}}+\sum_{t=1}^TZ_t
\end{equation}
where
$$Z_t=\sum_{(s,a)\in \cS\times \cA}\frac{-n_t(s,a,h)+q_t(s,a,h)}{\sqrt{\max\left\{1,N_t(s,a,h)\right\}}}.$$
As~$\mathbb{E}_t\left[n_t(s,a,h)\right] = q_t(s,a,h)$ holds for every $(s,a,h)\in\cS\times \cA\times [H]$, we know that $Z_1,\ldots, Z_T$ is a martingale difference sequence. We know that $Z_t\leq 1$ for each $t\in[T]$. Then we deduce 
\begin{align*}
\mathbb{E}_t\left[ Z_t^2\right]&\leq\sum_{(s,a),(s',a')\in\cS\times \cA}\frac{\mathbb{E}_t\left[n_t(s,a,h)n_t(s',a',h)\right]}{\sqrt{\max\left\{1,N_t(s,a,h)\right\}}\cdot \sqrt{\max\left\{1,N_t(s',a',h)\right\}}}\\
&=\sum_{(s,a)\in\cS\times \cA}\frac{\mathbb{E}_t\left[n_t(s,a,h)\right]}{\max\left\{1,N_t(s,a,h)\right\}}\\
&=\sum_{(s,a)\in\cS\times \cA}\frac{q_t(s,a,h)}{\max\left\{1,N_t(s,a,h)\right\}}
\end{align*}
where the first inequality is derived by the same argument when bounding $\mathbb{E}_t[Y_t^2]$,
the first equality holds because $n_t(s,a,h)n_t(s',a',h)=0$ if $(s,a)\neq (s',a')$, and the last equality holds true because $\mathbb{E}_t\left[n_t(s,a,h)\right] = q_t(s,a,h)$ for any $(s,a,h)\in\cS\times\cA\times[H]$. Then we may apply \Cref{bernstein2} with $\lambda=1$, and we deduce that with probability at least $1-\delta/H$,
$$\sum_{t=1}^TZ_t\leq \sum_{t=1}^T\sum_{(s,a)\in\cS\times \cA}\frac{q_t(s,a,h)}{\max\left\{1,N_t(s,a,h)\right\}} + \ln(H/\delta).$$
Then with probability at least $1-2\delta$,~\eqref{first-ineq} holds and 
\begin{align}\label{mds:bound''}
\begin{aligned}
\sum_{h\in[H]}\sum_{t=1}^TZ_t&\leq \sum_{t=1}^T\sum_{(s,a,h)\in\cS\times \cA\times[H]}\frac{q_t(s,a,h)}{\max\left\{1,N_t(s,a,h)\right\}} + H\ln(H/\delta)=O(SAH\ln T + H\ln(H/\delta)).
\end{aligned}
\end{align}
holds.
Moreover, we have
\begin{align*}
    &\sum_{t=1}^T\frac{n_t(s,a,h)}{\sqrt{\max\left\{1,N_t(s,a,h)\right\}}}\\
    &=\sum_{t=1}^T\frac{n_t(s,a,h)}{\sqrt{\max\left\{1,N_{t+1}(s,a,h)\right\}}} +\sum_{t=1}^T\left(\frac{n_t(s,a,h)}{\sqrt{\max\left\{1,N_{t}(s,a,h)\right\}}}-\frac{n_t(s,a,h)}{\sqrt{\max\left\{1,N_{t+1}(s,a,h)\right\}}}\right)\\
    &\leq\sum_{t=1}^T\frac{n_t(s,a,h)}{\sqrt{\max\left\{1,N_{t+1}(s,a,h)\right\}}} +\sum_{t=1}^T\left(\frac{1}{\sqrt{\max\left\{1,N_{t}(s,a,h)\right\}}}-\frac{1}{\sqrt{\max\left\{1,N_{t+1}(s,a,h)\right\}}}\right)\\
    &\leq\sum_{t=1}^T\frac{n_t(s,a,h)}{\sqrt{\max\left\{1,N_{t+1}(s,a,h)\right\}}} +1\\
    &=O(\sqrt{N_{T+1}(s,a,h)} + 1).
    \end{align*}
    where the last equality holds because
    $n_t(s,a,h)= N_{t+1}(s,a,h) - N_t(s,a,h)$.
    Then
    \begin{align*}
        \sum_{t=1}^T\sum_{(s,a,h)\in\cS\times\cA\times[H]}\frac{n_t(s,a,h)}{\sqrt{\max\left\{1,N_t(s,a,h)\right\}}}&=O\left(\sum_{\cS\times\cA\times [H]}\left(\sqrt{N_{T+1}(s,a,h)} +1 \right)\right)\\
        &=O\left(\sqrt{SAH\sum_{\cS\times\cA\times [H]}N_{T+1}(s,a,h)} + SAH\right)\\
        &=O\left(H\sqrt{SAT} + SAH\right)
    \end{align*}
    where the second equality is due to the Cauchy-Schwarz inequality. Then it follows from~\eqref{mds:bound'} and~\eqref{mds:bound''} that~\eqref{second-ineq} holds. 
\end{proof}

\begin{lemma}\label{lemma10}
Assume that $P\in\cP_t$ for every episode $t\in[T]$. Then
\begin{align*}
&\sum_{t=1}^T\left|\sum_{(s,a,s',h)}q_t(s,a,h)\left(\left(P-P_t\right)(s'\mid s,a,h)\right)\left(\left(J^{P_t,\pi_t,\ell}-J^{P,\pi_t,\ell}\right)(s',h+1)\right)\right| = O\left(H^2S^2A\left(\ln(HSAT/\delta)\right)^2 \right)
\end{align*}
for any $P_t\in \cP_t$ where $\left(P-P_t\right)(s'\mid s,a,h)=P(s'\mid s,a,h)-P_t(s'\mid s,a,h)$ and $\left(J^{P_t,\pi_t,\ell}-J^{P,\pi_t,\ell}\right)(s',h+1)= J^{P_t,\pi_t,\ell}(s',h+1)-J^{P,\pi_t,\ell}(s',h+1)$.
\end{lemma}
\begin{proof}
Let $\bm{q^{P_t,\pi_t}_{(s',h+1)}},\bm{q^{P,\pi_t}_{(s',h+1)}},\bm{\ell}$ be the vector representations of $q^{P_t,\pi_t}(\cdot \mid s',h+1),q^{P,\pi_t}(\cdot \mid s',h+1),\ell:\cS\times\cA\times\cS\to[0,1]$, respectively. Note that 
\begin{align*}
&\sum_{t=1}^T\left|\sum_{(s,a,s',h)}q_t(s,a,h)\left(\left(P-P_t\right)(s'\mid s,a,h)\right)\left(\left(J^{P_t,\pi_t,\ell}-J^{P,\pi_t,\ell}\right)(s',h+1)\right)\right|\\
&\leq \sum_{t=1}^T\sum_{(s,a,s',h)}q_t(s,a,h)\epsilon_t^\star(s'\mid s,a,h)\left|\left(J^{P_t,\pi_t,\ell}-J^{P,\pi_t,\ell}\right)(s',h+1)\right|\\
&= \sum_{t=1}^T\sum_{(s,a,s',h)}q_t(s,a,h)\epsilon_t^\star(s'\mid s,a,h)\left|\langle\bm{q^{P_t,\pi_t}_{(s',h+1)}}-\bm{q^{P,\pi_t}_{(s',h+1)}}, \bm{\ell}\rangle\right|\\
&\leq H\sum_{t=1}^T\sum_{(s,a,s',h)}q_t(s,a,h)\epsilon_t^\star(s'\mid s,a,h) \sum_{(s'',a'',s'''), m\geq h}q_t(s'',a'',m\mid s',h+1)\epsilon_t^\star(s'''\mid s'',a'',m)\\
\end{align*}
where the first inequality is from~\Cref{lemma:confidence'}, the first equality holds because $J^{P_t,\pi_t,\ell}(s',h+1)=\langle\bm{q^{P_t,\pi_t}_{(s',h+1)}}, \bm{\ell}\rangle$ and $J^{P,\pi_t,\ell}(s',h+1)=\langle\bm{q^{P,\pi_t}_{(s',h+1)}}, \bm{\ell}\rangle$, the second inequality is due to \Cref{lemma:confidence'''}. Then plugging in the definition of $\epsilon_t^*$, it follows that
\begin{align*}
&\frac{1}{H\ln(HSAT/\delta)}\sum_{t=1}^T\left|\sum_{(s,a,s',h)}q_t(s,a,h)\left(\left(P-P_t\right)(s'\mid s,a,h)\right)\left(\left(J^{P_t,\pi_t,\ell}-J^{P,\pi_t,\ell}\right)(s',h+1)\right)\right|\\
&=O\left(\sum_{\substack{t,(s,a,s',h)\\(s'',a'',s''')\\ m\geq h+1}}q_t(s,a,h)\sqrt{\frac{P(s'\mid s,a,h)}{\max\{1,n_t(s,a,h)\}}} q_t(s'',a'',m\mid s',h+1)\sqrt{\frac{P(s'''\mid s'',a'',m)}{\max\{1,n_t(s'',a'',m)\}}}\right)\\
&= O\left(\sum_{\substack{t,(s,a,s',h)\\(s'',a'',s''')\\ m\geq h+1}}\sqrt{{\scriptstyle\frac{q_t(s,a,h)P(s'''\mid s'',a'',m) q_t(s'',a'',m\mid s',h+1)}{\max\{1,n_t(s,a,h)\}}}} \sqrt{{\scriptstyle\frac{q_t(s,a,h) P(s'\mid s,a,h)q_t(s'',a'',m\mid s',h+1)}{\max\{1,n_t(s'',a'',m)\}}}}\right)\\
&= O\left(\sqrt{\sum_{\substack{t,(s,a,s',h)\\(s'',a'',s''')\\ m\geq h+1}}{\scriptstyle\frac{q_t(s,a,h)P(s'''\mid s'',a'',m) q_t(s'',a'',m\mid s',h+1)}{\max\{1,n_t(s,a,h)\}}}} \sqrt{\sum_{\substack{t,(s,a,s',h)\\(s'',a'',s''')\\ m\geq h+1}}{\scriptstyle\frac{q_t(s,a,h)P(s'\mid s,a,h)q_t(s'',a'',m\mid s',h+1)}{\max\{1,n_t(s'',a'',m)\}}}}\right)\\
&= O\left(\sqrt{S\sum_{t=1}^T\sum_{(s,a,h)}{\frac{q_t(s,a,h)}{\max\{1,n_t(s,a,h)\}}}} \sqrt{S\sum_{t=1}^T\sum_{(s'',a'',m)}{\frac{q_t(s'',a'',m)}{\max\{1,n_t(s'',a'',m)\}}}}\right)\\
&= O\left(S^2AH\ln T + SH\ln\left(H/\delta\right)\right)
\end{align*}
where the third equality follows from the Cauchy-Schwarz inequality and the last equality is due to \Cref{lemma8}. Therefore, we deduce that
\begin{align*}
&\sum_{t=1}^T\left|\sum_{(s,a,s',h)}q_t(s,a,h)\left(\left(P-P_t\right)(s'\mid s,a,h)\right)\left(\left(J^{P_t,\pi_t,\ell}-J^{P,\pi_t,\ell}\right)(s',h+1)\right)\right| \\
&= O\left(H^2S^2A\ln T\ln(HSAT/\delta) + SH^2\ln\left(H/\delta\right)\ln(HSAT/\delta)\right)\\
&= O\left(H^2S^2A\left(\ln(HSAT/\delta)\right)^2 \right),
\end{align*}
as required.
\end{proof}

Next, we provide \Cref{lemma9}, which is a modification of \citep[Lemma 9]{ssp-adversarial-unknown} to our finite-horizon MDP setting.

\begin{lemma}\label{lemma9}
	Let $\pi_t$ be any policy for episode $t$, and let $P_t$ be any transition kernel from $\cP_t$. Let $q_t,\widehat q_t$ denote the occupancy measures $q^{P,\pi_t},q^{P_t,\pi_t}$, respectively. Let $\ell_t:\cS\times \cA\times[H]\to[0,1]$ be an arbitrary reward function for episode $t\in[T]$. Then with probability at least $1-4 \delta$, 
	\begin{align*}
		\sum_{t=1}^T \left|\langle  \bm{\ell_t}, \bm{q_t}-\bm{\widehat q_t}\rangle\right|=O\left(\left(\sqrt{HS^2A\left(\sum_{t=1}^T\langle \bm{q_t},\vec h\odot \bm{\ell_t}\rangle+H^3\sqrt{T}\right)}+H^2S^2A\right)\left(\ln\frac{HSAT}{\delta}\right)^2 \right).
	\end{align*}
\end{lemma}
\begin{proof}
We define $\xi_1$ as 
$\xi_1=\left\{\ell_1,\pi_1\right\}$
and for $t\geq 2$, we define $\xi_t$ as 
$$\left\{s_1^{P,\pi_{t-1}},a_1^{P,\pi_{t-1}},\ldots, s_h^{P,\pi_{t-1}}, a_h^{P,\pi_{t-1}}, \ell_t,\pi_t\right\}$$
where $\pi_{t-1}$ and $\pi_t$ denote the policies for episode $t-1$ and episode $t$, respectively, and $$\left(s_1^{P,\pi_{t-1}},a_1^{P,\pi_{t-1}},\ldots, s_h^{P,\pi_{t-1}}, a_h^{P,\pi_{t-1}}\right)$$
is the trajectory generated under policy $\pi_{t-1}$ and transition kernel $P$. Then for $t\in[T]$, let $\cH_t$ be defined as the $\sigma$-algebra generated by the random variables in $\xi_1\cup\cdots\cup \xi_{t}$. Then it follows that $\cH_1,\ldots, \cH_T$ give rise to a filtration.

Let us define $$\mu_t(s,a,h)=\mathbb{E}_{s'\sim P(\cdot\mid s,a,h)}\left[J^{P,\pi_t,\ell_t}(s',h+1)\right].$$

Note that
\begin{align*}
\sum_{t=1}^T \left|\langle \bm{q_t}-\bm{\widehat q_t}, \bm{\ell_t}\rangle\right|
&=\sum_{t=1}^T\left|\sum_{(s,a,s',h)\in\cS\times\cA\times\cS\times[H]}q_t(s,a,h)\left(P(s'\mid s,a,h)-P_t(s'\mid s,a,h)\right)J^{P_t, \pi_t,\ell_t}(s',h+1)\right|\\
&\leq \sum_{t=1}^T\left|\sum_{(s,a,s',h)\in\cS\times\cA\times\cS\times[H]}q_t(s,a,h)\left(P(s'\mid s,a,h)-P_t(s'\mid s,a,h)\right)J^{P, \pi_t,\ell_t}(s',h+1)\right|\\
&\quad + O\left(H^2S^2A\left(\ln(HSAT/\delta)\right)^2 \right)
\end{align*}
where the first equality is due to \Cref{lemma7} and the first inequality is by~\Cref{lemma10}.
Moreover,
\begin{align*}
&\sum_{t=1}^T\left|\sum_{(s,a,s',h)\in\cS\times\cA\times\cS\times[H]}q_t(s,a,h)\left(P(s'\mid s,a,h)-P_t(s'\mid s,a,h)\right)J^{P, \pi_t,\ell_t}(s',h+1)\right|\\
&=\sum_{t=1}^T\left|\sum_{(s,a,s',h)\in\cS\times\cA\times\cS\times[H]}q_t(s,a,h)\left(\left(P-P_t\right)(s'\mid s,a,h)\right)\left(J^{P, \pi_t,\ell_t}(s',h+1)-\mu_t(s,a,h)\right)\right|\\
&\leq \sum_{t=1}^T\sum_{(s,a,s',h)\in\cS\times\cA\times\cS\times[H]}q_t(s,a,h)\epsilon_t^\star(s'\mid s,a,h)\left|J^{P, \pi_t,\ell_t}(s',h+1)-\mu_t(s,a,h)\right|\\
&\leq O\left(\sum_{t=1}^T\sum_{\substack{(s,a,s',h)\in\\ \cS\times\cA\times\cS\times[H]}}q_t(s,a,h)\sqrt{\frac{P(s'\mid s,a,h)\ln(HSAT/\delta)}{\max\{1,N_t(s,a,h)\}}\left(J^{P, \pi_t,\ell_t}(s',h+1)-\mu_t(s,a,h)\right)^2}\right)\\
&\quad + O\left(HS\sum_{t=1}^T\sum_{(s,a,h)\in\cS\times\cA\times[H]}\frac{q_t(s,a,h)\ln(HSAT/\delta)}{\max\{1,N_t(s,a,h)\}}\right)\\
&\leq O\left(\sum_{t=1}^T\sum_{\substack{(s,a,s',h)\in\\ \cS\times\cA\times\cS\times[H]}}q_t(s,a,h)\sqrt{\frac{P(s'\mid s,a,h)\ln(HSAT/\delta)}{\max\{1,N_t(s,a,h)\}}\left(J^{P, \pi_t,\ell_t}(s',h+1)-\mu_t(s,a,h)\right)^2}\right)\\
&\quad +  O\left(H^2S^2A\left(\ln(HSAT/\delta)\right)^2 \right)
\end{align*}
where $\left(P-P_t\right)(s'\mid s,a,h)=P(s'\mid s,a,h)-P_t(s'\mid s,a,h)$, the first equality holds because $\sum_{s'\in\cS}\left(P-P_t\right)(s'\mid s,a,h)=0$ and $\mu_t(s,a,h)$ is independent of $s'$, the first inequality is due to~\Cref{lemma:confidence'}, and the second inequality is from~\Cref{lemma:confidence'} and $\left|J^{P, \pi_t,\ell_t}(s',h+1)-\mu_t(s,a,h)\right|\leq H$. Recall that
$q_t(s,a,h)=\mathbb{E}\left[n_t(s,a,h)\mid  \pi_t, P\right]$, which implies that
\begin{align*}
&\sum_{t=1}^T\sum_{\substack{(s,a,s',h)\in\\ \cS\times\cA\times\cS\times[H]}}q_t(s,a,h)\sqrt{\frac{P(s'\mid s,a,h)\ln(HSAT/\delta)}{\max\{1,N_t(s,a,h)\}}\left(J^{P, \pi_t,\ell_t}(s',h+1)-\mu_t(s,a,h)\right)^2}=\sum_{t=1}^T \mathbb{E}\left[X_t\mid \cH_t, P\right]
\end{align*}
where 
$$X_t= \sum_{\substack{(s,a,s',h)\in\\ \cS\times\cA\times\cS\times[H]}}n_t(s,a,h)\sqrt{\frac{P(s'\mid s,a,h)\ln(HSAT/\delta)}{\max\{1,N_t(s,a,h)\}}\left(J^{P, \pi_t,\ell_t}(s',h+1)-\mu_t(s,a,h)\right)^2}.$$
Here, we have
$$0\leq X_t \leq O\left(HS\sum_{(s,a,h)\in \cS\times \cA\times [H]}n_t(s,a,h) \sqrt{\ln(HSAT/\delta)}\right)= O(H^2S\sqrt{\ln(HSAT/\delta)}).$$
Then it follows from \Cref{cohen-concentration} that with probability at least $1-\delta$, 
\begin{align*}
&\sum_{t=1}^T \mathbb{E}\left[X_t\mid \cH_t,  P\right]\\
&\leq 2\sum_{t=1}^T \sum_{\substack{(s,a,s',h)\in\\ \cS\times\cA\times\cS\times[H]}}n_t(s,a,h)\sqrt{\frac{P(s'\mid s,a,h)\ln(HSAT/\delta)}{\max\{1,N_t(s,a,h)\}}\left(J^{P, \pi_t,\ell_t}(s',h+1)-\mu_t(s,a,h)\right)^2}\\
&\quad + O\left(H^2S \left(\ln(HSAT/\delta)\right)^{3/2}\right).
\end{align*}
Note that
\begin{align*}
&\sum_{t=1}^T \sum_{\substack{(s,a,s',h)\in\\ \cS\times\cA\times\cS\times[H]}}n_t(s,a,h)\sqrt{\frac{P(s'\mid s,a,h)\ln(HSAT/\delta)}{\max\{1,N_t(s,a,h)\}}\left(J^{P, \pi_t,\ell_t}(s',h+1)-\mu_t(s,a,h)\right)^2}\\
&\leq \sum_{t=1}^T \sum_{\substack{(s,a,s',h)\in\\ \cS\times\cA\times\cS\times[H]}}n_t(s,a,h)\sqrt{\frac{P(s'\mid s,a,h)\ln(HSAT/\delta)}{\max\{1,N_{t+1}(s,a,h)\}}\left(J^{P, \pi_t,\ell_t}(s',h+1)-\mu_t(s,a,h)\right)^2}\\
&\quad + H\sum_{t=1}^T \sum_{\substack{(s,a,s',h)\in\\ \cS\times\cA\times\cS\times[H]}}n_t(s,a,h)\left(\sqrt{\frac{P(s'\mid s,a,h)\ln(HSAT/\delta)}{\max\{1,N_{t}(s,a,h)\}}}-\sqrt{\frac{P(s'\mid s,a,h)\ln(HSAT/\delta)}{\max\{1,N_{t+1}(s,a,h)\}}}\right)\\
&\leq \sum_{t=1}^T \sum_{\substack{(s,a,s',h)\in\\ \cS\times\cA\times\cS\times[H]}}n_t(s,a,h)\sqrt{\frac{P(s'\mid s,a,h)\ln(HSAT/\delta)}{\max\{1,N_{t+1}(s,a,h)\}}\left(J^{P, \pi_t,\ell_t}(s',h+1)-\mu_t(s,a,h)\right)^2}\\
&\quad + H\sqrt{S}\sum_{t=1}^T \sum_{{(s,a,h)\in \cS\times\cA\times[H]}}\left(\sqrt{\frac{\ln(HSAT/\delta)}{\max\{1,N_{t}(s,a,h)\}}}-\sqrt{\frac{\ln(HSAT/\delta)}{\max\{1,N_{t+1}(s,a,h)\}}}\right)\\
&\leq \sum_{t=1}^T \sum_{\substack{(s,a,s',h)\in\\ \cS\times\cA\times\cS\times[H]}}n_t(s,a,h)\sqrt{\frac{P(s'\mid s,a,h)\ln(HSAT/\delta)}{\max\{1,N_{t+1}(s,a,h)\}}\left(J^{P, \pi_t,\ell_t}(s',h+1)-\mu_t(s,a,h)\right)^2}\\
&\quad +  O\left(H^2S^{3/2}A \sqrt{\ln(HSAT/\delta)}\right).
\end{align*}
where the first inequality holds because $\left|J^{P, \pi_t,\ell_t}(s',h+1)-\mu_t(s,a,h)\right|\leq H$, the second inequality holds because $n_t(s,a,h)\leq 1$ and the Cauchy-Schwarz inequality implies that
$$\sum_{s'\in \cS}\sqrt{P(s'\mid s,a,h)}\leq \sqrt{S\sum_{s'\in \cS}P(s'\mid s,a,h)}=\sqrt{S},$$
and the third inequality follows from 
$$\sum_{t=1}^T \left(\sqrt{\frac{1}{\max\{1,N_{t}(s,a,h)\}}}-\sqrt{\frac{1}{\max\{1,N_{t+1}(s,a,h)\}}}\right)\leq \sqrt{\frac{1}{\max\{1,N_{1}(s,a,h)\}}}=1.$$
Next, the Cauchy-Schwarz inequality implies the following. 
\begin{align*}
& \sum_{t=1}^T \sum_{\substack{(s,a,s',h)\in\\ \cS\times\cA\times\cS\times[H]}}n_t(s,a,h)\sqrt{\frac{P(s'\mid s,a,h)\ln(HSAT/\delta)}{\max\{1,N_{t+1}(s,a,h)\}}\left(J^{P, \pi_t,\ell_t}(s',h+1)-\mu_t(s,a,h)\right)^2}\\
&\leq  \sqrt{\sum_{t=1}^T \sum_{\substack{(s,a,s',h)\in\\ \cS\times\cA\times\cS\times[H]}}n_t(s,a,h){P(s'\mid s,a,h)}\left(J^{P, \pi_t,\ell_t}(s',h+1)-\mu_t(s,a,h)\right)^2}\\
&\quad \times\sqrt{ \sum_{t=1}^T \sum_{\substack{(s,a,s',h)\in\\ \cS\times\cA\times\cS\times[H]}}n_t(s,a,h)\frac{\ln(HSAT/\delta)}{\max\{1,N_{t+1}(s,a,h)\}}}
\end{align*}
Here, the second term can be bounded as follows.
\begin{align*}
&\sum_{t=1}^T \sum_{\substack{(s,a,s',h)\in \cS\times\cA\times\cS\times[H]}}n_t(s,a,h)\frac{\ln(HSAT/\delta)}{\max\{1,N_{t+1}(s,a,h)\}}\\
&=S\ln\left(\frac{HSAT}{\delta}\right)\sum_{t=1}^T \sum_{\substack{(s,a,h)\in \cS\times\cA\times[H]}}\frac{n_t(s,a,h)}{\max\{1,N_{t+1}(s,a,h)\}}\\
&=S\ln\left(\frac{HSAT}{\delta}\right) \sum_{\substack{(s,a,h)\in \cS\times\cA\times[H]}}\sum_{t=1}^T\frac{n_t(s,a,h)}{\max\{1,N_{t+1}(s,a,h)\}}\\
&=O\left(HS^2 A\left(\ln\left({HSAT}/{\delta}\right)\right)^2\right).
\end{align*}
$(s,a,h)\in \cS\times \cA\times [H]$, we define 
$$\mathbb{V}_t(s,a,h)=\var_{s'\sim P(\cdot\mid s,a,h)}\left[J^{P,\pi_t,\ell_t}(s',h+1)\right].$$
Then 
\begin{align*}
\mathbb{V}_t(s,a,h)&= \mathbb{E}_{s'\sim P(\cdot\mid s,a,h)}\left[\left(J^{P,\pi_t,\ell_t}(s',h+1)-\mu_t(s,a,h)\right)^2\right]\\
&= \sum_{s'\in\cS} P(s'\mid s,a,h)\left(J^{P,\pi_t,\ell_t}(s',h+1)-\mu_t(s,a,h)\right)^2
\end{align*}
Furthermore, 
\begin{align*}
&\sum_{t=1}^T \sum_{\substack{(s,a,s',h)\in\\ \cS\times\cA\times\cS\times[H]}}n_t(s,a,h){P(s'\mid s,a,h)}\left(J^{P, \pi_t,\ell_t}(s',h+1)-\mu_t(s,a,h)\right)^2\\
&=\sum_{t=1}^T \sum_{\substack{(s,a,h)\in \cS\times\cA\times[H]}}n_t(s,a,h) \mathbb{V}_t(s,a,h)\\
&=\sum_{t=1}^T \langle\bm{q_t},\bm{\mathbb{V}_t}\rangle+\sum_{t=1}^T \sum_{\substack{(s,a,h)\in \cS\times\cA\times[H]}}(n_t(s,a,h)-q_t(s,a,h)) \mathbb{V}_t(s,a,h)\\
&\leq \sum_{t=1}^T\var\left[\langle n_t,\ell_t\rangle\mid \ell_t,\pi_t,P\right]+O\left(H^3\sqrt{T\ln(1/\delta)}\right)
\end{align*}
where $\bm{\mathbb{V}_t}\in\mathbb{R}^{SAH}$ is the vector representation of $\mathbb{V}_t$ and the  inequality follows from \Cref{lemma4}, $ \mathbb{V}_t(s,a,h)\leq H^2$,
\begin{align*}\sum_{\substack{(s,a,h)\in \cS\times\cA\times[H]}}(n_t(s,a,h)-q_t(s,a,h)) \mathbb{V}_t(s,a,h)&\leq \sum_{\substack{(s,a,h)\in \cS\times\cA\times[H]}}(n_t(s,a,h)+q_t(s,a,h)) H^2\leq 2H^3,
\end{align*}
and \Cref{azuma}.
Therefore, we finally have proved that
\begin{align*}
\sum_{t=1}^T \left|\langle \bm{q_t}-\bm{\widehat q_t}, \bm{\ell_t}\rangle\right|
&=O\left(\sqrt{HS^2A\left(\ln\frac{HSAT}{\delta}\right)^2\left(\sum_{t=1}^T\var\left[\langle n_t,\ell_t\rangle\mid \ell_t,\pi_t,P\right]+H^3\sqrt{T\ln\frac{1}{\delta}}\right)}\right)\\
&\quad + O\left(H^2S^2A\left(\ln\frac{HSAT}{\delta}\right)^2 \right).
\end{align*}
Moreover, we know from~\Cref{lemma2} that
$$\var\left[\langle \bm{n_t}, \bm{\ell_t}\rangle^2\mid  \ell_t, \pi_t, P\right]\leq \mathbb{E}\left[\langle \bm{n_t}, \bm{\ell_t}\rangle^2\mid \ell_t, \pi_t, P\right]\leq 2 \langle \bm{q_t},\vec h\odot \bm{\ell_t}\rangle,$$
and therefore, it follows that
\begin{align*}
\sum_{t=1}^T \left|\langle \bm{q_t}-\bm{\widehat q_t}, \bm{\ell_t}\rangle\right|
&=O\left(\left(\sqrt{HS^2A\left(\sum_{t=1}^T\langle \bm{q_t},\vec h\odot \bm{\ell_t}\rangle+H^3\right)}+H^2S^2A\right)\left(\ln\frac{HSAT}{\delta}\right)^2 \right),
\end{align*}
as required.
\end{proof}

Based on Lemmas \ref{lemma2} and \ref{lemma9}, we can prove \Cref{lemma:regret-term3} that bounds the difference between the expected reward and the realized reward.

\section{REGRET ANALYSIS FOR THE OBSERVE-THEN-DECIDE REGIME}

\subsection{Proofs of Lemmas 5.1 and 5.2 %
}
Based on Lemmas \ref{lemma2} and \ref{lemma9}, we can prove \Cref{lemma:regret-term3} that bounds the difference between the expected reward and the realized reward and \Cref{lemma:regret-term2} that bounds the regret due to the estimation error.

\begin{proof}[\rm \bfseries Proof of \Cref{lemma:regret-term3}]
We closely follow the proof of~\citep[Theorem 6]{ssp-adversarial-unknown}. For ease of notation, let us use notation $\pi_t$ for an arbitrary policy for episode $t$, $q_t$ denotes the occupancy measure $q^{P,\pi_t}$, and $n_t$ denotes $n^{P,\pi_t}$. Then \Cref{lemma2} implies that
$$\mathbb{E}\left[\langle \bm{n_t}, \bm{\ell_t}\rangle^2\mid \ell_t, \pi_t, P\right]\leq 2\langle \bm{q_t},\vec h\odot \bm{\ell_t}\rangle$$
where $\bm{q_t}, \bm{n_t},\bm{\ell_t}$ are the vector representations of $q_t,n_t,\ell_t:\cS\times\cA\times[H]\to\mathbb{R}.$
We define $\xi_1$ as 
$\xi_1=\left\{\ell_1,\pi_1\right\}$
and for $t\geq 2$, we define $\xi_t$ as 
$$\left\{s_1^{P,\pi_{t-1}},a_1^{P,\pi_{t-1}},\ldots, s_h^{P,\pi_{t-1}}, a_h^{P,\pi_{t-1}}, \ell_t,\pi_t\right\}$$
where $\pi_{t-1}$ and $\pi_t$ denote the policies for episode $t-1$ and episode $t$, respectively, and $$\left(s_1^{P,\pi_{t-1}},a_1^{P,\pi_{t-1}},\ldots, s_h^{P,\pi_{t-1}}, a_h^{P,\pi_{t-1}}\right)$$
is the trajectory generated under policy $\pi_{t-1}$ and transition kernel $P$. Then for $t\in[T]$, let $\cH_t$ be defined as the $\sigma$-algebra generated by the random variables in $\xi_1\cup\cdots\cup \xi_{t}$. Then it follows that $\cH_1,\ldots, \cH_T$ give rise to a filtration. Then it follows that
\begin{align*}
\sum_{t=1}^T\mathbb{E}\left[\langle \bm{n_t}, \bm{\ell_t}\rangle^2\mid \cH_t,P\right]=\sum_{t=1}^T\mathbb{E}\left[\langle \bm{n_t}, \bm{\ell_t}\rangle^2\mid \ell_t,\pi_t,P\right]&\leq 2\sum_{t=1}^T\langle \bm{q_t}-\bm{\widehat q_t},\vec h\odot \bm{\ell_t}\rangle +2\sum_{t=1}^T\langle \bm{\widehat q_t},\vec h\odot \bm{\ell_t}\rangle.
\end{align*}
Note that the first term on the right-hand side can be bounded as follows.
$$\sum_{t=1}^T\langle \bm{\widehat q_t},\vec h\odot \bm{\ell_t}\rangle=O(H^2T).$$
To upper bound the first term, we consider
\begin{align*}
\sum_{t=1}^T\langle \bm{q_t}-\bm{\widehat q_t},\vec h\odot \bm{\ell_t}\rangle&\leq H\sum_{t=1}^T\langle \bm{q_t}-\bm{\widehat q_t},\bm{\ell_t}\rangle.
\end{align*}
Applying \Cref{lemma9} with function $\ell_t$,  we deduce that with probability at least $1-4\delta$,
\begin{align*}
\sum_{t=1}^T \langle \bm{q_t}-\bm{\widehat q_t}, \bm{\ell_t}\rangle
&=O\left(\left(\sqrt{HS^2A\left(\sum_{t=1}^T\langle \bm{q_t},\vec h\odot \bm{\ell_t} \rangle+H^3\sqrt{T}\right)}+H^2S^2A\right)\left(\ln\frac{HSAT}{\delta}\right)^2 \right)\\
&=O\left(\left(\sqrt{HS^2A\left(H^2T+H^3\sqrt{T}\right)}+H^2S^2A\right)\left(\ln\frac{HSAT}{\delta}\right)^2 \right)\\
&=O\left(\left(\sqrt{H^4S^2AT}+H^2S^2A\right)\left(\ln\frac{HSAT}{\delta}\right)^2 \right)\\
&=O\left(\left(H T +H^3S^2A\right)\left(\ln\frac{HSAT}{\delta}\right)^2 \right)
\end{align*}
where the second equality holds because $\langle \bm{q_t},\vec h\odot \bm{\ell_t} \rangle=O(H^2)$ and the fourth equality holds because $\sqrt{H^4S^2AT}\leq H\sqrt{T} + H^3S^2A$. Then it follows that
$$\sum_{t=1}^T\langle \bm{q_t}-\bm{\widehat q_t},\vec h\odot \bm{\ell_t}\rangle= O\left(\left(H^2 T +H^4S^2A\right)\left(\ln\frac{HSAT}{\delta}\right)^2 \right).$$
Therefore, we obtain
$$\sum_{t=1}^T\mathbb{E}\left[\langle \bm{n_t}, \bm{\ell_t}\rangle^2\mid \cH_t,P\right]=O\left(\left(H^2 T +H^4S^2A\right)\left(\ln\frac{HSAT}{\delta}\right)^2 \right).$$
Next, we apply \Cref{bernstein2} with $\lambda$ is set to $$\lambda=\frac{1}{\sqrt{H^2 T + H^4 S^2 A}}\leq \frac{1}{H}\leq \frac{1}{\langle \bm{n_t},\bm{\ell_t}\rangle}.$$
Then we get that with probability at least $1-\delta$,
\begin{align*}
\sum_{t=1}^T\langle \bm{q_t}- \bm{n_t}, \bm{\ell_t}\rangle&\leq \lambda \sum_{t=1}^T\mathbb{E}\left[\langle \bm{q_t}-\bm{n_t}, \bm{\ell_t}\rangle^2\mid \cH_t,P\right]+\frac{1}{\lambda} \ln\frac{1}{\delta}\\
&\leq \lambda\sum_{t=1}^T\mathbb{E}\left[\langle \bm{n_t}, \bm{\ell_t}\rangle^2\mid \cH_t,P\right]+\frac{1}{\lambda} \ln\frac{1}{\delta}\\
&=O\left(\left(H\sqrt{T} + H^2S\sqrt{A}\right)\left(\ln\frac{HSAT}{\delta}\right)^2\right).
\end{align*}
Similarly, we get that with probability at least $1-\delta$,
\begin{align*}
	\sum_{t=1}^T\langle \bm{n_t}- \bm{q_t}, \bm{\ell_t}\rangle&\leq \lambda \sum_{t=1}^T\mathbb{E}\left[\langle \bm{n_t}-\bm{q_t}, \bm{\ell_t}\rangle^2\mid \cH_t,P\right]+\frac{1}{\lambda} \ln\frac{1}{\delta}\\
	&\leq \lambda\sum_{t=1}^T\mathbb{E}\left[\langle \bm{n_t}, \bm{\ell_t}\rangle^2\mid \cH_t,P\right]+\frac{1}{\lambda} \ln\frac{1}{\delta}\\
	&=O\left(\left(H\sqrt{T} + H^2S\sqrt{A}\right)\left(\ln\frac{HSAT}{\delta}\right)^2\right),
\end{align*}
as required.
\end{proof}

\begin{proof}[\rm \bfseries Proof of \Cref{lemma:regret-term2}]
We closely follow the proof of~\citep[Theorem 6]{ssp-adversarial-unknown}.
For ease of notation, let us use notation $\pi_t$ for an arbitrary policy for episode $t$, $q_t$ denotes the occupancy measure $q^{P,\pi_t}$, and $\widehat q_t$ denotes the occupancy measure $q^{P_t,\pi_t}$.

Applying \Cref{lemma9} with function $\ell_t$,  we deduce that with probability at least $1-4\delta$,
\begin{align*}
\sum_{t=1}^T\left| \langle \bm{q_t}-\bm{\widehat q_t}, \bm{\ell_t}\rangle\right|
&=O\left(\left(\sqrt{HS^2A\left(\sum_{t=1}^T\langle \bm{q_t},\vec h\odot \bm{\ell_t} \rangle+H^3\sqrt{T}\right)}+H^2S^2A\right)\left(\ln\frac{HSAT}{\delta}\right)^2 \right)\\
&=O\left(\left(\sqrt{H^3S^2AT+H^4S^2A\sqrt{T}}+H^2S^2A\right)\left(\ln\frac{HSAT}{\delta}\right)^2 \right)\\
&=O\left(\left(\sqrt{H^3S^2AT+H^3S^2AT + H^5S^2A}+H^2S^2A\right)\left(\ln\frac{HSAT}{\delta}\right)^2 \right)\\
&=O\left(\left(H^{3/2}S\sqrt{AT} + H^{5/2}S\sqrt{A}+H^2S^2A\right)\left(\ln\frac{HSAT}{\delta}\right)^2 \right)
\end{align*}
where the second equality holds because $\langle \bm{q_t},\vec h\odot \bm{\ell_t} \rangle=O(H^2)$ and the third equality holds because $H^4S^2A\sqrt{T}=O(H^3S^2AT + H^5S^2A)$.
\end{proof}

\subsection{Bound on the Regret Term (II)}

In this section, we prove \Cref{lemma:regret-term1} that bounds the regret term (II). We follow the analysis of the online dual mirror descent algorithm due to~\citep[Theorem 1]{balseiro2022}. In our analysis, we need Lemmas 5.1 and 5.2. %

We consider 
$$\widehat L_t(\lambda) =\max_{\bm{q}\in \Delta(P,t)}\left\{\langle \bm{f_t},\bm{q}\rangle +\lambda(H\rho- \langle\bm{g_t},\bm{q}\rangle)\right\},\quad L_t(\lambda) =\max_{\bm{q}\in \Delta(P)}\left\{\langle \bm{f_t},\bm{q}\rangle +\lambda(H\rho- \langle\bm{g_t},\bm{q}\rangle)\right\}.$$
\begin{lemma}{\rm \citep[Proposition 1]{balseiro2022}}\label{lemma:duality}
	For any $\lambda \in\mathbb{R}_+$, we have $\opt(\vec\gamma) \leq \sum_{t=1}^ T L_t(\lambda).$
\end{lemma}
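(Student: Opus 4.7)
The plan is to apply the standard Lagrangian weak-duality argument to the linear programming reformulation of $\opt(\vec\gamma)$ established in~\eqref{reformulation}. Specifically, I would start from the identity
\[
\opt(\vec{\gamma})= \max_{\bm{q_1},\ldots,\bm{q_T}\in \Delta(P)}\left\{\sum_{t=1}^T\langle \bm{f_t},\bm{q_t}\rangle\ :\ \sum_{t=1}^T\langle \bm{g_t},\bm{q_t}\rangle\leq TH\rho\right\},
\]
which holds by virtue of the one-to-one correspondence between policies and occupancy measures given in Lemma~\ref{lemma:valid-occupancy}. Fix an arbitrary $\lambda\in\mathbb{R}_+$ and any feasible tuple $(\bm{q_1},\ldots,\bm{q_T})$ for the above program. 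Since $\lambda\geq 0$ and $TH\rho-\sum_{t=1}^T\langle \bm{g_t},\bm{q_t}\rangle\geq 0$ by feasibility, adding this nonnegative quantity weighted by $\lambda$ only increases the objective:
\[
\sum_{t=1}^T\langle \bm{f_t},\bm{q_t}\rangle \;\leq\; \sum_{t=1}^T\langle \bm{f_t},\bm{q_t}\rangle + \lambda\left(TH\rho-\sum_{t=1}^T\langle \bm{g_t},\bm{q_t}\rangle\right).
\]

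The next step is to distribute the budget across the episodes: because the right-hand side separates as
\[
\sum_{t=1}^T\Bigl(\langle \bm{f_t},\bm{q_t}\rangle + \lambda\bigl(H\rho-\langle \bm{g_t},\bm{q_t}\rangle\bigr)\Bigr),
\]
each term is upper bounded by the per-episode maximum over $\Delta(P)$, namely $L_t(\lambda)$. Summing these bounds yields $\sum_{t=1}^T\langle \bm{f_t},\bm{q_t}\rangle\leq \sum_{t=1}^T L_t(\lambda)$. Finally, taking the maximum of the left-hand side over all feasible $(\bm{q_1},\ldots,\bm{q_T})$ (the right-hand side is independent of this choice) gives $\opt(\vec\gamma)\leq \sum_{t=1}^T L_t(\lambda)$, as desired.

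There is essentially no main obstacle here: the proof is a textbook weak-duality argument, and the only subtlety is making sure to invoke the LP reformulation~\eqref{reformulation} rather than working directly with the policy-based expectation form~\eqref{opt}, so that the inner maximization is over the polytope $\Delta(P)$ and the Lagrangian decouples cleanly across episodes.
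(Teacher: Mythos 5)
Your proof is correct: it is the standard Lagrangian weak-duality argument, which is exactly the content of \citep[Proposition 1]{balseiro2022} that the paper cites for this lemma without reproducing a proof. The one step worth making explicit is the passage from the policy-based definition~\eqref{opt} to the occupancy-measure LP~\eqref{reformulation}, which you correctly attribute to \Cref{lemma:valid-occupancy}; beyond that, adding the nonnegative slack $\lambda(TH\rho-\sum_t\langle\bm{g_t},\bm{q_t}\rangle)$ and decoupling the maximization across episodes is precisely the intended argument.
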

Moreover, it follows from \Cref{lemma:relaxation} that with probability at least $1-4\delta$, 
$$\widehat L_t(\lambda_t)\geq L_t(\lambda_t),\quad \forall t\in[T],$$
which implies that
$$\langle \bm{f_t},\bm{\widehat q_t}\rangle\geq L_t(\lambda_t) -\lambda_t(H\rho- \langle\bm{g_t},\bm{\widehat q_t}\rangle),\quad \forall t\in[T].$$

\begin{lemma}\label{lemma:regret-term1}
	The following holds for the regret term (II).
	$$\mathbb{E}\left[\opt(\vec\gamma) - \sum_{t=1}^T \langle \bm{f_t}, \bm{\widehat q_t}\rangle\mid P \right]= O\left(\left(\frac{H^{3/2}}{\rho}S\sqrt{AT} +\frac{H^{5/2}}{\rho}S^2A \right)\left(\ln HSAT\right)^2\right)$$
	where the expectation is taken with respect to the randomness of the reward and resource consumption functions and the randomness in the trajectories of episodes. 
\end{lemma}
\begin{proof}
For $t\in[T]$, let $G_t$ denote the amount of resource consumed in episode $t$.
We define the stopping time $\tau$ of \Cref{alg:online-alloc-mdp-unknown} as 
$$\min\left\{t:\ \sum_{k=1}^tG_k + H\geq  TH\rho\right\}.$$
By definition, we have $TH\rho -\sum_{t=1}^{\tau-1} G_t > H$.  Since $g_t(s,a,h)\leq 1$ for any $(s,a,h,t)\in\cS\times\cA\times[H]\times [T]$, it follows that Algorithm 1 %
does not terminate until the end of episode $\tau$. Then we have
$$G_t = \langle\bm{n_t},\bm{g_t}\rangle,\quad t\leq \tau.$$

By \Cref{lemma:relaxation}, 
with probability at least $1-4\delta$, we have 
\begin{equation}\label{eq:upperbound}
\langle \bm{f_t},\bm{\widehat q_t}\rangle=\widehat L_t(\lambda_t) -\lambda_t(H\rho- \langle\bm{g_t},\bm{\widehat q_t}\rangle)\geq L_t(\lambda_t) -\lambda_t(H\rho- \langle\bm{g_t},\bm{\widehat q_t}\rangle),\quad \forall t\in[T]\end{equation}
where \begin{align*}\widehat L_t(\lambda) &=\max_{\bm{q}\in \Delta(P,t)}\left\{\langle \bm{f_t},\bm{q}\rangle +\lambda(H\rho- \langle\bm{g_t},\bm{q}\rangle)\right\},\\
L_t(\lambda) &=\max_{\bm{q}\in \Delta(P)}\left\{\langle \bm{f_t},\bm{q}\rangle +\lambda(H\rho- \langle\bm{g_t},\bm{q}\rangle)\right\}.
\end{align*}
Recall that the pair $(f_t,g_t)$ of reward and resource consumption functions follows distribution $\mathcal{D}$.
Then we define $\bar L(\lambda)$ as
$$\bar L(\lambda) = \mathbb{E}_{(f,g)\sim \mathcal{D}}\left[\max_{\bm{q}\in \Delta(P)}\left\{\langle \bm{f},\bm{q}\rangle +\lambda(H\rho- \langle\bm{g},\bm{q}\rangle)\right\}\right].$$
Since $(f_t,g_t)$ for $t\in[T]$ are i.i.d. with distribution $\mathcal{D}$, it follows that
$$ \frac{1}{T}\mathbb{E}\left[\sum_{t=1}^T L_t(\lambda)\right]=\bar L(\lambda)=\mathbb{E}_{(f,g)\sim \mathcal{D}}\left[\max_{\bm{q}\in \Delta(P)}\left\{\langle \bm{f},\bm{q}\rangle - \lambda \langle\bm{g},\bm{q}\rangle\right\}\right] +\lambda H\rho.$$
Recall that $\cG_t$ is the $\sigma$-algebra generated by the information up to episode $t-1$.
Consider 
$$Z_t=\sum_{k=1}^t\lambda_k(H\rho- \langle\bm{g_k},\bm{\widehat q_k}\rangle )- \sum_{k=1}^t \mathbb{E}\left[\lambda_k(H\rho- \langle\bm{g_k},\bm{\widehat q_k}\rangle )\mid \cG_{k-1}\right] $$
for $t\in[T]$. Then $Z_t$ is $\cG_t$-measurable and $\mathbb{E}\left[Z_{t+1}\mid \cG_t\right]= Z_t$. Therefore, $Z_1,\ldots, Z_T$ is a martingale. Since the stopping time $\tau$ is with respect to $\{\cG_t\}_{t\in[T]}$ and $\tau$ is bounded, the Optional Stopping Theorem implies that
$$\mathbb{E}\left[\sum_{t=1}^\tau \lambda_t(H\rho- \langle\bm{g_t},\bm{\widehat q_t}\rangle )\right] = \mathbb{E}\left[\sum_{t=1}^\tau \mathbb{E}\left[\lambda_t(H\rho- \langle\bm{g_t},\bm{\widehat q_t}\rangle )\mid \cG_{t-1}\right]\right].$$
Likewise, we can argue by the Optional Stopping Theorem that
$$\mathbb{E}\left[\sum_{t=1}^\tau \langle\bm{f_t},\bm{\widehat q_t}\rangle \right] = \mathbb{E}\left[\sum_{t=1}^\tau \mathbb{E}\left[\langle\bm{f_t},\bm{\widehat q_t}\rangle\mid \cG_{t-1}\right]\right].$$
Taking the conditional expectation with respect to $\cG_{t-1}$ of both sides of~\eqref{eq:upperbound}, it follows that
\begin{align*}
&\mathbb{E}\left[\langle \bm{f_t},\bm{\widehat q_t}\rangle\mid \cG_{t-1}\right]\\
&\geq \mathbb{E}\left[\max_{\bm{q}\in \Delta(P)}\left\{\langle \bm{f_t},\bm{q}\rangle +\lambda_t(H\rho- \langle\bm{g_t},\bm{q}\rangle)\right\}\mid\cG_{t-1}\right] -\mathbb{E}\left[\lambda_t(H\rho- \langle\bm{g_t},\bm{\widehat q_t}\rangle)\mid\cG_{t-1}\right]\\
&= \mathbb{E}\left[\mathbb{E}_{(f_t,g_t)\sim\mathcal{D}}\left[\max_{\bm{q}\in \Delta(P)}\left\{\langle \bm{f_t},\bm{q}\rangle +\lambda_t(H\rho- \langle\bm{g_t},\bm{q}\rangle)\right\}\right]\mid\cG_{t-1}\right] -\mathbb{E}\left[\lambda_t(H\rho- \langle\bm{g_t},\bm{\widehat q_t}\rangle)\mid\cG_{t-1}\right]\\
&=\mathbb{E}\left[\bar L(\lambda_t)\mid\cG_{t-1}\right] -\mathbb{E}\left[\lambda_t(H\rho- \langle\bm{g_t},\bm{\widehat q_t}\rangle)\mid\cG_{t-1}\right]\\
&=\bar L(\lambda_t) -\mathbb{E}\left[\lambda_t(H\rho- \langle\bm{g_t},\bm{\widehat q_t}\rangle)\mid\cG_{t-1}\right]
\end{align*}
where the first equality is due to the tower rule and the last equality holds because $\lambda_t$ is $\cG_{t-1}$-measurable. Therefore, 
\begin{align*}
    \mathbb{E}\left[\sum_{t=1}^\tau \langle\bm{f_t},\bm{\widehat q_t}\rangle \right]& \geq \mathbb{E}\left[\sum_{t=1}^\tau \bar L(\lambda_t)\right]-  \mathbb{E}\left[\sum_{t=1}^\tau \mathbb{E}\left[\lambda_t(H\rho- \langle\bm{g_t},\bm{\widehat q_t}\rangle )\mid \cG_{t-1}\right]\right]\\
    &=\mathbb{E}\left[\sum_{t=1}^\tau \bar L(\lambda_t)\right]-\mathbb{E}\left[\sum_{t=1}^\tau \lambda_t(H\rho- \langle\bm{g_t},\bm{\widehat q_t}\rangle )\right]
    \end{align*}
    where the second equality comes from the Optional Stopping Theorem. Furthermore, note that $L_t(\lambda)$ is the maximum of linear functions in terms of $\lambda$, so $L_t(\lambda)$ is  convex for any $t\in[T]$. Then $\bar L(\lambda)$ is also convex with respect to $\lambda$, and therefore,
    $$\mathbb{E}\left[\sum_{t=1}^\tau \bar L(\lambda_t)\right]\geq \mathbb{E}\left[\tau  \bar L\left(\frac{1}{\tau} \sum_{t=1}^\tau \lambda_t\right)\right].$$
    This implies that
    $$\mathbb{E}\left[\sum_{t=1}^\tau \langle\bm{f_t},\bm{\widehat q_t}\rangle \right]\geq\mathbb{E}\left[\tau  \bar L\left(\frac{1}{\tau} \sum_{t=1}^\tau \lambda_t\right)\right]- \mathbb{E}\left[\sum_{t=1}^\tau \lambda_t(H\rho- \langle\bm{g_t},\bm{\widehat q_t}\rangle )\right].$$
    Next, consider the second term on the right-hand side of this inequality:
    $$\sum_{t=1}^\tau \lambda_t(H\rho- \langle\bm{g_t},\bm{\widehat q_t}\rangle ).$$
    Let $w_t(\lambda)$ be defined as 
    $$w_t(\lambda)=\lambda(H\rho- \langle\bm{g_t},\bm{\widehat q_t}\rangle).$$
    Then the dual update rule $$\lambda_{t+1}=\max\left\{0,\lambda_t - \eta\left(H\rho - \langle \bm{g_t}, \bm{\widehat q_t}\rangle\right)\right\}$$
    corresponds to the online mirror descent algorithm applied to the linear functions $w_t(\lambda)$ for $t\in[\tau]$. Since  $|H\rho - \langle \bm{g_t}, \bm{\widehat q_t}\rangle|\leq 2H$, the standard analysis of online mirror descent (see \citep{Hazan}) gives us that
    $$\sum_{t=1}^\tau \lambda_t(H\rho- \langle\bm{g_t},\bm{\widehat q_t}\rangle )- \sum_{t=1}^\tau \lambda(H\rho- \langle\bm{g_t},\bm{\widehat q_t}\rangle )\leq 2H^2\eta \tau + \frac{1}{2\eta} (\lambda- \lambda_1)^2\leq 2H^2\eta T + \frac{1}{2\eta} (\lambda- \lambda_1)^2.$$
 Next, note that for any $\lambda \geq 0$,
    \begin{align*}
    \mathbb{E}\left[\opt(\vec \gamma)\right]&=\frac{T-\tau}{T}\mathbb{E}\left[\opt(\vec \gamma)\right]+\frac{\tau}{T}\mathbb{E}\left[\opt(\vec \gamma)\right]\\
    &\leq (T-\tau)H +\frac{\tau}{T}\mathbb{E}\left[\sum_{t=1}^T L_t\left(\lambda \right)\right]\\
    &= (T-\tau)H +\tau \bar L \left(\lambda\right)
     \end{align*}
     where the second inequality is implied by $\opt(\vec \gamma)\leq TH$ and \Cref{lemma:duality}. 
     In particular, we set $\lambda = \frac{1}{\tau}\sum_{t=1}^\tau \lambda_t$, and obtain
     $$\mathbb{E}\left[\opt(\vec \gamma)\right]\leq (T-\tau)H +\tau \bar L \left(\frac{1}{\tau}\sum_{t=1}^\tau \lambda_t\right).$$
     Then it follows that
     \begin{align*}
\mathbb{E}\left[\opt(\vec \gamma)- \sum_{t=1}^T\langle \bm{f_t},\bm{\widehat q_t}\rangle\mid P\right]&\leq \mathbb{E}\left[\opt(\vec \gamma)- \sum_{t=1}^\tau \langle \bm{f_t},\bm{\widehat q_t}\rangle\mid P\right]\\
&\leq \mathbb{E}\left[(T-\tau)H + \sum_{t=1}^\tau \lambda_t(H\rho- \langle\bm{g_t},\bm{\widehat q_t}\rangle ) \mid P\right]\\
&\leq \mathbb{E}\left[(T-\tau)H + \sum_{t=1}^\tau \lambda(H\rho- \langle\bm{g_t},\bm{\widehat q_t}\rangle ) + 2H^2\eta T+ \frac{1}{2\eta}(\lambda-\lambda_1)^2 \mid P\right]
     \end{align*}
     where the last inequality is from the online mirror descent analysis. 
     
     If $\tau =T$, then we set $\lambda =0$, in which case
     $$\mathbb{E}\left[\opt(\vec \gamma)- \sum_{t=1}^T\langle \bm{f_t},\bm{\widehat q_t}\rangle\mid P\right]\leq 2H^2\eta T + \frac{1}{2\eta}\lambda_1^2.$$

     If $\tau<T$, then we have $$\sum_{t=1}^\tau G_t + H =\sum_{t=1}^\tau \langle\bm{g_t},\bm{n_t}\rangle + H \geq TH\rho.$$
     In this case, we set $\lambda  =  1/\rho$. Then
     \begin{align*}
        \sum_{t=1}^\tau \lambda(H\rho- \langle\bm{g_t},\bm{\widehat q_t}\rangle )
    &= \tau H - \frac{1}{\rho}\sum_{t=1}^\tau \langle\bm{g_t},\bm{n_t}\rangle+\frac{1}{\rho}\sum_{t=1}^\tau \langle\bm{g_t},\bm{n_t}-\bm{q_t}\rangle+\frac{1}{\rho}\sum_{t=1}^\tau \langle\bm{g_t},\bm{q_t}-\bm{\widehat q_t}\rangle.
     \end{align*}
     By \Cref{lemma:regret-term2} and \Cref{lemma:regret-term3}, 
     with probability at least $1-14\delta$, we have
      \begin{align*}
 &\tau H - \frac{1}{\rho}\sum_{t=1}^\tau \langle\bm{g_t},\bm{n_t}\rangle+\frac{1}{\rho}\sum_{t=1}^\tau \langle\bm{g_t},\bm{n_t}-\bm{q_t}\rangle+\frac{1}{\rho}\sum_{t=1}^\tau \langle\bm{g_t},\bm{q_t}-\bm{\widehat q_t}\rangle\\
 &\leq \tau H - TH + \frac{H}{\rho} + O\left(\left(\frac{H^{3/2}}{\rho}S\sqrt{AT} +\frac{H^{5/2}}{\rho}S^2A\right)\left(\ln\frac{HSAT}{\delta}\right)^2 \right).
     \end{align*}
     In this case, we deduce that 
    \begin{align*}
    &\mathbb{E}\left[\opt(\vec \gamma)- \sum_{t=1}^T\langle \bm{f_t},\bm{\widehat q_t}\rangle\mid P\right]\\
    &\leq 2H^2\eta T + \frac{1}{2\eta}\left(\frac{1}{\rho}-\lambda_1\right)^2+O\left(\left(\frac{H^{3/2}}{\rho}S\sqrt{AT} +\frac{H^{5/2}}{\rho}S^2A\right)\left(\ln\frac{HSAT}{\delta}\right)^2 \right)\\
    &\leq 2H^2\eta T + \frac{1}{\eta\rho^2}\left(1+\lambda_1\right)^2+O\left(\left(\frac{H^{3/2}}{\rho}S\sqrt{AT} +\frac{H^{5/2}}{\rho}S^2A\right)\left(\ln\frac{HSAT}{\delta}\right)^2 \right)
    \end{align*}
    where the second inequality holds because $\rho<1$.
    Setting $$\eta = \frac{1}{\rho H\sqrt{T}},$$
    we deduce that 
    $$ \mathbb{E}\left[\opt(\vec \gamma)- \sum_{t=1}^T\langle \bm{f_t},\bm{\widehat q_t}\rangle\mid P\right]= O\left(\left(\frac{H^{3/2}}{\rho}S\sqrt{AT} +\frac{H^{5/2}}{\rho}S^2A\right)\left(\ln\frac{HSAT}{\delta}\right)^2 \right).$$
    Now we may set 
    $$\delta = \frac{1}{13HT}.$$
    Note that with probability at most ${1}/{HT}$,
    $$ \mathbb{E}\left[\opt(\vec \gamma)- \sum_{t=1}^T\langle \bm{f_t},\bm{\widehat q_t}\rangle\mid P\right]\leq \opt(\vec \gamma)\leq HT.$$ 
    Moreover, with probability at least $1-1/HT$, 
    \begin{align*}\mathbb{E}\left[\opt(\vec \gamma)- \sum_{t=1}^T\langle \bm{f_t},\bm{\widehat q_t}\rangle\mid P\right]&=O\left(\left(\frac{H^{3/2}}{\rho}S\sqrt{AT} +\frac{H^{5/2}}{\rho}S^2A\right)\left(\ln\frac{H^2SAT^2}{13}\right)^2 \right)\\
    &=O\left(\left(\frac{H^{3/2}}{\rho}S\sqrt{AT} +\frac{H^{5/2}}{\rho}S^2A\right)\left(\ln{HSAT}\right)^2 \right).
    \end{align*}
    
    Then it follows that
    $$ \mathbb{E}\left[\opt(\vec \gamma)- \sum_{t=1}^T\langle \bm{f_t},\bm{\widehat q_t}\rangle\mid P\right]= O\left(\left(\frac{H^{3/2}}{\rho}S\sqrt{AT} +\frac{H^{5/2}}{\rho}S^2A\right)\left(\ln{HSAT}\right)^2 \right),$$
    as required.
\end{proof}

\subsection{Proof of \Cref{theorem:regret1}
}

Now we are ready to prove \Cref{theorem:regret1}. Recall that
\begin{align*}
	&\regret\left(\vec\gamma,\vec\pi\right)=\opt(\vec\gamma)-\reward\left(\vec\gamma,\vec\pi\right)\\
	&=\opt(\vec\gamma) - \sum_{t=1}^T \langle \bm{f_t}, \bm{\widehat q_t}\rangle + \sum_{t=1}^T \langle \bm{f_t}, \bm{\widehat q_t}-\bm{q_t}\rangle + \sum_{t=1}^T \langle \bm{f_t}, \bm{q_t}\rangle-\sum_{t=1}^{T}\sum_{h=1}^{H}f_{t}\left(s_{h}^{P,\pi_t},a_{h}^{P,\pi_t},h\right).
\end{align*} Then it follows from Lemmas~\ref{lemma:regret-term1}, \ref{lemma:regret-term3}, and \ref{lemma:regret-term2} 
that
$$ \mathbb{E}\left[\regret\left(\vec\gamma,\vec\pi\right)\mid P\right]= O\left(\left(\frac{H^{3/2}}{\rho}S\sqrt{AT} +\frac{H^{5/2}}{\rho}S^2A\right)\left(\ln{HSAT}\right)^2 \right),$$
as required.

 \section{REGRET ANALYSIS FOR THE DECIDE-THEN-OBSERVE REGIME}
 
 Suppose that the statements of \Cref{lemma:estimator} 
 hold, which is the case with probability at least $1-2\delta$. 
 
 Recall that
 \begin{align*}
 	\regret\left(\vec\gamma,\vec\pi\right)
 	&=\underbrace{\opt(\vec\gamma) - \sum_{t=1}^T \langle \bm{f}, \bm{q^*}\rangle}_{\text{(I)}} +\underbrace{\sum_{t=1}^T \langle \bm{f}, \bm{q^*}\rangle-\sum_{t=1}^T \langle \bm{\widehat f_t}, \bm{\widehat q_t}\rangle}_{\text{(II)}} \\
 	&\quad + \underbrace{\sum_{t=1}^T \langle \bm{\widehat f_t}, \bm{\widehat q_t}-\bm{q_t}\rangle}_{\text{(III)}}+\underbrace{\sum_{t=1}^T \langle \bm{\widehat f_t}-\bm{f}, \bm{q_t}\rangle}_{\text{(IV)}} + \underbrace{\sum_{t=1}^T \langle \bm{f}, \bm{q_t}\rangle- \reward\left(\vec\gamma,\vec\pi\right)}_{\text{(V)}}.
 \end{align*}

\subsection{Bound on the regret term (I)}
\begin{lemma}\label{lemma:second-term1}
With probability at least $1-7\delta$, 
$$\opt(\vec\gamma) - \sum_{t=1}^T \langle \bm{f}, \bm{q^*}\rangle=O\left(\left(H\sqrt{T} + H^2S\sqrt{A}\right)\left(\ln\frac{HSAT}{\delta}\right)^2\right).$$
\end{lemma}
\begin{proof}
Note that term (I) equals $$\sum_{t=1}^T\langle\bm{f_t},\bm{n^*}-\bm{q^*}\rangle+\sum_{t=1}^T\langle\bm{f_t}-\bm{f},\bm{q^*}\rangle.$$ 
Note that $f_1,\ldots, f_T$ are i.i.d. and that 
$$\mathbb{E}\left[\langle\bm{f_t},\bm{q^*}\rangle\right] =\langle\bm{f},\bm{q^*}\rangle.$$
Then it follows from \Cref{cohen-concentration0} that
$$\sum_{t=1}^T\langle\bm{f_t}-\bm{f},\bm{q^*}\rangle\leq 2\sqrt{H^2T \ln \frac{2T}{\delta}} + H\ln\frac{2T}{\delta}$$
holds wity probability at least $1-\delta$. With probability at least $1-6\delta$, \begin{align*}
\sum_{t=1}^T\langle\bm{f_t},\bm{n^*}-\bm{q^*}\rangle
	&=O\left(\left(H\sqrt{T} + H^2S\sqrt{A}\right)\left(\ln\frac{HSAT}{\delta}\right)^2\right),
\end{align*}
as required.
\end{proof}

\subsection{Proofs of Lemmas \ref{lemma:second-term4} and \ref{lemma:second-term5}
}

\begin{proof}[Proof of \Cref{lemma:second-term4}]
	Suppose that the statements of \Cref{lemma:estimator} 
	hold, which is the case with probability at least $1-2\delta$. Then it follows that
	\begin{align*}
		&\sum_{t=1}^T \langle \bm{\widehat f_t}-\bm{f}, \bm{q_t}\rangle,\ \sum_{t=1}^T \langle\bm{g}- \bm{\widehat g_t}, \bm{q_t}\rangle
		\\
		&\leq 8\underbrace{\sum_{t=1}^T \sum_{h=1}^H \sum_{(s,a)\in\cS\times\cA} q_t(s,a,h)\sqrt{R_t(s,a,h) f(s,a,h)}}_{\text{Term 1}} + 34\underbrace{\sum_{t=1}^T \sum_{h=1}^H \sum_{(s,a)\in\cS\times\cA} q_t(s,a,h)R_t(s,a,h)}_{\text{Term 2}}.
	\end{align*}
	We first bound Term 2 under both the full information setting and the bandit feedback setting. Let $x_t(s,a,h)$ be the probability that $f_t(s,a,h)$ and $g_t(s,a,h)$ are observed for state-action paper $(s,a)\in\cS\times \cA$ at step $h\in[H]$ of episode $t\in[T]$. Note that
	$$q_t(s,a,h)R_t(s,a,h)=  \frac{q_t(s,a,h)}{C_t(s,a,h)}\ln\frac{2HSAT}{\delta} \leq \frac{x_t(s,a,h)}{C_t(s,a,h)}\ln\frac{2HSAT}{\delta}$$
	and that
	$$\frac{x_t(s,a,h)}{C_t(s,a,h)}=\mathbb{E}\left[\frac{\mathbbm{1}_t(s,a,h)}{C_t(s,a,h)}\mid \cG_t, P\right]$$
	where $\cG_t$ is the $\sigma$-algebra generated by the information up to episode $t-1$. Note that by \Cref{cohen-concentration}, we deduce that
	$$\sum_{t=1}^T\sum_{h=1}^H\sum_{(s,a)\in\cS\times\cA}\frac{x_t(s,a,h)}{C_t(s,a,h)}\leq \sum_{t=1}^T\sum_{h=1}^H\sum_{(s,a)\in\cS\times\cA} \mathbb{E}\left[\frac{\mathbbm{1}_t(s,a,h)}{C_t(s,a,h)}\mid \cG_t, P\right]+4HSA\ln\frac{2HSAT}{\delta}$$
	holds with probability at least $1-\delta$. Under both the full information setting and the bandit feedback setting, we have
	$$\sum_{t=1}^T \mathbb{E}\left[\frac{\mathbbm{1}_t(s,a,h)}{C_t(s,a,h)}\mid \cG_t, P\right]\leq \sum_{t=1}^T \frac{1}{\max\{1,t-1\}}= O\left(\ln\frac{HSAT}{\delta}\right).$$
	Therefore, Term 2 can be bounded as
	$$\text{Term 2}=O\left(HSA\left(\ln\frac{HSAT}{\delta}\right)^2\right).$$
	For Term 1, note that
	\begin{align*}
		&\sum_{t=1}^T \sum_{h=1}^H \sum_{(s,a)\in\cS\times\cA} q_t(s,a,h)\sqrt{R_t(s,a,h) f(s,a,h)}\\&\leq \sqrt{\sum_{t=1}^T \sum_{h=1}^H \sum_{(s,a)\in\cS\times\cA} q_t(s,a,h)}\sqrt{\sum_{t=1}^T \sum_{h=1}^H \sum_{(s,a)\in\cS\times\cA} q_t(s,a,h)f(s,a,h) R_t(s,a,h)} \\
		&\leq \sqrt{HT\sum_{t=1}^T\sum_{h=1}^H \sum_{(s,a)\in\cS\times\cA} q_t(s,a,h)R_t(s,a,h)}
	\end{align*}
	where the first inequality is due to the Cauchy-Schwarz inequality and the second inequality holds because $ \sum_{(s,a)\in\cS\times\cA} q_t(s,a,h)=1$ and $f(s,a,h)\leq 1$. Under the full information setting, we have $R_t(s,a,h) =\ln(2HSAT/\delta)/\max\{1, t-1\}$. Since  $ \sum_{(s,a)\in\cS\times\cA} q_t(s,a,h)=1$, it follows that
	\begin{align*}
		\sqrt{HT\sum_{t=1}^T\sum_{h=1}^H \sum_{(s,a)\in\cS\times\cA} q_t(s,a,h)R_t(s,a,h)}&\leq \sqrt{H^2T \ln\frac{2HSAT}{\delta} \sum_{t=1}^T \frac{1}{\max\{1,t-1\}}}\\
		&= O\left(H\sqrt{T}\ln \frac{HSAT}{\delta}\right).
	\end{align*}
	Under the bandit feedback setting, 
	\begin{align*}
		\sqrt{HT\sum_{t=1}^T\sum_{h=1}^H \sum_{(s,a)\in\cS\times\cA} q_t(s,a,h)R_t(s,a,h)}= \sqrt{HT\cdot \text{Term 2}}=O\left(H\sqrt{SAT}\ln \frac{HSAT}{\delta}\right).
	\end{align*}
	Therefore, 
	$$\sum_{t=1}^T \langle \bm{\widehat f_t} - \bm{f}, \bm{q_t}\rangle,\ \sum_{t=1}^T \langle\bm{g}- \bm{\widehat g_t}, \bm{q_t}\rangle=\begin{cases}
		O\left((H\sqrt{T} + HSA)\left(\ln \frac{HSAT}{\delta}\right)^2\right),&\text{the full information setting},\\
		O\left((H\sqrt{SAT} + HSA)\left(\ln \frac{HSAT}{\delta}\right)^2\right),&\text{the bandit feedback setting},
	\end{cases}$$
	as required.
\end{proof}

\begin{proof}[Proof of \Cref{lemma:second-term5}]
Recall that $\cG_t$ is the $\sigma$-algebra generated by the information up to episode $t-1$. By \Cref{lemma2},
we have $\mathbb{E}\left[\langle \bm{n_t}, \bm{f_t}\rangle^2\mid f_t, \pi_t, P\right]\leq  2\langle \bm{q_t},\vec h\odot \bm{f_t}\rangle$, which implies that
$$\sum_{t=1}^T\mathbb{E}\left[\langle \bm{n_t}, \bm{f_t}\rangle^2\mid  \cG_t, P\right]\leq  2\sum_{t=1}^T\mathbb{E}\left[\langle \bm{q_t},\vec h\odot \bm{f_t}\rangle\mid \cG_t,P\right]=2\sum_{t=1}^T \langle \bm{q_t},\vec h\odot \bm{f}\rangle$$
because $\pi_t$ is $\cG_t$-measurable.
Therefore, it follows that
$$\sum_{t=1}^T\mathbb{E}\left[\langle \bm{n_t}, \bm{f_t}\rangle^2\mid  \cG_t, P\right]\leq 2\underbrace{\sum_{t=1}^T \langle \bm{q_t},\vec h\odot (\bm{f}-\bm{\widehat f_t})\rangle }_{\text{Term 1}}+2\underbrace{\sum_{t=1}^T \langle \bm{q_t}- \bm{\widehat q_t},\vec h\odot \bm{\widehat f_t}\rangle}_{\text{Term 2}}+ 2\underbrace{\sum_{t=1}^T \langle \bm{\widehat q_t},\vec h\odot \bm{\widehat f_t}\rangle}_{\text{Term 3}}.$$ 

Since $f(s,a,h)-\widehat f_t(s,a,h)\leq 0$ for any $(s,a,h)\in \cS\times \cA\times[H]$ by \Cref{lemma:estimator} 
\begin{align*}
	\text{Term 1} &= \sum_{t=1}^T \sum_{h=1}^H h\sum_{(s,a)\in\cS\times \cA} q_t(s,a,h) \left(f(s,a,h)-\widehat f_t(s,a,h)\right)\leq 0
\end{align*}
For Term 2, applying \Cref{lemma9}
to functions $(\vec h\odot \widehat f_t)/H$, we obtain 
\begin{align*}
	\text{Term 2}&=O\left(H\left(\sqrt{HS^2A\left(\sum_{t=1}^T\frac{1}{H}\langle \bm{q_t},\vec h\odot(\vec h \odot \bm{\widehat f_t}) \rangle+H^3\sqrt{T}\right)}+H^2S^2A\right)\left(\ln\frac{HSAT}{\delta}\right)^2 \right)\\
	&=O\left(\left(\sqrt{H^3S^2A\left(\sum_{t=1}^T\langle \bm{q_t},\vec h \odot \bm{\widehat f_t} \rangle+H^5\sqrt{T}\right)}+H^3S^2A\right)\left(\ln\frac{HSAT}{\delta}\right)^2 \right)\\
	&=O\left(\left(\sqrt{H^5S^2AT}+H^3S^2A\right)\left(\ln\frac{HSAT}{\delta}\right)^2 \right)\\
	&=O\left(\left(H^2T + H^3 S^2A\right)\left(\ln\frac{HSAT}{\delta}\right)^2 \right)
\end{align*}
For Term 3, note that
$$\sum_{t=1}^T \langle \bm{\widehat q_t},\vec h\odot \bm{\widehat f_t}\rangle\leq H^2T.$$
Based on the upper bounds on Terms 1, 2, and 3, we obtain
$$\sum_{t=1}^T\mathbb{E}\left[\langle \bm{n_t}, \bm{f_t}\rangle^2\mid  \cG_t, P\right]=O\left(\left(H^2T + H^3 S^2A\right)\left(\ln\frac{HSAT}{\delta}\right)^2 \right).$$
Moreover,
$$\sum_{t=1}^T \langle \bm{f}, \bm{q_t}\rangle- \reward\left(\vec\gamma,\vec\pi\right)= \sum_{t=1}^T \left(\langle \bm{f}, \bm{q_t}\rangle  - \langle \bm{f_t}, \bm{n_t}\rangle\right),$$
and $Y_t=\langle \bm{f}, \bm{q_t}\rangle  - \langle \bm{f_t}, \bm{n_t}\rangle$ for $t\in[T]$ give rise to a martingale difference sequence as 
$\mathbb{E}\left[Y_t\mid \cG_t, P\right] = 0$. By \Cref{bernstein2} with 
$$\lambda = \frac{1}{\sqrt{H^2T + H^3S^2A}}\leq \frac{1}{H},$$
we obtain
$$\sum_{t=1}^T \langle \bm{f}, \bm{q_t}\rangle- \sum_{t=1}^T \langle \bm{f_t}, \bm{n_t}\rangle= O\left(\left(H\sqrt{T} + H^{3/2} S\sqrt{A}\right)\left(\ln\frac{HSAT}{\delta}\right)^2 \right).$$

Next, again by  \Cref{lemma2},
we have $\mathbb{E}\left[\langle \bm{n_t}, \bm{g_t}\rangle^2\mid g_t, \pi_t, P\right]\leq  2\langle \bm{q_t},\vec h\odot \bm{f_t}\rangle$, which implies that
$$\sum_{t=1}^T\mathbb{E}\left[\langle \bm{n_t}, \bm{g_t}\rangle^2\mid  \cG_t, P\right]\leq  2\sum_{t=1}^T\mathbb{E}\left[\langle \bm{q_t},\vec h\odot \bm{g_t}\rangle\mid \cG_t,P\right]=2\sum_{t=1}^T \langle \bm{q_t},\vec h\odot \bm{g}\rangle$$
because $\pi_t$ is $\cG_t$-measurable.
Therefore, it follows that
$$\sum_{t=1}^T\mathbb{E}\left[\langle \bm{n_t}, \bm{g_t}\rangle^2\mid  \cG_t, P\right]\leq 2\underbrace{\sum_{t=1}^T \langle \bm{q_t},\vec h\odot (\bm{g}-\bm{\widehat g_t})\rangle }_{\text{Term 4}}+2\underbrace{\sum_{t=1}^T \langle \bm{q_t}- \bm{\widehat q_t},\vec h\odot \bm{\widehat g_t}\rangle}_{\text{Term 5}}+ 2\underbrace{\sum_{t=1}^T \langle \bm{\widehat q_t},\vec h\odot \bm{\widehat g_t}\rangle}_{\text{Term 6}}.$$ 
Note that by \Cref{lemma:second-term5},
\begin{align*}
	\text{Term 4}&\leq H\sum_{t=1}^T \langle \bm{q_t}, \bm{g}-\bm{\widehat g_t}\rangle\\
	&=\begin{cases}
		O\left((H^2\sqrt{T} + H^2SA)\left(\ln \frac{HSAT}{\delta}\right)^2\right),&\text{the full information setting},\\
		O\left((H^2\sqrt{SAT} + H^2SA)\left(\ln \frac{HSAT}{\delta}\right)^2\right),&\text{the bandit feedback setting}
	\end{cases}
\end{align*}
Terms 5 and 6 can be bounded similarly as terms 2 and 3, respectively. Therefore, we deduce
\begin{align*}
	\text{Term 5}&=O\left(\left(H^2T + H^3 S^2A\right)\left(\ln\frac{HSAT}{\delta}\right)^2 \right)\\
		\text{Term 6}&\leq H^2T.
\end{align*}
Then it follows that
$$\sum_{t=1}^T\mathbb{E}\left[\langle \bm{n_t}, \bm{g_t}\rangle^2\mid  \cG_t, P\right]=O\left(\left(H^2T\sqrt{SA} + H^3 S^2A\right)\left(\ln\frac{HSAT}{\delta}\right)^2 \right).$$
Moreover, $Y_t=  \langle \bm{g_t}, \bm{n_t}\rangle-\langle \bm{g}, \bm{q_t}\rangle$ for $t\in[T]$ give rise to a martingale difference sequence as 
$\mathbb{E}\left[Y_t\mid \cG_t, P\right] = 0$. By \Cref{bernstein2} with 
$$\lambda = \frac{1}{\sqrt{H^2T\sqrt{SA} + H^3S^2A}}\leq \frac{1}{H},$$
we obtain
$$\sum_{t=1}^T \langle \bm{g_t}, \bm{n_t}\rangle-\sum_{t=1}^T \langle \bm{g}, \bm{q_t}\rangle= O\left(\left(H\sqrt{SAT} + H^{3/2} S\sqrt{A}\right)\left(\ln\frac{HSAT}{\delta}\right)^2 \right),$$
as required.
\end{proof}

\subsection{Bound on the Regret Term (II)}

\begin{lemma}\label{lemma:second-term2}
Suppose that the statements of \Cref{lemma:estimator} 
hold. Then the following holds for the regret term (II).
	$$\sum_{t=1}^T \langle \bm{f}, \bm{q^*}\rangle-\sum_{t=1}^T \langle \bm{\widehat f_t}, \bm{\widehat q_t}\rangle= O\left(\left(\frac{H^{3/2}}{\rho}S\sqrt{AT} +\frac{H^{5/2}}{\rho}S^2A \right)\left(\ln \frac{HSAT}{\delta}\right)^2\right)$$
with probability at least $1-4\delta$.
\end{lemma}
\begin{proof}
For $t\in[T]$, let $G_t$ denote the amount of resource consumed in episode $t$.
We define the stopping time $\tau$ of \Cref{alg:online-alloc-mdp-unknown} 
as 
$$\min\left\{t:\ \sum_{k=1}^tG_k + H\geq  TH\rho\right\}.$$
By definition, we have $TH\rho -\sum_{t=1}^{\tau-1} G_t > H$.  Since $G_t\leq H$ for any $t\in[T]$, it follows that $TH\rho -\sum_{t=1}^{\tau} G_t > 0$, and therefore, \Cref{alg:online-alloc-mdp-unknown} 
does not terminate until the end of episode $\tau$. Then we have
$$G_t = \langle\bm{n_t},\bm{g_t}\rangle,\quad t\leq \tau.$$
Note that
\begin{align*}
\langle \bm{f}, \bm{q^*}\rangle&\leq \langle \bm{f}, \bm{q^*}\rangle + \lambda_t \left(H\rho - \langle \bm{g}, \bm{q^*}\rangle\right)\\
&\leq \langle \bm{\widehat f_t}, \bm{q^*}\rangle + \lambda_t \left(H\rho - \langle \bm{\widehat g_t}, \bm{q^*}\rangle\right)\\
&\leq \max_{\bm{q}\in \Delta(P,t)}\left\{\langle \bm{\widehat f_t}, \bm{q}\rangle + \lambda_t \left(H\rho - \langle \bm{\widehat g_t}, \bm{q}\rangle\right)\right\}\\
&=\langle \bm{\widehat f_t}, \bm{\widehat q_t}\rangle + \lambda_t \left(H\rho - \langle \bm{\widehat g_t}, \bm{\widehat q_t}\rangle\right).
\end{align*}
Then it follows that
\begin{align*}\sum_{t=1}^T\langle \bm{f}, \bm{q^*}\rangle - \sum_{t=1}^T \langle \bm{\widehat f_t}, \bm{\widehat q_t}\rangle&\leq \sum_{t=1}^T\langle \bm{f}, \bm{q^*}\rangle - \sum_{t=1}^\tau \langle \bm{\widehat f_t}, \bm{\widehat q_t}\rangle\\
	&\leq \sum_{t=\tau+1}^T \langle \bm{f},\bm{q^*}\rangle + \sum_{t=1}^\tau\langle \bm{f}, \bm{q^*}\rangle - \sum_{t=1}^\tau \langle \bm{\widehat f_t}, \bm{\widehat q_t}\rangle\\
	&\leq (T-\tau)H + \sum_{t=1}^\tau \lambda_t\left(H\rho - \langle \bm{\widehat g_t}, \bm{\widehat q_t}\rangle\right).
\end{align*}
Then the dual update rule $$\lambda_{t+1}=\max\left\{0,\lambda_t - \eta\left(H\rho - \langle \bm{\widehat g_t}, \bm{\widehat q_t}\rangle\right)\right\}$$
corresponds to the online mirror descent algorithm applied to the linear functions $w_t(\lambda)=\lambda(H\rho - \langle \bm{\widehat g_t},\bm{\widehat q_t}\rangle)$ for $t\in[\tau]$. Since  $|H\rho - \langle \bm{g_t}, \bm{\widehat q_t}\rangle|\leq 2H$, the standard analysis of online mirror descent (see \citep{Hazan}) gives us that
$$\sum_{t=1}^\tau \lambda_t(H\rho- \langle\bm{\widehat g_t},\bm{\widehat q_t}\rangle )- \sum_{t=1}^\tau \lambda(H\rho- \langle\bm{\widehat g_t},\bm{\widehat q_t}\rangle )\leq 2H^2\eta \tau + \frac{1}{2\eta} (\lambda- \lambda_1)^2\leq 2H^2\eta T + \frac{1}{2\eta} (\lambda- \lambda_1)^2.$$
Then we deduce that
\begin{align*}\sum_{t=1}^T\langle \bm{f}, \bm{q^*}\rangle - \sum_{t=1}^T \langle \bm{\widehat f_t}, \bm{\widehat q_t}\rangle&\leq  (T-\tau)H + \sum_{t=1}^\tau \lambda\left(H\rho - \langle \bm{\widehat g_t}, \bm{\widehat q_t}\rangle\right)+2H^2\eta T + \frac{1}{2\eta} (\lambda- \lambda_1)^2.
\end{align*}
If $\tau = T$, then we set $\lambda=0$, in which case
$$\sum_{t=1}^T\langle \bm{f}, \bm{q^*}\rangle - \sum_{t=1}^T \langle \bm{\widehat f_t}, \bm{\widehat q_t}\rangle\leq 2H^2\eta T + \frac{1}{2\eta} (\lambda- \lambda_1)^2.$$
If $\tau<T$, then we have
$$\sum_{t=1}^\tau G_t + H =\sum_{t=1}^\tau \langle\bm{g_t},\bm{n_t}\rangle + H \geq TH\rho.$$
In this case, we set $\lambda  =  1/\rho$. Then
\begin{align*}
	\sum_{t=1}^\tau \lambda(H\rho- \langle\bm{\widehat g_t},\bm{\widehat q_t}\rangle)
	&= \underbrace{\tau H - \frac{1}{\rho}\sum_{t=1}^\tau \langle\bm{ g_t},\bm{n_t}\rangle}_{\text{Term 1}}+ \frac{1}{\rho}\underbrace{\sum_{t=1}^\tau\left( \langle\bm{ g_t},\bm{n_t}\rangle-\langle\bm{g},\bm{q_t}\rangle\right)}_{\text{Term 2}}+\frac{1}{\rho}\underbrace{\sum_{t=1}^\tau\langle\bm{g}-\bm{\widehat g_t},\bm{q_t}\rangle}_{\text{Term 3}}+\frac{1}{\rho}\underbrace{\sum_{t=1}^\tau \langle\bm{\widehat g_t},\bm{q_t}-\bm{\widehat q_t}\rangle}_{\text{Term 4}}.
\end{align*}
Note that
$$\text{Term 1} \leq\tau H - \frac{1}{\rho}(TH\rho -H)\leq-(T-\tau)H+ \frac{H}{\rho}.$$
By Lemma 5.4, %
$$\text{Term 2} =O\left((H\sqrt{SAT} + HSA)\left(\ln \frac{HSAT}{\delta}\right)^2\right).$$
By Lemma 5.3, %
$$\text{Term 3} = O\left(\left(H\sqrt{SAT} + H^{3/2} S\sqrt{A}\right)\left(\ln\frac{HSAT}{\delta}\right)^2 \right).$$
By \Cref{lemma:regret-term2}, with probability at least $1-4\delta$, 
$$\text{Term 4} =O\left(\left(H^{3/2}S\sqrt{AT} +H^{5/2}S^2A\right)\left(\ln\frac{HSAT}{\delta}\right)^2 \right).$$
Therefore, it follows that
\begin{align*}\sum_{t=1}^T\langle \bm{f}, \bm{q^*}\rangle - \sum_{t=1}^T \langle \bm{\widehat f_t}, \bm{\widehat q_t}\rangle&=O\left(\left(H^{3/2}S\sqrt{AT} +H^{5/2}S^2A\right)\left(\ln\frac{HSAT}{\delta}\right)^2 \right),
\end{align*}
as required.
\end{proof}
\subsection{Proof of \Cref{theorem:regret2}}

Recall that
\begin{align*}
	\regret\left(\vec\gamma,\vec\pi\right)
	&=\underbrace{\opt(\vec\gamma) - \sum_{t=1}^T \langle \bm{f}, \bm{q^*}\rangle}_{\text{(I)}}+\underbrace{\sum_{t=1}^T \langle \bm{f}, \bm{q^*}\rangle-\sum_{t=1}^T \langle \bm{\widehat f_t}, \bm{\widehat q_t}\rangle}_{\text{(II)}} \\
	&\quad + \underbrace{\sum_{t=1}^T \langle \bm{\widehat f_t}, \bm{\widehat q_t}-\bm{q_t}\rangle}_{\text{(III)}}+\underbrace{\sum_{t=1}^T \langle \bm{\widehat f_t}-\bm{f}, \bm{q_t}\rangle}_{\text{(IV)}} + \underbrace{\sum_{t=1}^T \langle \bm{f}, \bm{q_t}\rangle- \reward\left(\vec\gamma,\vec\pi\right)}_{\text{(V)}}.
\end{align*}

With probability at least $1-2\delta$, 
the statements of Lemma 3.4 %
hold. Then it follows from Lemmas 5.4 and 5.5 %
\begin{align*}
\text{Term (IV)}&=O\left((H\sqrt{SAT} + HSA)\left(\ln \frac{HSAT}{\delta}\right)^2\right),\\
\text{Term (V)}&=O\left(\left(H\sqrt{SAT} + H^{3/2} S\sqrt{A}\right)\left(\ln\frac{HSAT}{\delta}\right)^2 \right).
\end{align*}
Moreover, by \Cref{lemma:second-term2}, 
\begin{align*}
\text{Term (II)}&=O\left(\left(\frac{H^{3/2}}{\rho}S\sqrt{AT} +\frac{H^{5/2}}{\rho}S^2A \right)\left(\ln \frac{HSAT}{\delta}\right)^2\right).
\end{align*}
It follows from \Cref{lemma:second-term1} that with probability at least $1-7\delta$, we have
\begin{align*}
\text{Term (I)}&=O\left(\left(H\sqrt{T} + H^2S\sqrt{A}\right)\left(\ln\frac{HSAT}{\delta}\right)^2\right).
\end{align*}
Lastly, Lemma 5.2 %
with probability at least $1-4\delta$, we have
\begin{align*}
\text{Term (III)}&=O\left(\left(H^{3/2}S\sqrt{AT} +H^{5/2}S^2A\right)\left(\ln\frac{HSAT}{\delta}\right)^2 \right).
\end{align*}
Hence, by taking the union bound, 
$$\regret(\vec\gamma,\vec\pi)=O\left(\left(\frac{H^{3/2}}{\rho}S\sqrt{AT} +\frac{H^{5/2}}{\rho}S^2A \right)\left(\ln \frac{HSAT}{\delta}\right)^2\right)$$
with probability at least $1-13\delta$.

\section{CONCENTRATION INEQUALITIES}

\begin{lemma}{\rm \citep[Theorem 4]{Maurer-bernstein}}\label{bernstein} Let $Z_1,\ldots, Z_n\in[0,1]$ be i.i.d. random variables with mean $z$, and let $\delta>0$. Then with probability at least $1-\delta$,  
$$z- \frac{1}{n}\sum_{j=1}^n Z_j \leq \lambda \sqrt{\frac{2 V_n\ln(2/\delta)}{n}} + \frac{7\ln (2/\delta)}{3(n-1)}$$
where $V_n$ is the sample variance given by
$$ V_n=\frac{1}{n(n-1)}\sum_{1\leq j<k\leq n} (Z_j-Z_k)^2.$$
\end{lemma}

Next, we need the following Bernstein-type concentration inequality for martingales due to \citep{beygelzimer11a}. We take the version used in~\citep[Lemma 9]{Jin2020}.
\begin{lemma}{\rm \citep[Theorem 1]{beygelzimer11a}}\label{bernstein2}
Let $Y_1,\ldots, Y_T$ be a martingale difference sequence with respect to a filtration $\cF_1,\ldots, \cF_T$. Assume that $Y_t\leq R$ almost surely for all $t\in [T]$. Then for any $\delta\in(0,1)$ and $\lambda\in(0,1/R]$, with probability at least $1-\delta$, we have
$$\sum_{t=1}^T Y_t\leq \lambda \sum_{t=1}^T\mathbb{E}\left[Y_t^2\mid \cF_t\right]+ \frac{\ln(1/\delta)}{\lambda}.$$
\end{lemma}

\begin{lemma}[Azuma's inequality]\label{azuma}
Let $Y_1,\ldots, Y_T$ be a martingale difference sequence with respect to a filtration $\cF_1,\ldots, \cF_T$. Assume that $|Y_t|\leq B$ for $t\in[T]$. Then with probability at least $1-\delta$, we have
$$\left|\sum_{t=1}^T Y_t\right|\leq B\sqrt{2T\ln(2/\delta)}.$$
\end{lemma}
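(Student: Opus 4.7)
The plan is to prove Azuma's inequality by the standard Chernoff--Cram\'er exponential moment method, using the martingale difference property together with conditional Hoeffding to obtain a subgaussian tail for $S_T := \sum_{t=1}^T Y_t$. First I would apply Hoeffding's lemma conditionally: since $Y_t$ is $\cF_t$-measurable, satisfies $\mathbb{E}[Y_t \mid \cF_{t-1}] = 0$ by the martingale difference hypothesis, and takes values in $[-B, B]$, one has
\[
\mathbb{E}\bigl[e^{s Y_t} \mid \cF_{t-1}\bigr] \leq e^{s^2 B^2 / 2} \qquad \text{for all } s \in \mathbb{R}.
\]
The classical Hoeffding argument (convex-combination bound on $e^{sy}$ on $[-B,B]$, followed by a log-mgf estimate) goes through unchanged in the conditional setting, since conditional on $\cF_{t-1}$ the variable $Y_t$ remains zero-mean and bounded by $B$.

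Next I would chain these bounds via the tower property. Writing $e^{s S_T} = e^{s S_{T-1}} \cdot e^{s Y_T}$, conditioning on $\cF_{T-1}$, and pulling the $\cF_{T-1}$-measurable factor outside yields
\[
\mathbb{E}\bigl[e^{s S_T}\bigr] = \mathbb{E}\bigl[e^{s S_{T-1}} \mathbb{E}[e^{s Y_T} \mid \cF_{T-1}]\bigr] \leq e^{s^2 B^2 / 2} \, \mathbb{E}\bigl[e^{s S_{T-1}}\bigr].
\]
Iterating this bound $T$ times gives $\mathbb{E}[e^{s S_T}] \leq \exp(s^2 T B^2 / 2)$. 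Then Markov's inequality delivers $\mathbb{P}(S_T \geq u) \leq \exp(-s u + s^2 T B^2 / 2)$ for every $s > 0$, and optimizing over $s$ by choosing $s = u / (T B^2)$ produces the subgaussian tail $\mathbb{P}(S_T \geq u) \leq \exp(-u^2 / (2 T B^2))$. Setting $u = B \sqrt{2 T \ln(2/\delta)}$ makes the right-hand side equal to $\delta / 2$.

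Finally, I would obtain the two-sided bound by applying exactly the same reasoning to the sequence $-Y_1, \ldots, -Y_T$, which is itself a martingale difference sequence satisfying $|-Y_t| \leq B$, giving $\mathbb{P}(S_T \leq -u) \leq \delta/2$. A union bound then yields $\mathbb{P}(|S_T| \geq B \sqrt{2 T \ln(2/\delta)}) \leq \delta$, as required. There is no substantive obstacle here; the only point deserving a line of care is verifying that Hoeffding's lemma applies in the conditional form above, which is immediate from the martingale difference property and the almost-sure bound $|Y_t| \leq B$.
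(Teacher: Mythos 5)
Your proof is correct: the conditional Hoeffding bound $\mathbb{E}[e^{sY_t}\mid \cF_{t-1}]\leq e^{s^2B^2/2}$, the tower-property chaining to $\mathbb{E}[e^{sS_T}]\leq e^{s^2TB^2/2}$, the optimized Chernoff bound, and the union bound over the two tails together give exactly the stated inequality with the constant $B\sqrt{2T\ln(2/\delta)}$. The paper states this lemma as a standard concentration inequality without proof, so there is no argument to compare against; your write-up is the textbook derivation and fills that gap correctly.
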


Next, we need the following concentration inequalities due to \citep{cohen2020}.

\begin{lemma}{\rm \citep[Lemma D.3]{cohen2020}}\label{cohen-concentration0}
	Let $\{X_n\}_{n=1}^\infty$ be a sequence of i.i.d. random variables with expectation $\mu$. Suppose that $0\leq X_n\leq B$ holds almost surely for all $n$. Then with probability at least $1-\delta$, the following holds for all $n\geq 1$ simultaneously:
	\begin{align*}
		\left|\sum_{i=1}^n(X_i-\mu)\right|&\leq 2\sqrt{B\mu n\ln \frac{2n}{\delta}} + B\ln\frac{2n}{\delta},\\
	\left|\sum_{i=1}^n(X_i-\mu)\right|&\leq 2\sqrt{B\sum_{i=1}^n X_i \ln \frac{2n}{\delta}} + 7B\ln\frac{2n}{\delta}.
\end{align*}
\end{lemma}

\begin{lemma}{\rm \citep[Lemma D.4]{cohen2020}}\label{cohen-concentration}
Let $\{X_n\}_{n=1}^\infty$ be a sequence of random variables adapted to the filtration $\{\cF_n\}_{n=1}^\infty$. Suppose that $0\leq X_n\leq B$ holds almost surely for all $n$. Then with probability at least $1-\delta$, the following holds for all $n\geq 1$ simultaneously:
$$\sum_{i=1}^n \mathbb{E}\left[ X_i\mid \cF_i\right]\leq 2 \sum_{i=1}^n X_i + 4B\ln\left(2n/\delta\right).$$
\end{lemma}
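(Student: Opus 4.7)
The plan is to reduce this time-uniform concentration to the single-time martingale Bernstein inequality stated as \Cref{bernstein2}. Let $Y_i := \mathbb{E}[X_i \mid \cF_i] - X_i$; interpreting $\cF_i$ as the $\sigma$-algebra capturing the information available just before $X_i$ is revealed (the convention under which the lemma is invoked inside the proof of \Cref{lemma9}), one has $\mathbb{E}[Y_i\mid \cF_i]=0$, so $\{Y_i\}$ is a martingale difference sequence. The assumption $0 \leq X_i \leq B$ gives the deterministic bound $Y_i \leq \mathbb{E}[X_i\mid \cF_i]\leq B$. The same assumption yields the crucial self-bounding estimate for the conditional second moment, $\mathbb{E}[Y_i^2\mid \cF_i] = \var[X_i\mid \cF_i] \leq \mathbb{E}[X_i^2\mid \cF_i] \leq B\cdot \mathbb{E}[X_i\mid \cF_i]$, where the last step uses $X_i^2\leq B\,X_i$.

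Next, for a fixed $n$, I would apply \Cref{bernstein2} to $Y_1,\ldots,Y_n$ with $R=B$ and $\lambda = 1/(2B)\in (0,1/R]$, precisely so that the variance term will be absorbed into half of the target. This yields, with probability at least $1-\delta'$,
$$\sum_{i=1}^n Y_i \leq \lambda\sum_{i=1}^n \mathbb{E}[Y_i^2\mid \cF_i] + \frac{\ln(1/\delta')}{\lambda} \leq \frac{1}{2}\sum_{i=1}^n \mathbb{E}[X_i\mid \cF_i] + 2B\ln(1/\delta').$$
Substituting $Y_i = \mathbb{E}[X_i\mid \cF_i]-X_i$ and moving $\tfrac{1}{2}\sum_{i=1}^n \mathbb{E}[X_i\mid \cF_i]$ to the left-hand side rearranges this into $\sum_{i=1}^n \mathbb{E}[X_i\mid \cF_i] \leq 2\sum_{i=1}^n X_i + 4B\ln(1/\delta')$, which is exactly the target inequality at a single fixed $n$ with confidence $1-\delta'$.

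The remaining step is to upgrade this per-$n$ estimate to a statement holding simultaneously for every $n\geq 1$, which is the part I expect to require the most care. A naive union bound with $\delta_n = \delta/(n(n+1))$ already gives $\sum_{n\geq 1}\delta_n=\delta$ and a log factor of order $\ln(n(n+1)/\delta)=O(\ln(2n/\delta))$, so the stated asymptotics come out for free; however, matching the precise constant $4$ on $B\ln(2n/\delta)$ (rather than an inflated one) typically calls for a peeling argument over dyadic scales $n\in[2^k,2^{k+1})$, in which Freedman's inequality is invoked only once per scale with a confidence budget $\delta_k$ summing to $\delta$, and monotonicity of $\sum_{i=1}^n X_i$ and $\sum_{i=1}^n \mathbb{E}[X_i\mid \cF_i]$ is used to propagate the bound to every $n$ inside a block. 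Apart from this constant-tracking step, the proof is a direct consequence of the Freedman--Bernstein inequality combined with the self-bounding variance estimate.
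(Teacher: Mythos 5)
The paper offers no proof of this lemma; it is quoted verbatim from \citep[Lemma D.4]{cohen2020}, so there is nothing internal to compare against. Your reduction is the standard one, and the fixed-$n$ part of your argument is correct and complete: the reading of $\cF_i$ as the pre-$X_i$ $\sigma$-algebra is the intended one, the self-bounding estimate $\mathbb{E}[Y_i^2\mid\cF_i]\leq B\,\mathbb{E}[X_i\mid\cF_i]$ is exactly the right observation, and applying \Cref{bernstein2} with $R=B$, $\lambda=1/(2B)$ and rearranging gives $\sum_{i=1}^n\mathbb{E}[X_i\mid\cF_i]\leq 2\sum_{i=1}^n X_i+4B\ln(1/\delta')$ at a fixed $n$, as you say.

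The gap is in the uniformity step. Your fallback union bound with $\delta_n=\delta/(n(n+1))$ is sound but yields $4B\ln(n(n+1)/\delta)$, which is roughly $8B\ln(2n/\delta)$ rather than $4B\ln(2n/\delta)$ --- harmless for every use of the lemma in this paper, but not the stated inequality. The proposed repair via dyadic peeling does not work as described: within a block $[2^k,2^{k+1})$, applying the fixed-$n$ inequality at the right endpoint $m$ gives $\sum_{i\leq n}\mathbb{E}[X_i\mid\cF_i]\leq\sum_{i\leq m}\mathbb{E}[X_i\mid\cF_i]\leq 2\sum_{i\leq m}X_i+c$ for $n\leq m$, but $2\sum_{i\leq m}X_i$ does not upper-bound $2\sum_{i\leq n}X_i$ --- the monotonicity runs the wrong way on the right-hand side, and applying the inequality at the left endpoint fails symmetrically because the left-hand side grows across the block. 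The clean way to recover the constant $4$ (in fact something stronger, with $\ln(1/\delta)$ in place of $\ln(2n/\delta)$) is to use the \emph{anytime} form of the Freedman--Bernstein inequality: the process $\exp\bigl(\lambda\sum_{i\leq n}Y_i-\lambda^2(e-2)\sum_{i\leq n}\mathbb{E}[Y_i^2\mid\cF_i]\bigr)$ is a supermartingale, so Ville's maximal inequality gives the deviation bound simultaneously for all $n$ from a single confidence budget $\delta$; the fixed-horizon statement of \Cref{bernstein2} is simply not the right citation for a claim that must hold for all $n\geq 1$ at once. With that substitution your argument goes through; without it, you have proved the lemma only up to a factor of $2$ on the logarithmic term.
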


\end{document}